\documentclass{amsart}
\usepackage[margin=1.0in]{geometry}
\usepackage[foot]{amsaddr}
\usepackage{amsthm}

\usepackage[utf8]{inputenc}
\usepackage{latexsym,array,delarray,epsfig,amssymb}%eufrak
\usepackage{caption}
\usepackage{subcaption}
\usepackage{amscd}
\usepackage{verbatim}
\usepackage{tikz}
\usetikzlibrary{decorations.markings}
\usetikzlibrary{arrows.meta}
\usetikzlibrary{shapes}
\usepackage{url}
\pgfdeclarelayer{nodelayer}
\pgfdeclarelayer{edgelayer}
\pgfsetlayers{main,edgelayer,nodelayer}
\usepackage{blkarray}
\usepackage{hyperref}

 \theoremstyle{plain}
 \newtheorem{theorem}{Theorem}[section]
 \newtheorem{lemma}[theorem]{Lemma}
 \newtheorem{proposition}[theorem]{Proposition}
 \newtheorem{corollary}[theorem]{Corollary}

 \theoremstyle{definition}
 \newtheorem{definition}[theorem]{Definition}
 \newtheorem{example}[theorem]{Example}

 \theoremstyle{remark}
 \newtheorem{remark}[theorem]{Remark}

\newcommand\xqed[1]{%
  \leavevmode\unskip\penalty9999 \hbox{}\nobreak\hfill
  \quad\hbox{#1}}

\newcommand{\proofbox}{\ensuremath{\square}}

\usepackage{enumitem}
\setlist[enumerate]{leftmargin=.5in}
\setlist[itemize]{leftmargin=.5in}

 \usepackage[capitalise]{cleveref}

 \crefname{section}{Section}{Sections}
 \crefname{figure}{Figure}{Figures}
 \crefname{table}{Table}{Tables}
 \crefname{lemma}{Lemma}{Lemmas}
 \crefname{theorem}{Theorem}{Theorems}
 \crefname{corollary}{Corollary}{Corollaries}
 \crefname{observation}{Observation}{Observations}
 \crefname{proposition}{Proposition}{Propositions}
 \crefname{conjecture}{Conjecture}{Conjectures}
 \crefname{example}{Example}{Examples}
 \crefname{definition}{Definition}{Definitions}
 \crefname{remark}{Remark}{Remarks}
 \crefname{algorithm}{Algorithm}{Algorithms}
 \crefname{question}{Question}{Questions}
 \crefname{problem}{Problem}{Problems}
\title[On Full Identifiability of Phylogenetic Networks]{On Tree-Network Distinguishability and Full Identifiability of Phylogenetic Networks}

 \author{Jari Brits$^1$}
 \address{$^1$Delft Institute of Applied Mathematics, Delft University of Technology, Delft, The Netherlands}
 \email{j.brits@student.tudelft.nl}

 \author{Niels Holtgrefe$^1$}
 \email{n.a.l.holtgrefe@tudelft.nl}

 \author{Leo van Iersel$^1$}
 \email{l.j.j.vaniersel@tudelft.nl}

 \author{Samuel Martin$^2$}
 \address{$^2$European Molecular Biology Laboratory, European Bioinformatics Institute (EMBL-EBI), Wellcome Genome Campus, Cambridgeshire, UK}
 \email{samuel.martin@ebi.ac.uk}
 \date{\today}

\begin{document}

\maketitle

\begin{abstract}
Phylogenetic networks generalize phylogenetic trees to evolutionary histories that include reticulate events such as recombination, horizontal gene transfer, and hybridization. Under a Markov model of nucleotide substitution, a phylogenetic network determines a distribution of leaf-patterns. Here, we study the identifiability of the network topology from this distribution under the Jukes-Cantor (JC), Kimura 2-parameter (K2P), and Kimura 3-parameter (K3P) models.
Our first result is that the semi-directed network parameter of a level-1 phylogenetic network (modulo redirecting triangles) is fully identifiable under all three models, on
a biologically reasonable 
parameter space in which substitution rates are probabilistic and mixing parameters are non-trivial (i.e., not 0 or 1). In contrast to the generic identifiability established in prior work, this holds at every point of the parameter space, not merely off of a measure-zero subset.
Our second result distinguishes phylogenetic networks from phylogenetic trees, on the same parameter space, under JC.
We prove that no phylogenetic network and phylogenetic tree can induce the same leaf-pattern distribution unless the network is a tree, possibly augmented with certain substructures called $2$-blobs. This means the presence of reticulate evolution creates, in most cases, a detectable signature in the leaf-pattern distribution. More broadly, these results have consequences for identifiability beyond the models and network classes studied here, including for several coalescent-based models.
\end{abstract}

\section{Introduction}
In evolutionary biology, phylogenetic trees are trees (i.e., connected graphs with no cycles) that describe the evolution of a set of species or other taxa. However, trees are unable to capture reticulate evolution events such as recombination, horizontal gene transfer, and hybridization, which are now known to have occurred across the tree of life. Phylogenetic networks are graphs that generalize phylogenetic trees, and are able to describe these reticulate events.

We place a statistical model of nucleotide evolution on a phylogenetic network in the form of a Markov model. These models allow us to formulate the probabilities of observing patterns of nucleotides at the species represented by the leaves in the phylogenetic network, which we call \emph{leaf-patterns} (also called \emph{site-patterns}), and which we observe in real life through DNA sequencing and subsequent sequence alignment. By viewing the set of all leaf-patterns as the image of a polynomial map from the parameter space to the appropriate probability simplex (called the \emph{parameterization map}), and then allowing all parameters to vary freely over $\mathbb{C}$, we can view the observable part of our model as an algebraic variety. This is the fundamental observation of algebraic statistics. Here, we restrict the parameter space to a semi-algebraic set that corresponds to biologically reasonable parameters, and study the image of this set under the parameterization map.

Understanding which parts of a model are identifiable from data is essential if the model is to be used for statistical inference. We focus on identifying the phylogenetic network topology (i.e., the underlying graph without edge lengths) from the leaf-pattern distribution of the model. Generic identifiability of phylogenetic networks has been well studied for the Jukes-Cantor (JC) model~\cite{jukes1969evolution}, the Kimura 2-parameter (K2P) model~\cite{kimura1980simple}, and the Kimura 3-parameter (K3P) model~\cite{kimura1981estimation} on a class of phylogenetic networks called \emph{level-1} \cite{gross2018distinguishing, gross2021distinguishing,hollering2021identifiability}, and has been extended in some cases to the larger class of arbitrary group-based models through dimension arguments \cite{cox2025group,gross2024dimensions}. Recently, generic identifiability results on level-2 phylogenetic networks have been obtained for the JC model \cite{englander2025identifiability}. Generic identifiability of a class of phylogenetic networks tells us that, within that class, the phylogenetic network topology is identifiable except on a subset of Lebesgue measure zero. In algebraic-geometric terms, this is equivalent to the intersection of the varieties corresponding to any two different phylogenetic networks in the class (under the same substitution model) having strictly smaller dimension than that of either model.

For methods such as \cite{barton2026statistical,cummings2026pfaffian,kong2025inference,martin2025algebraic} that attempt to infer a phylogenetic network from aligned DNA sequence data through analysis of the distribution of leaf-patterns, it is important to know exactly if and where the corresponding varieties intersect. If, for a class of models, the varieties do not intersect at all on some region of interest, we say that the model is \emph{fully identifiable} on this region. We use the adjective \emph{fully} to emphasize the distinction between this and generic identifiability. Here, we extend the full identifiability results of \cite{englander2025identifiability} by giving further full identifiability results for the JC and K2P models, and extending these results to the K3P model. 
%We say that the network parameter of a class of phylogenetic networks is \emph{fully identifiable} on a parameter set $\Theta$, if given any two networks in the class, the intersection of the images of $\Theta$ under the parameterization maps is empty, that is $\psi_{\mathcal{N}_1}(\Theta)\cap\psi_{\mathcal{N}_2}(\Theta)=\emptyset$, where $\psi_{\mathcal{N}_i}$ is the parameterization map of $\mathcal{N}_i$, for $i=1,2$. 

In Section~\ref{sec:level1identifiability} we give a series of full identifiability results for small level-1 phylogenetic networks, which we then use to give a full identifiability result for all level-1 phylogenetic networks on a set of biologically reasonable parameters (defined in Section~\ref{sec:prelim}). We find that on this set, two level-1 phylogenetic network models with $n$ leaves intersect exactly on the models of their maximal shared displayed networks (Corollary~\ref{cor:level1identifiability}).

In Section~\ref{sec:levelktrinet}, we consider tree-network distinguishability. Roughly speaking, this means we aim to distinguish trees from networks, based on leaf-pattern distributions. We show that we can do that, under JC, unless the network is a tree augmented with $2$-blobs, which are reticulated components of the network with only two incident edges (see Figure~\ref{fig:tree_vs_network} for an example and Section~\ref{subsec:networks} for a formal definition). We start by giving
%e go on to give further 
identifiability results for small level-$k$ phylogenetic networks with 3 leaves (called \emph{trinets}) for arbitrary $k\in\mathbb{N}$. 
In particular, we show that under JC, the three-leaf star tree is distinguishable from every trinet with at least one reticulation and no $2$-blob, positively answering Conjecture~2.16 in \cite{englander2025identifiability}. We then generalize these results to networks with more leaves, showing that
%, modulo suppressing certain small substructures called \emph{2-blobs},
reticulate evolution is almost always statistically visible (Corollary~\ref{cor:trees_vs_networks}) 
in the sense that no phylogenetic tree can induce the same leaf-pattern distribution as a phylogenetic network, unless the network is a tree possibly augmented with $2$-blobs.

\begin{figure}[h!]
        \centering
    \begin{tikzpicture}[scale=0.36]
    \tikzstyle{node}=[circle, draw=black, fill=black, scale=0.25]
    \tikzstyle{edge}=[]
    \tikzstyle{dashed_edge}=[dashed]
    \tikzstyle{arc}=[->]
    \tikzstyle{dashed_arc}=[-{Latex[scale=1.2]}, dashed]
    \tikzstyle{none}=[]
    \tikzstyle{medium_label}=[none,scale=1.0]
    \tikzstyle{main_label}=[none,scale=0.8]
    \tikzstyle{internal_node}=[shape=circle,draw=black,scale=0.3]

    \tikzstyle{special_edge3}=[->,thick] %non reticulation edge in the 2-blob
    \tikzstyle{special_edge2}=[->,thick,dashed] %reticulation edge in the 2-blob
    \tikzstyle{special_node}=[shape=circle,draw=black,scale=0.3,thick] %node in the 2-blob

    \tikzstyle{special_node2}=[shape=circle,draw=black,scale=0.3] %node in the 3-blob
    \tikzstyle{special_edge6}=[->,dashed] %reticulation edge in the 3-blob
    \tikzstyle{special_edge7}=[->] %non reticulation edge in the 3-blob
    
    \tikzstyle{leaf_node}=[shape=circle,draw=black,scale=0.3,fill=black] %0.275
    \tikzstyle{ret_arc}=[-{>[scale=.8]}, dashed]
    \tikzstyle{small_label}=[scale=0.85]

	\begin{pgfonlayer}{nodelayer}
		\node [style={medium_label}] (529) at (8.5, 8.75) {$\mathcal{T}$};
		\node [style={leaf_node}] (530) at (2.75, 0.75) {};
		\node [style={leaf_node}] (531) at (4.75, 0.75) {};
		\node [style={leaf_node}] (532) at (0.75, 0.75) {};
		\node [style={leaf_node}] (533) at (6.75, 0.75) {};
		\node [style={leaf_node}] (534) at (10.75, 0.75) {};
		\node [style={internal_node}] (535) at (3.75, 2.75) {};
		\node [style={internal_node}] (536) at (2.75, 4.75) {};
		\node [style={internal_node}] (537) at (5.75, 10.75) {};
		\node [style={internal_node}] (538) at (8.75, 4.75) {};
		\node [style={leaf_node}] (539) at (16.75, 0.75) {};
		\node [style={leaf_node}] (540) at (18.75, 0.75) {};
		\node [style={leaf_node}] (541) at (14.75, 0.75) {};
		\node [style={leaf_node}] (542) at (20.75, 0.75) {};
		\node [style={leaf_node}] (543) at (24.75, 0.75) {};
		\node [style={internal_node}] (544) at (17.75, 2.75) {};
		\node [style={internal_node}] (545) at (16.75, 4.75) {};
		\node [style={internal_node}] (546) at (19.75, 10.75) {};
		\node [style={internal_node}] (547) at (22.75, 4.75) {};
		\node [style={special_node}] (548) at (17.5, 6.25) {};
		\node [style={special_node}] (549) at (17.75, 8.5) {};
		\node [style={special_node}] (550) at (18.75, 7) {};
		\node [style={special_node}] (551) at (19, 9.25) {};
		\node [style={leaf_node}] (552) at (30.75, 0.75) {};
		\node [style={leaf_node}] (553) at (32.75, 0.75) {};
		\node [style={leaf_node}] (554) at (28.75, 0.75) {};
		\node [style={leaf_node}] (555) at (34.75, 0.75) {};
		\node [style={leaf_node}] (556) at (38.75, 0.75) {};
		\node [style={internal_node}] (557) at (31.75, 2.75) {};
		\node [style={internal_node}] (559) at (33.75, 10.75) {};
		\node [style={internal_node}] (560) at (36.75, 4.75) {};
		\node [style={special_node2}] (561) at (32.75, 8) {};
		\node [style={special_node2}] (562) at (31.75, 4.5) {};
		\node [style={special_node2}] (563) at (31.75, 7) {};
		\node [style={special_node2}] (564) at (32.75, 5.75) {};
		\node [style={special_node2}] (565) at (30.75, 5.75) {};
		\node [style={small_label}] (566) at (28.75, 0) {$1$};
		\node [style={small_label}] (567) at (30.75, 0) {$2$};
		\node [style={small_label}] (568) at (32.75, 0) {$3$};
		\node [style={small_label}] (569) at (34.75, 0) {$4$};
		\node [style={small_label}] (570) at (38.75, 0) {$5$};
		\node [style={small_label}] (571) at (14.75, 0) {$1$};
		\node [style={small_label}] (572) at (16.75, 0) {$2$};
		\node [style={small_label}] (573) at (18.75, 0) {$3$};
		\node [style={small_label}] (574) at (20.75, 0) {$4$};
		\node [style={small_label}] (575) at (24.75, 0) {$5$};
		\node [style={small_label}] (576) at (0.75, 0) {$1$};
		\node [style={small_label}] (577) at (2.75, 0) {$2$};
		\node [style={small_label}] (578) at (4.75, 0) {$3$};
		\node [style={small_label}] (579) at (6.75, 0) {$4$};
		\node [style={small_label}] (580) at (10.75, 0) {$5$};
		\node [style={medium_label}] (581) at (22.5, 8.75) {$\mathcal{N}_1$};
		\node [style={medium_label}] (582) at (36.5, 8.75) {$\mathcal{N}_2$};
	\end{pgfonlayer}
	\begin{pgfonlayer}{edgelayer}
		\draw [style=arc] (535) to (530);
		\draw [style=arc] (535) to (531);
		\draw [style=arc] (536) to (535);
		\draw [style=arc] (536) to (532);
		\draw [style=arc] (537) to (538);
		\draw [style=arc] (538) to (533);
		\draw [style=arc] (538) to (534);
		\draw [style=arc] (537) to (536);
		\draw [style=arc] (544) to (539);
		\draw [style=arc] (544) to (540);
		\draw [style=arc] (545) to (544);
		\draw [style=arc] (545) to (541);
		\draw [style=arc] (546) to (547);
		\draw [style=arc] (547) to (542);
		\draw [style=arc] (547) to (543);
		\draw [style=arc] (548) to (545);
		\draw [style=arc] (546) to (551);
		\draw [style={special_edge2}] (550) to (548);
		\draw [style={special_edge2}] (549) to (548);
		\draw [style={special_edge2}] (549) to (550);
		\draw [style={special_edge2}] (551) to (550);
		\draw [style={special_edge3}] (551) to (549);
		\draw [style=arc] (557) to (552);
		\draw [style=arc] (557) to (553);
		\draw [style=arc] (559) to (560);
		\draw [style=arc] (560) to (555);
		\draw [style=arc] (560) to (556);
		\draw [style=arc] (565) to (554);
		\draw [style=arc] (562) to (557);
		\draw [style=arc] (559) to (561);
		\draw [style={special_edge6}] (565) to (562);
		\draw [style={special_edge6}] (564) to (562);
		\draw [style={special_edge6}] (563) to (564);
		\draw [style={special_edge6}] (561) to (564);
		\draw [style={special_edge7}] (561) to (563);
		\draw [style={special_edge7}] (563) to (565);
	\end{pgfonlayer}
\end{tikzpicture}

        \caption{Three rooted phylogenetic networks~$\mathcal{T}, \mathcal{N}_1$, and~$\mathcal{N}_2$ on the set of taxa $\mathcal{X} = \{1,2,3,4,5\}$. $\mathcal{T}$ is also a rooted phylogenetic tree, whereas $\mathcal{N}_1$ and $\mathcal{N}_2$ are not. $\mathcal{N}_1$ can be obtained from $\mathcal{T}$ by augmenting it with a 2-blob (in bold). $\mathcal{N}_2$ cannot be obtained by augmenting a rooted phylogenetic tree with $2$-blobs (since it contains a $3$-blob). Hence, by Corollary~\ref{cor:trees_vs_networks} we can distinguish $\mathcal{T}$ and $\mathcal{N}_1$ from $\mathcal{N}_2$, but not $\mathcal{T}$ from $\mathcal{N}_1$.}
        \label{fig:tree_vs_network}
    \end{figure}

%Other lines of work study identifiability under different evolutionary models. In particular, 
In Section~\ref{sec:discussion}, we discuss some consequences of our results under different evolutionary models. Coalescent-based models describe how gene trees arise within a phylogenetic network through the multispecies coalescent, thereby accounting for incomplete lineage sorting (ILS); see \cite{allman2025beyond,allman2024identifiability} and references therein. The substitution models we consider here do not incorporate ILS. A reticulation is instead treated as a mixture over its parent edges, so that the leaf-pattern distribution of the network is a convex combination of the leaf-pattern distributions of its displayed trees. Nevertheless, some of our combinatorial results
imply similar results for models with ILS based on
\cite{allman2025beyond}.
 
%-------------------------------------------------------------------------------------------------
%                                    Preliminaries
%
%-------------------------------------------------------------------------------------------------
 \section{Preliminaries}\label{sec:prelim}
 
In this section, we lay out the background needed for the main results. We review phylogenetic networks and Markov models of evolution (in particular, group-based models).

\subsection{Phylogenetic Networks}\label{subsec:networks}
Our first definitions concern phylogenetic networks. These are graphs that generalize phylogenetic trees by allowing edges that describe reticulate events such as recombination, hybridization, and gene transfer.

\begin{definition}\label{def:network}
     A  \emph{binary rooted phylogenetic network} $\mathcal{N}$ on a set of taxa $\mathcal{X}$ is a directed acyclic graph with no parallel edges that satisfies the following.
 \begin{itemize}
 \item There is a distinguished vertex called the \emph{root} that has indegree~0 and outdegree~2.
 \item All vertices of outdegree 0 have indegree 1. These vertices are called \emph{leaves} and are in one-to-one correspondence with the elements of $\mathcal{X}$.
 \item All other vertices have either indegree 1 and outdegree 2 (called \emph{tree vertices}), or indegree 2 and outdegree 1 (called \emph{reticulation vertices}).  The incoming edges of a reticulation vertex are called \emph{reticulation edges}.
 \item The root is the only vertex that is on all directed paths from the root to a leaf.
 \end{itemize}
\end{definition}
In applications, the set $\mathcal{X}$ is a set of taxa (e.g., species, sub-species, populations) that label the leaves. Here, we will frequently use the set $[n]=\{1,\ldots,n\}$ to label the leaves of $\mathcal{N}$. We identify each leaf with its label. The internal vertices of $\mathcal{N}$ represent ancestral taxa to the taxa at the leaves. Thus $\mathcal{N}$ represents a possible evolutionary history of the taxa in $\mathcal{X}$, and is therefore an \emph{explicit} phylogenetic network (see e.g. \cite{huson2010phylogenetic}). The motivation for the last condition in Definition~\ref{def:network} is that the evolutionary history before the last stable ancestor (the lowest vertex that is on all root-leaf paths) is generally not expected to be recoverable from leaf-data. Note that a binary rooted phylogenetic tree is a binary phylogenetic network with no reticulation vertices. 

A \emph{split} of $\mathcal{X}$ is a partition of $\mathcal{X}$ into two subsets, typically written $A\,|\,B$, where $A,B\subsetneq\mathcal{X
}$ and $A\sqcup B = \mathcal{X}$. A \emph{cut edge} is an edge whose removal disconnects the graph. A split $A\,|\,B$ is said to be \emph{displayed} on a phylogenetic tree or network if there exists a cut edge $e$ that splits $A$ and $B$; that is, the edge~$e$ lies on every path between any element of $A$ and any element of $B$. A split $A\,|\,B$ is said to be \emph{trivial} if either $A$ or $B$ consists of a single leaf. For a given set of taxa $\mathcal{X}$, all trivial splits are displayed on all phylogenetic trees and networks on $\mathcal{X}$.

The class of all phylogenetic networks is exceedingly large, so analyses are often restricted to certain classes of phylogenetic networks.
%The \emph{level} of a phylogenetic network $\mathcal{N}$ is the maximum number of reticulation vertices in any %2-connected component of $\mathcal{N}$
A phylogenetic network is \emph{level}-$k$ if each
connected subgraph without cut edges has at most~$k$ reticulation vertices, see Figure~\ref{fig:rootedNetworkExample}(a).
In Section \ref{sec:level1identifiability}, we will restrict our analysis to level-1 phylogenetic networks. In Section \ref{sec:levelktrinet}, we will consider level-$k$ phylogenetic networks for arbitrarily large $k\in\mathbb{N}$. A phylogenetic network is \emph{strictly} level-$k$ if the maximum number of reticulation vertices in a connected subgraph without cut edges is equal to~$k$.

For the evolutionary models we will consider here, it is only possible to identify, at most, the \emph{semi-directed} phylogenetic network from the leaf-pattern distribution of the network. This is a mixed graph (i.e. a graph that may contain both undirected and directed edges, but no parallel edges) that is obtained from a phylogenetic network $\mathcal{N}$ by undirecting all edges except the reticulation edges, suppressing the root and exhaustively suppressing any resulting parallel edges and degree two vertices.
We call $\mathcal{N}$ a \emph{rooting} of the semi-directed phylogenetic network. Rootings of a semi-directed phylogenetic network are limited; the root cannot be placed on any edge of a semi-directed network, since the reticulation edges must be directed away from the root. Note that two distinct phylogenetic networks may have the same semi-directed network. Indeed, it is well known for the substitution models we study here that the leaf-pattern distribution of all rootings of a fixed semi-directed network is the same. Therefore, the best we can hope to identify from a leaf-pattern distribution is a semi-directed phylogenetic network. In the case that a phylogenetic network is a phylogenetic tree, then the corresponding semi-directed network is the corresponding unrooted tree. Semi-directed networks with 3 or~4 leaves are called \emph{trinets} and \emph{quarnets}, respectively. The notion of level naturally extends to semi-directed phylogenetic networks. See Figure~\ref{fig:rootedNetworkExample}(b) for an example. A semi-directed level-$0$ phylogenetic network is also called an \emph{unrooted binary phylogenetic tree}.

It is helpful to consider the semi-directed phylogenetic networks that have no non-trivial cut edges, where a cut edge is trivial if it induces a trivial split. These form the building blocks for larger networks, and for level-1 they are 3-star trees and sunlets. Here, an \emph{$n$-sunlet} is a semi-directed phylogenetic network consisting of a single (undirected) cycle of $n$ vertices, and one leaf adjacent to each vertex in the cycle. Exactly one of the cycle vertices is a reticulation vertex, and the corresponding reticulation edges are the incident edges that form part of the cycle.  See $\mathcal{N}_2$ of Figure~\ref{fig:networkExample} for an example of a 4-sunlet. For level-2 (and higher), the picture is more complex (see e.g. \cite{englander2025identifiability}).

\begin{figure}[t]
        \centering
        \begin{subfigure}[b]{0.45\linewidth}
            \centering
            \begin{tikzpicture}[scale = .75]
    
            \node[shape=circle,draw=black,scale=0.3] (A) at (0,5) {};
            \node[shape=circle,draw=black,scale=0.3] (B) at (-1,4) {};
            \node[shape=circle,draw=black,scale=0.3] (C) at (1,4) {};
            \node[shape=circle,draw=black,scale=0.3] (D) at (0,3) {};
            \node[shape=circle,draw=black,scale=0.3] (E) at (0,2.5) {};
            \node[shape=circle,draw=black,scale=0.3] (F) at (2,2.5) {};
            \node[shape=circle,draw=black,scale=0.3] (G) at (1,1.5) {};
            \node[shape=circle,draw=black,scale=0.3,fill=black] (1) at (-3,0.5) {};
            \node[shape=circle,draw=black,scale=0.3,fill=black] (2) at (-1,0.5) {}; 
            \node[shape=circle,draw=black,scale=0.3,fill=black] (3) at (1,0.5) {};
            \node[shape=circle,draw=black,scale=0.3,fill=black] (4) at (3,0.5) {}; 

            \node (t1) at (-3,0) {$1$};
            \node (t2) at (-1,0) {$2$}; 
            \node (t3) at (1,0) {$3$};
            \node (t4) at (3,0) {$4$}; 
    
            \draw [->](B)--(1);
            \draw [->](F)--(4);
            \draw [->](G)--(3);
            \draw [->](E)--(2);
            \draw [dashed,->] (E)--(G);
            \draw [dashed,->] (F)--(G);
            \draw [dashed,->] (B)--(D);
            \draw [dashed,->] (C)--(D);
            \draw [->](D)--(E);
            \draw [->](C)--(F);
            \draw [->](A)--(B);
            \draw [->](A)--(C);
            
            \end{tikzpicture}
            \subcaption[]{}
        \end{subfigure}
        \begin{subfigure}[b]{0.3\linewidth}
            \centering
            \begin{tikzpicture}[scale = .75]
            
            \node[shape=circle,draw=black,scale=0.3,thick] (B) at (0.35,4.25) {};
            \node[shape=circle,draw=black,scale=0.3,thick] (C) at (1.5,3.5) {};
            \node[shape=circle,draw=black,scale=0.3,thick] (D) at (0,3) {};
            \node[shape=circle,draw=black,scale=0.3,thick] (E) at (0,2.5) {};
            \node[shape=circle,draw=black,scale=0.3,thick] (F) at (2,2.5) {};
            \node[shape=circle,draw=black,scale=0.3,thick] (G) at (1,1.5) {};
            \node[shape=circle,draw=black,scale=0.3,fill=black] (1) at (0,5) {};
            \node[shape=circle,draw=black,scale=0.3,fill=black] (2) at (-1.5,1.75) {}; 
            \node[shape=circle,draw=black,scale=0.3,fill=black] (3) at (1,0.5) {};
            \node[shape=circle,draw=black,scale=0.3,fill=black] (4) at (3,2.5) {}; 

            \node (t1) at (0,5.5) {$1$};
            \node (t2) at (-2,1.75) {$2$}; 
            \node (t3) at (1,0) {$3$};
            \node (t4) at (3.5,2.5) {$4$}; 
    
            \draw (B)--(1);
            \draw (F)--(4);
            \draw (G)--(3);
            \draw (E)--(2);
            \draw [thick, dashed,-{>[scale=.8]}] (E)--(G);
            \draw [thick, dashed,-{>[scale=.8]}] (F)--(G);
            \draw [thick, dashed,-{>[scale=.8]}] (B)--(D);
            \draw [thick, dashed,-{>[scale=.8]}] (C)--(D);
            \draw [thick] (D)--(E);
            \draw [thick] (C)--(F);
            \draw [thick] (B)--(C);
            
            \end{tikzpicture}
            \subcaption[]{}
        \end{subfigure} 
        \caption{\textbf{(a)} A 4-leaf, level-2 binary rooted phylogenetic network on the set of taxa $\mathcal{X}=\{1,2,3,4\}$. Reticulation edges are drawn as dashed lines. \textbf{(b)} The corresponding semi-directed phylogenetic network. In this case, this consists of a single 4-blob (containing all internal vertices, highlighted in bold) and leaves adjacent to the blob.}
        \label{fig:rootedNetworkExample}
    \end{figure}

A \emph{blob} of a (semi-directed) phylogenetic network~$\mathcal{N}$ is a maximal connected subgraph of $\mathcal{N}$ without any cut edges, see Figures~\ref{fig:tree_vs_network} and~\ref{fig:rootedNetworkExample}. A blob is called an $m$-blob for $m\in\mathbb{N}$ when there are exactly $m$ edges incident to the blob but not in the blob. A blob is \emph{non-trivial} if it contains at least $2$ vertices. For level-1 (semi-directed) phylogenetic networks, the only non-trivial blobs are subgraphs consisting of a single cycle. For example, an $n$-sunlet consists of a single $n$-blob with one leaf adjacent to each vertex in the blob.
A \emph{triangle} is an undirected cycle of three vertices and three edges.
The \emph{tree-of-blobs} of a semi-directed phylogenetic network is the
%(multifurcating) phylogenetic 
tree obtained by contracting each blob to a single vertex and then suppressing degree-2 vertices. Note that the tree-of-blobs is an unrooted (not necessarily binary) phylogenetic tree.
%(also called an unrooted  multifurcating phylogenetic tree).

An \emph{up-down} path between two vertices $i$ and $j$ in a semi-directed phylogenetic network is a path in which, for some $\ell\in\mathbb{N}$, the first $\ell$ edges are directed towards $i$ or undirected, and the remaining edges are directed towards $j$ or undirected. Similarly, a \emph{semi-directed path} from~$i$ to~$j$ in a semi-directed phylogenetic network is a path in which all edges are directed towards~$j$ or undirected. Throughout this work, one of the main tools we will use is \emph{restriction}. This enables us to answer questions about a phylogenetic network by looking at smaller networks.
 
\begin{definition}
Let $\mathcal{N}$ be a semi-directed phylogenetic network on a leaf set $\mathcal{X}$, and let $\mathcal{Y}\subset\mathcal{X}$. The \emph{restricted network} $\mathcal{N}|_{\mathcal{Y}}$ of $\mathcal{N}$ (also called \emph{the subnetwork of $\mathcal{N}$ induced by~$\mathcal{Y}$}) is the semi-directed phylogenetic network on the leaf set $\mathcal{Y}$ obtained from $\mathcal{N}$ by taking all up-down paths on $\mathcal{N}$ between all pairs of leaves in $\mathcal{Y}$ and exhaustively suppressing any degree two vertices and identifying parallel edges.
\end{definition}

It is clear that the level of $\mathcal{N}|_{\mathcal{Y}}$ is less than or equal to the level of $\mathcal{N}$. Thus, in some cases the restricted network $\mathcal{N}|_{\mathcal{Y}}$ may be a phylogenetic tree. Note that here, unlike \cite[Definition~2.4]{englander2025identifiability}, we do not necessarily suppress 2-blobs in a restricted network. In the case that $\mathcal{N}$ is a level-1 semi-directed phylogenetic network, the only 2-blobs that can appear during the restriction process are parallel edges, and for the models we will consider (which are closed under convex combinations), suppressing parallel edges does not affect the leaf-pattern distribution of the network, so these 2-blobs are suppressed. For higher levels however, 2-blobs can become more complex and need to be considered carefully. We investigate higher-level 2-blobs in Section~\ref{sec:levelktrinet} (see also \cite{allman2023treeofblobs, englander2025identifiability,holtgrefe2025distinguishing, sullivant2025phylogenetic}).

\begin{definition}\label{def:displayedNetwork}
    Let $\mathcal{N}$ and $\mathcal{N}'$ be two semi-directed phylogenetic networks on a leaf set  $\mathcal{X}$. We say that $\mathcal{N}'$ is \emph{displayed} by $\mathcal{N}$ if it can be obtained from $\mathcal{N}$ by removing the reticulation edges~$e_1,\ldots, e_k$, where no two $e_i$ and $e_j$ are directed into the same reticulation vertex, then taking the union of all up-down paths between leaves in $\mathcal{X}$, and then exhaustively suppressing degree-2 vertices and identifying parallel edges. We call $\mathcal{N}'$ the \emph{displayed network of $\mathcal{N}$ induced by~$e_1,\ldots, e_k$}, and denote it by $\mathcal{N} - \{e_1,\ldots, e_k\}$. In the case $k=1$, we denote it by  $\mathcal{N} - e_1$. We allow the set of reticulation edges to be empty, so that $\mathcal{N}$ displays itself.
\end{definition}

Observe that the relation between a network and a displayed network is transitive, that is, if $\mathcal{N}'$ is displayed by $\mathcal{N}$, and $\mathcal{N}''$ is displayed by $\mathcal{N}'$, then $\mathcal{N}''$ is displayed by $\mathcal{N}$. Furthermore, the number of reticulation vertices of a displayed network $\mathcal{N}'$ is strictly less than the number of reticulation vertices of $\mathcal{N}$, except for the case that $\mathcal{N}=\mathcal{N}'$. Thus the relation is also antisymmetric, and by definition it is reflexive. We therefore have a partial order on the set of displayed networks of~$\mathcal{N}$.

Let $\mathcal{N}$ be a semi-directed phylogenetic network with $r$ reticulation vertices, and let $e_1,\ldots,e_r$ be a set of reticulation edges, one from each reticulation vertex. Then the displayed network $\mathcal{N}-\{e_1,\ldots, e_r\}$ is an unrooted phylogenetic tree, called a \emph{displayed tree} of $\mathcal{N}$. The displayed trees are minimal with respect to the partial order on the displayed networks of $\mathcal{N}$.

\begin{figure}[h!]
        \centering
        \begin{subfigure}[b]{0.3\linewidth}
            \centering
            \begin{tikzpicture}[scale = .75]
    
            \node[shape=circle,draw=black,scale=0.3] (A) at (0,4.75) {};
            \node[shape=circle,draw=black,scale=0.3] (B) at (1,3) {};
            \node[shape=circle,draw=black,scale=0.3] (C) at (-1,3) {};
            \node[shape=circle,draw=black,scale=0.3] (D) at (0,2) {};
            \node[shape=circle,draw=black,scale=0.3] (E) at (-1,4) {};
            \node[shape=circle,draw=black,scale=0.3] (F) at (1,4) {};
            \node[shape=circle,draw=black,scale=0.3,fill=black] (1) at (0,5.75) {};
            \node[shape=circle,draw=black,scale=0.3,fill=black] (2) at (2,4.25) {}; 
            \node[shape=circle,draw=black,scale=0.3,fill=black] (3) at (-2,2.75) {};
            \node[shape=circle,draw=black,scale=0.3,fill=black] (4) at (0,1) {}; 

            \node (t1) at (0,6.1) {$3$};
            \node (t2) at (2.25,4.25) {$2$}; 
            \node (t3) at (-2.4,2.7) {$4$};
            \node (t4) at (0,0.6) {$1$}; 
    
            \draw (1)--(A);
            \draw [dashed,->] (E)--(A);
            \draw [dashed,->] (F)--(A);
            \draw (F)--(B);
            \draw (E)--(B);
            \draw (E)--(C);
            \draw [dashed,->] (B)--(D);
            \draw [dashed,->] (C)--(D);
            \draw (F)--(2);
            \draw (C)--(3);
            \draw (D)--(4);

            \node (t5) at (0,-0.5) {$\mathcal{N}_1$};
            
            \end{tikzpicture}
        \end{subfigure}
        \begin{subfigure}[b]{0.3\linewidth}
            \centering
            \begin{tikzpicture}[scale = .75]
            \node[shape=circle,draw=black,scale=0.3] (A) at (0,4.5) {};
            \node[shape=circle,draw=black,scale=0.3] (B) at (0,2.25) {};
            \node[shape=circle,draw=black,scale=0.3] (E) at (-1,3.37) {};
            \node[shape=circle,draw=black,scale=0.3] (F) at (1,3.37) {};
            \node[shape=circle,draw=black,scale=0.3,fill=black] (1) at (0,5.75) {};
            \node[shape=circle,draw=black,scale=0.3,fill=black] (2) at (2,3.37) {}; 
            \node[shape=circle,draw=black,scale=0.3,fill=black] (3) at (-2,3.37) {};
            \node[shape=circle,draw=black,scale=0.3,fill=black] (4) at (0,1) {}; 

            \node (t1) at (0,6.1) {$3$};
            \node (t2) at (2.4,3.37) {$2$}; 
            \node (t3) at (-2.4,3.37) {$4$};
            \node (t4) at (0,0.6) {$1$}; 
    
            \draw (1)--(A);
            \draw [dashed,->] (E)--(A);
            \draw [dashed,->] (F)--(A);
            \draw (F)--(B);
            \draw (E)--(B);
            \draw (E)--(3);
            \draw (F)--(2);
            \draw (B)--(4);
            
            \node (t5) at (0,-0.5) {$\mathcal{N}_2$};
            
            \end{tikzpicture}
        \end{subfigure}
        \begin{subfigure}[b]{0.3\linewidth}
            \centering
            \begin{tikzpicture}[scale = .75]
            \node[shape=circle,draw=black,scale=0.3] (B) at (0,2) {};
            \node[shape=circle,draw=black,scale=0.3] (E) at (0,4) {};
            \node[shape=circle,draw=black,scale=0.3,fill=black] (1) at (1,5.) {};
            \node[shape=circle,draw=black,scale=0.3,fill=black] (2) at (1,1) {}; 
            \node[shape=circle,draw=black,scale=0.3,fill=black] (3) at (-1,5) {};
            \node[shape=circle,draw=black,scale=0.3,fill=black] (4) at (-1,1) {}; 

            \node (t1) at (1.2,5.4) {$3$};
            \node (t2) at (1.2,0.6) {$2$}; 
            \node (t3) at (-1.2,5.4) {$4$};
            \node (t4) at (-1.2,0.6) {$1$}; 
    
            \draw (1)--(E);
            \draw (E)--(B);
            \draw (B)--(2);
            \draw (E)--(3);
            \draw (B)--(4);   

            \node (t5) at (0,-0.5) {$\mathcal{N}_3$}; 
            
            \end{tikzpicture}
        \end{subfigure}   
        \caption{Three semi-directed phylogenetic networks on the set of taxa $\mathcal{X}=\{1,2,3,4\}$. Reticulation edges are drawn with dashed lines. The network $\mathcal{N}_1$ is a level-2 network, $\mathcal{N}_2$ is level-1, and $\mathcal{N}_3$ is level-0 (i.e., a tree). Both $\mathcal{N}_2$ and $\mathcal{N}_3$ are displayed networks of $\mathcal{N}_1$, whilst $\mathcal{N}_3$ is also a displayed network of $\mathcal{N}_2$. In terms of the partial order, we have $\mathcal{N}_1 > \mathcal{N}_2 > \mathcal{N}_3$. Since it is minimal, the network $\mathcal{N}_3$ is a displayed tree of $\mathcal{N}_1$ and $\mathcal{N}_2$.}
        \label{fig:networkExample}
    \end{figure}
    
\subsection{Markov Models of Evolution}\label{sec:markov}

To study evolution at the molecular level, we can place a type of Markov model, called a substitution model, on a phylogenetic tree or network. In this work we consider three specific DNA substitution models, called the Jukes-Cantor (JC) model~\cite{jukes1969evolution}, the Kimura 2-parameter (K2P) model~\cite{kimura1980simple}, and the Kimura 3-parameter (K3P) model~\cite{kimura1981estimation}. These models are examples of group-based models, which are particularly amenable to study from the algebraic perspective. We will model evolution on a phylogenetic network using a displayed tree model. This is a mixture model of the corresponding substitution model on the displayed trees of the network.

To place a substitution model on a rooted phylogenetic tree $\mathcal{T}$ with $n$ leaves, we model the evolution of a nucleotide at a single genomic site on $\mathcal{T}$. Thus, our state space is the set of nucleotides $\Omega=\{\rm{A,C,G,T}\}$. First, we label the leaves with the integers $1,\ldots, n$. We associate a random variable $X_v \in\Omega$ to each vertex $v\in V(\mathcal{T})$, and a $4 \times 4$ transition matrix~$M^e$ to each directed edge $e=(u, v) \in E(\mathcal{T})$ such that $M_{i,j}^e = P(X_v = j | X_u = i)$. Lastly, we associate a distribution $\pi$ of states to the root $\rho$ of $\mathcal{T}$.  This enables us to write down the probability of an assignment of states to the vertices of $\mathcal{T}$. If $X = \{x_v \in \Omega\ |\ v \in V(\mathcal{T})\}$ is such an assignment (sometimes called a \emph{character}), then
\[
P(X_v = x_v\ \forall\  v \in V(\mathcal{T})) = \pi_{x_\rho}\prod_{(u,v) \in E(\mathcal{T})}M_{x_u, x_v}^{(u,v)}.
\]
In applications, we typically only observe the random variables at the leaves of $\mathcal{T}$, which usually represent extant species. We can determine the probability of observing a joint state $(x_1, \ldots, x_n) \in \Omega^n$ at the leaves of $\mathcal{T}$, where $x_i$ is the state observed at leaf $i$, by marginalizing over the internal nodes of $\mathcal{T}$. We write down an expression for this in the following manner. Let $X(x_1,\ldots, x_n) \subset \Omega^{|V(\mathcal{T})|}$ be the set of all possible assignments of states at all vertices in $\mathcal{T}$ that have states $x_1,\dots, x_n$ assigned to leaves $1,\ldots,n$ respectively. For $x\in X(x_1,\ldots, x_n)$ and $u\in V(\mathcal{T})$, let $x_u$ denote the state assigned to vertex $u$. Then the probability of observing the joint state $(x_1, \ldots, x_n) \in \Omega^n$ is
\[
p_{x_1 x_2 \ldots x_n} := P(X_1 = x_1, \ldots, X_n = x_n) ~= 
\sum_{x \in X(x_1,\ldots, x_n)}\pi_{x_\rho}\prod_{(u,v) \in E(\mathcal{T})}M_{x_u, x_v}^{(u,v)}, 
\]
where $X_1,\ldots,X_n$ are the random variables associated to the leaves of $\mathcal{T}$. We call an assignment of states to the leaves of $\mathcal{T}$ a \emph{leaf-pattern}.

Let $\theta$ be a set of numerical parameters associated to our model (i.e., the entries of the transition matrices $M^e$ and the distribution of states $\pi_\rho$ at the root), and let $\Theta$ be the space of all such parameter sets. The \emph{parameterization map} $\varphi_{\mathcal{T}}$ associated to our model is the map
\begin{align*}
    \varphi_{\mathcal{T}}: \Theta &\longrightarrow \Delta^{4^n-1} \\
                        \theta &\longmapsto (p_{x_1\ldots x_n}\ |\ x_i \in \Omega,\ i=1,\ldots, n), 
\end{align*}
which takes the parameter space $\Theta$ of our model to the $(4^n-1)$-dimensional
% $4^n$-dimensional
probability simplex~$\Delta^{4^n-1}$. In algebraic phylogenetics it is typical to drop the restrictions on parameters and think of them as independent variables over $\mathbb{C}$. Then, the map $\varphi_{\mathcal{T}}$ is viewed as a rational map from $\mathbb{C}^m$ to 
% $\mathbb{C}^{4^n-1}$
$\mathbb{C}^{4^n}$, where $m$ is the number of numerical parameters in the model. The Zariski closure of the image of this map is called the variety associated to $\mathcal{T}$, which we will denote $\mathcal{V}_{\mathcal{T}}$, and this object is studied from an algebraic perspective (see e.g. \cite{allman2007phylogenetic, allman2008phylogenetic,draisma2009ideals,eriksson2005phylogenetic,raicu2012secant}).

Different substitution models place restrictions on the form of the transition matrices. For example, for the Jukes-Cantor model, transition matrices have the form
\begin{equation}\label{eqn:JC}
\begin{pmatrix}
    1-3\alpha & \alpha & \alpha & \alpha \\
    \alpha & 1-3\alpha & \alpha & \alpha \\
    \alpha & \alpha & 1-3\alpha & \alpha \\
    \alpha & \alpha & \alpha & 1-3\alpha
\end{pmatrix}.
\end{equation}

%A key property of substitution models we will require is \emph{multiplicative closure}, that is, if $M_1$ and $M_2$ are transition matrices drawn from a substitution model, then so is their product $M_1 M_2$. This enables us to suppress vertices of degree 2 on a phylogenetic tree without worrying that the transition matrices for the resulting tree are no longer in the model. A sufficient condition for this is that the model is a Lie Markov model \cite{sumner2012lie}.

To place a substitution model on a phylogenetic network $\mathcal{N}$ with leaf set $\mathcal{X}$, we will use a displayed tree model.
This is a mixture model of the models for the trees displayed by the network. First, label the reticulation vertices  of $\mathcal{N}$ as $1,\ldots, r$. Assign a random variable $X_v \in\Omega$ to each vertex $v\in V(\mathcal{N})$, a transition matrix $M^e$ to each edge $e\in E(\mathcal{N})$, a distribution of states at the root $\pi_\rho$, and to each reticulation vertex $i$ a mixing parameter $\delta_i \in [0,1]$. We encode a choice of reticulation edge for each reticulation vertex with a vector $\sigma \in \{0,1\}^r$, and denote the resulting displayed tree by $\mathcal{T}_\sigma$. Then the parameterization map associated to~$\mathcal{N}$ is
\begin{equation}\label{eqn:networkParam}
    \varphi_{\mathcal{N}} = \sum_{\sigma \in \{0,1\}^r} \big(\prod_{i = 1}^r \delta_i^{\sigma_i}(1-\delta_i)^{1-\sigma_i}\big)\varphi_{\mathcal{T}_\sigma}.
\end{equation}
Here, the parameters assigned to the model of the displayed tree $\mathcal{T}_{\sigma}$ are inherited from $\mathcal{N}$. In this work we only consider group-based substitution models (see next section), and these are multiplicatively closed, splittable, and closed under convex combinations \cite{sullivant2025phylogenetic}. Therefore, when suppressing vertices of degree 2 and identifying parallel edges in the process of obtaining displayed trees, there are transition matrices within the model that we can assign to the new edges that give the same leaf-pattern distribution as if we had not performed these operations \cite[Proposition~4.6, Proposition~5.4]{sullivant2025phylogenetic}. In the following sections, we specifically do not allow mixing parameters $\delta_i$ to be $0$ or $1$ (see next section).

As in the case for trees, we can view $\varphi_{\mathcal{N}}$ as a rational map between two affine spaces over~$\mathbb{C}$ and consider the Zariski closure of the image of this map. This is the variety associated to~$\mathcal{N}$, and we denote it $\mathcal{V}_{\mathcal{N}}$.

\subsection{Group-based Models}
A group-based substitution model is a time-reversible substitution model for which the state space is associated with an abelian group. The JC, K2P, and K3P models are all group-based models, where the state space $\{{\rm A,C,G,T}\}$ is associated with the Klein-4 group $G = \mathbb{Z}/2\mathbb{Z}\times\mathbb{Z}/2\mathbb{Z}$. This association enables the use of a discrete Fourier transformation that greatly simplifies the parameterization \cite{evans1993invariants, hendy1996complete}, and thereby makes them much more amenable to study from an algebraic perspective. Often, probabilistic requirements, such as the rows of transition matrices summing to $1$ and transition matrix entries being real numbers in $[0,1]$, are relaxed when viewed from the algebraic perspective.

The discrete Fourier transformation is applied to the parameterization of the model. The transition matrices of the model are diagonalized, and so the new parameters are the eigenvalues of the transition matrices. For an edge $e$ in a phylogenetic tree or network, we will denote these new parameters by $a^e_{\rm A}, a^e_{\rm C}, a^e_{\rm G},$ and~$a^e_{\rm T}$.

The transformed probability $p_{x_1,\ldots, x_n}$ is typically denoted by $q_{x_1,\ldots,x_n}$. For a phylogenetic tree $\mathcal{T}$, in the new coordinates it is given by the monomial expression
\begin{equation}\label{eqn:groupparam}
q_{x_1,\ldots,x_n} = \begin{cases}
    \displaystyle\prod_{e\in E(\mathcal{T})} a^e_{x_e} & x_1 + \cdots + x_n = 0\\
    \quad 0 & \text{otherwise,}
\end{cases}
\end{equation}
where each edge $e$ is labelled by an element $x_e \in G$, defined to be the sum (in $G$) of all leaf states $x_i$ for those leaves $i$ in the subtree below $e$. The qualifier \lq below\rq\ is determined by the direction of the edge $e$, but for the group $\mathbb{Z}/2\mathbb{Z}\times\mathbb{Z}/2\mathbb{Z}$ this direction has no effect on the parameterization, because the sum of states assigned to all leaves is $0$, and each element of this group is self-inverse. In the context of group-based models, leaf-patterns $(x_1,\ldots, x_n)$ for which $x_1+\cdots +x_n = 0$ (i.e., those where $q_{x_1,\ldots,x_n}\neq 0$) are often called \emph{consistent leaf labellings}. For further details the reader may consult \cite[Section~15.3]{sullivant2018algebraic}. As in equation (\ref{eqn:networkParam}), the phylogenetic network parameterization map in Fourier coordinates, denoted $\psi_\mathcal{N}$, is given as a mixture of the Fourier parameterizations of the displayed trees, as follows.
\begin{equation}\label{eqn:fourierNetworkParam}
    \psi_{\mathcal{N}} = \sum_{\sigma \in \{0,1\}^r} \big(\prod_{i = 1}^r \delta_i^{\sigma_i}(1-\delta_i)^{1-\sigma_i}\big)\psi_{\mathcal{T}_\sigma}.
\end{equation}

\begin{figure}[h!]

    \centering
    \begin{tikzpicture}[scale = .5]
        
        \node[shape=circle,draw=black,scale=0.3] (B) at (0,2) {};
        \node[shape=circle,draw=black,scale=0.3] (C) at (1.8,5) {};
        \node[shape=circle,draw=black,scale=0.3] (D) at (-1.8,5) {};
        \node[shape=circle,draw=black,scale=0.3,fill=black] (2) at (3,6.5) {};
        \node[shape=circle,draw=black,scale=0.3,fill=black] (1) at (-3,6.5) {}; 
        \node[shape=circle,draw=black,scale=0.3,fill=black] (3) at (0,0) {};
        \node at (3.2,7.2) {$2$};
        \node at (-3.2,7.2) {$1$};
        \node at (0,-1) {$3$};

        \draw [dashed,->](C)--(B) node[midway,right=0.1] {$e$};
        \draw [dashed,->](D)--(B) node[midway,left=0.1] {$d$};
        \draw [->](D)--(C) node[midway,above=0.05] {$f$};
        \draw [->](C)--(2) node[midway,below right] {$b$};
        \draw [->](D)--(1) node[midway,below left] {$a$};
        \draw [->](B)--(3) node[midway,left=0.05] {$c$};

    \end{tikzpicture}
    \caption{A leaf and edge-labelled 3-sunlet network. }
    \label{fig:3sunlet}
\end{figure}

\begin{example}\label{ex:3sunlet}
    Consider the semi-directed phylogenetic network $\mathcal{N}$ in Figure~\ref{fig:3sunlet}. This has a single reticulation vertex and therefore has two displayed trees, each a 3-star tree. Let $\mathcal{T}_1$ be the 3-star tree got by deleting the edge~$e$, and let $\mathcal{T}_2$ be the 3-star tree got by deleting the edge~$d$. For ease of notation, we will not suppress vertices of degree 2 here. In $\mathcal{T}_1$, the Fourier coordinate $q_{x_1x_2x_3}$ for any $x_1, x_2, x_3\in G$ with $x_1+x_2+x_3 = 0$, is given by
    $$ q_{x_1x_2x_3}=a_{x_1}b_{x_2}c_{x_3}d_{x_3}f_{x_2},$$
    where we are reusing edge labels as parameter labels. For $\mathcal{T}_2$ we have
    $$ q_{x_1x_2x_3}=a_{x_1}b_{x_2}c_{x_3}e_{x_3}f_{x_2+x_3}.$$
    The Fourier parameterization of $\mathcal{N}$ is therefore given by
    \begin{align*} q_{x_1x_2x_3} &= \delta a_{x_1}b_{x_2}c_{x_3}d_{x_3}f_{x_2} + (1-\delta)a_{x_1}b_{x_2}c_{x_3}e_{x_3}f_{x_2+x_3} \\ 
                &=a_{x_1}b_{x_2}c_{x_3}(\delta d_{x_3}f_{x_2} + (1-\delta)e_{x_3}f_{x_2+x_3}).
    \end{align*}
    \xqed{$\triangle$}
\end{example}

We have described phylogenetic network models as mixture models of the corresponding displayed tree models. Equivalently, one can also think of them recursively as mixture models of the displayed networks, as the following lemma demonstrates.

\begin{lemma}\label{lem:displayedNetworkParam}
    Let $\mathcal{N}$ be a semi-directed phylogenetic network. For any reticulation vertex~$v$ in~$\mathcal{N}$ with corresponding reticulation edges $e_1$ and $e_2$, and mixing parameter $\delta$,  let $\mathcal{N}_1$ and~$\mathcal{N}_2$ be the displayed networks of $\mathcal{N}$, got by removing reticulation edges $e_2$ and $e_1$, respectively. Then 
    $$\psi_{\mathcal{N}} = \delta\psi_{\mathcal{N}_1} + (1-\delta)\psi_{\mathcal{N}_2}.$$
\end{lemma}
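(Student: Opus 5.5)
The plan is to expand $\psi_{\mathcal{N}}$ via equation~\eqref{eqn:fourierNetworkParam} and split the sum over $\sigma\in\{0,1\}^r$ according to the coordinate belonging to $v$. Label the reticulation vertices so that $v$ is vertex $r$, with $\sigma_r=1$ meaning $e_1$ is kept (so $e_2$ is deleted) and $\delta_r=\delta$. Then
\[
\psi_{\mathcal{N}} = \delta\sum_{\sigma'\in\{0,1\}^{r-1}} w(\sigma')\,\psi_{\mathcal{T}_{(\sigma',1)}} + (1-\delta)\sum_{\sigma'\in\{0,1\}^{r-1}} w(\sigma')\,\psi_{\mathcal{T}_{(\sigma',0)}},
\]
where $w(\sigma')=\prod_{i<r}\delta_i^{\sigma_i}(1-\delta_i)^{1-\sigma_i}$. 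It then suffices to show that the first inner sum equals $\psi_{\mathcal{N}_1}$, and symmetrically the second equals $\psi_{\mathcal{N}_2}$.

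For this, observe that $\mathcal{N}_1=\mathcal{N}-e_2$ has reticulation vertices in bijection with those of $\mathcal{N}$ other than $v$, since removing $e_2$ turns $v$ into a degree-2 vertex that is suppressed. I would check that this bijection preserves reticulation edges, possibly after suppression merges an edge with a neighbour. Then for each $\sigma'$, the displayed tree of $\mathcal{N}_1$ selected by $\sigma'$ is the graph obtained from $\mathcal{N}$ by deleting $e_2$ together with the edges specified by $\sigma'$, followed by taking up-down paths and suppressing vertices. That graph is exactly $\mathcal{T}_{(\sigma',1)}$ up to degree-2 suppression and parallel-edge identification. This is the transitivity of the displayed relation noted after Definition~\ref{def:displayedNetwork}.

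Parameters of $\mathcal{N}_1$ are inherited from $\mathcal{N}$. Where suppression merges edges, the new transition matrix is the product, and where it identifies parallel edges, the new matrix is the corresponding convex combination. By multiplicative closure, splittability and convex closure of group-based models \cite[Proposition~4.6, Proposition~5.4]{sullivant2025phylogenetic}, these choices leave each tree distribution $\psi_{\mathcal{T}}$ unchanged. In Fourier coordinates, the merged-edge parameter is just the product $a^{e}_{g}a^{e'}_{g}$.

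I expect the main obstacle to be bookkeeping when deleting $e_2$ causes cascading changes. The removal may create parallel edges or 2-blobs, which can in turn make another reticulation disappear; parallel reticulation edges then get identified, and their mixing parameters must be absorbed. I would handle this by choosing the parameters of $\mathcal{N}_1$ to be exactly the induced ones, including the convex combinations for identified parallel edges. With that choice, the identity holds before any suppression, by the regrouping above, and suppression preserves the distribution by the cited closure properties. So I would argue the identity at the level of unsuppressed graphs and only suppress at the end.
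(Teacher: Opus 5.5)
Your proposal is correct and follows the paper's proof. Both split the mixture over displayed trees according to which reticulation edge at $v$ is kept, factor out $\delta$ or $1-\delta$, and recognize the two remaining weighted sums as $\psi_{\mathcal{N}_1}$ and $\psi_{\mathcal{N}_2}$. Your first claim, that the reticulations of $\mathcal{N}_1$ are in bijection with those of $\mathcal{N}$ other than $v$, is not true in general, as you note yourself; your fallback of regrouping before any suppression and absorbing vanished mixing parameters into convex-combined edge parameters handles this, and is more careful than the paper, which tacitly assumes the same bijection.
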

\begin{proof}
    First observe that the set of displayed trees of $\mathcal{N}$ is equal to the union of the sets of displayed trees of $\mathcal{N}_1$ and $\mathcal{N}_2$, which we denote $T_1$ and $T_2$ respectively. For each displayed tree~$\mathcal{T}$, let $\delta_{\mathcal{T}}$ be the product of mixing parameters for that tree (that is, if $\mathcal{T} = \mathcal{T}_{\sigma}$ as in equation (\ref{eqn:fourierNetworkParam}), then we have $\delta_{\mathcal{T}} = \prod_{i=1}^r\delta_i^{\sigma_i}(1-\delta_i)^{1-\sigma_i}$). Since $\delta$ is the mixing parameter for $v$, for trees in $T_1$ we can write $\delta_{\mathcal{T}}=\delta\delta'_{\mathcal{T}}$, and similarly for trees in $T_2$ we can write $\delta_{\mathcal{T}}=(1-\delta)\delta'_{\mathcal{T}}$. Here, $\delta'$ is the product of mixing parameters for all reticulation vertices except $v$, i.e., those in $\mathcal{N}_1$ and $\mathcal{N}_2$. Then we have
    \begin{align*}
        \psi_{\mathcal{N}} &= \sum_{\mathcal{T}\in T_1\cup T_2} \delta_{\mathcal{T}} \psi_{\mathcal{T}} \\
            &= \delta \sum_{\mathcal{T}\in T_1} \delta'_{\mathcal{T}} \psi_{\mathcal{T}} + (1-\delta)\sum_{\mathcal{T}\in T_2} \delta'_{\mathcal{T}} \psi_{\mathcal{T}}\\
            &=\delta\psi_{\mathcal{N}_1} + (1-\delta)\psi_{\mathcal{N}_2}.
    \end{align*}
\end{proof}

As in \cite{englander2025identifiability}, we will only consider parameters from a restricted parameter set. First, we will require that our transition matrices are indeed transition matrices, that is, they have real entries in $[0,1]$ and rows that sum to $1$. Under this requirement, each transition matrix has $1$ as an eigenvalue, so for each edge $e$ we set $a^e_{\rm A} = 1$. We further do not allow transition matrices to be the identity matrix (which would correspond to a branch of length zero in the corresponding tree or network), and require that transition matrices are positive definite. (These restrictions are those that we would have on transition matrices from a continuous-time Markov model with symmetric rate matrices.) Then, the remaining eigenvalues $a^e_{\rm C}, a^e_{\rm G},$ and $a^e_{\rm T}$ all lie in the open interval $(0,1)$. Finally, we require that all mixing parameters $\delta_i$ lie in the open interval $(0,1)$. We denote by $\Theta_0(\mathcal{N})$ this set of restricted parameters for the phylogenetic network $\mathcal{N}$. We will simply use $\Theta_0$ when $\mathcal{N}$ is clear from context. 

The requirement that transition matrices are positive definite places reasonable restrictions on the probabilities of mutation events along edges in the network. For example, in the JC model, this requirement is equivalent to the restriction that $0 \leq \alpha < \frac{1}{4}$, where $\alpha$ is as in~(\ref{eqn:JC}), although here we further require that $\alpha\neq 0$. This means that the probability of a mutation event occurring along a particular edge is between $0$ and $\frac{3}{4}$, which certainly contains the biologically plausible range.
Under the JC model, for any transition matrix, there is only a single eigenvalue distinct from $1$, that is $a^e_{\rm C} = a^e_{\rm G} = a^e_{\rm T}$ for each edge $e$. Under the K2P model, we have two eigenvalues distinct from~$1$. Here, we choose to identify those corresponding to ${\rm C}$ and ${\rm T}$, that is, $a^e_{\rm C} = a^e_{\rm T}$ for each edge~$e$.

For a semi-directed phylogenetic network $\mathcal{N}$ on $n$ leaves and a fixed substitution model, we define the model $\mathcal{M}_{\mathcal{N}}$ to be the image of the parameterization map $\psi_\mathcal{N}$ on the restricted set of parameters $\Theta_0$,
$$\mathcal{M}_{\mathcal{N}} = \psi_{\mathcal{N}}(\Theta_0) \subset\Delta^{4^n-1}.$$
Note that this differs from other works (e.g. \cite{gross2018distinguishing}), where $\mathcal{M}_{\mathcal{N}}$ denotes the image of the parameterization map over $\mathbb{C}^{m}$, i.e., no restrictions are placed on the numerical parameters. Observe that for any phylogenetic network $\mathcal{N}$ and any of the JC, K2P, or K3P substitution models, the set $\Theta_0$ is semi-algebraic. Then, as a consequence of the Tarski-Seidenberg theorem (see e.g. \cite{bochnak1998real}), since $\psi_{\mathcal{N}}$ is a polynomial mapping, the set $\mathcal{M}_{\mathcal{N}}$ is also semi-algebraic.

%\begin{remark}\label{rem:dense}
%    Observe that $\mathcal{M}_{\mathcal{N}}$ is dense (with respect to the Zariski topology) in $\mathcal{V}_{\mathcal{N}}$. This is because $\Theta_0$, as a product of $m$ non-trivial intervals of real numbers, is dense in $\mathbb{C}^m$ (where $m$ is the number of parameters). We can view  $\psi_{\mathcal{N}}$ as a polynomial map from $\mathbb{C}^m$ to $\mathbb{C}^{4^n}$, and therefore a morphism of varieties. It follows that $\psi_{\mathcal{N}}(\overline{\Theta_0}) \subseteq \overline{\psi_{\mathcal{N}}(\Theta_0)}$ and so $\mathcal{V}_{\mathcal{N}}=\overline{{\rm im}\,\psi_{\mathcal{N}}} = \overline{\psi_{\mathcal{N}}(\Theta_0)}$.
%\end{remark}

We say that the network parameter of a class of phylogenetic networks is \emph{fully identifiable} on a parameter set $\Theta$, if given any two networks $\mathcal{N}_1$ and $\mathcal{N}_2$ in the class, the intersection of the images of $\Theta(\mathcal{N}_1)$ and $\Theta(\mathcal{N}_2)$ under the respective parameterization maps is empty, that is $\psi_{\mathcal{N}_1}(\Theta)\cap\psi_{\mathcal{N}_2}(\Theta)=\emptyset$.

\begin{lemma}\label{lem:paramlemma}
    Let $\mathcal{N}$ be a semi-directed phylogenetic network on a set of $n$ taxa $\mathcal{X} = [n]$, with a reticulation vertex $r$ and corresponding reticulation edges $e'$ and $e''$, and a leaf (say, the leaf labelled $1$) adjacent to $r$. Let $\mathcal{N}'$ and $\mathcal{N}''$ be the displayed networks of $\mathcal{N}$ got by removing reticulation edges $e''$ and $e'$ respectively. For a consistent leaf labelling $x_1,\ldots, x_n$, let $q_{x_1,\ldots, x_n},q'_{x_1,\ldots, x_n}$ and $q''_{x_1,\ldots,x_n}$ be the Fourier coordinates corresponding to $\mathcal{N},\mathcal{N}'$ and $\mathcal{N}''$ respectively for the JC, K2P, or K3P substitution model. Then on the parameter space $\Theta_0(\mathcal{N})$, we have 
    $$q_{{\rm A},x_2\ldots, x_n}  = q'_{{\rm A},x_2\ldots, x_n} = q''_{{\rm A},x_2\ldots,x_n},$$
    for all $x_2,\ldots,x_n \in G$. 
\end{lemma}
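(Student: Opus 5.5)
The plan is to reduce everything to the displayed trees via equation~(\ref{eqn:fourierNetworkParam}) and Lemma~\ref{lem:displayedNetworkParam}. By that lemma we have $q = \delta q' + (1-\delta) q''$ on $\Theta_0(\mathcal{N})$, where $\delta$ is the mixing parameter at $r$. It therefore suffices to prove the single equality $q'_{{\rm A},x_2,\ldots,x_n} = q''_{{\rm A},x_2,\ldots,x_n}$. The first equality then follows at once, because $\delta q' + (1-\delta)q' = q'$.

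To compare $q'$ and $q''$, I will expand each as a mixture over displayed trees. Fix a choice $\sigma$ of reticulation edges at all reticulation vertices other than $r$; the mixing weight of $\sigma$ is the same in $\mathcal{N}'$ and $\mathcal{N}''$. Let $T'_\sigma$ and $T''_\sigma$ be the displayed trees of $\mathcal{N}$ obtained by taking $\sigma$ together with $e'$, respectively $e''$, at $r$. Matching these trees term by term, it is enough to show $\psi_{T'_\sigma}$ and $\psi_{T''_\sigma}$ agree on the coordinate $({\rm A},x_2,\ldots,x_n)$.

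Both trees contain the edge from $r$ to leaf $1$. In each tree, $r$ becomes a degree-2 vertex whose only parent edge is $e'$ or $e''$, respectively. I work with the unsuppressed trees, so parameters are inherited directly; by the multiplicative closure remark after (\ref{eqn:networkParam}), suppression does not change the distribution. In (\ref{eqn:groupparam}), each edge $f$ is labelled by the sum of the leaf states below it. The edges whose labels depend on the position of the pendant subtree (the leaf-$1$ edge, the edge into $r$, and the path edges where it attaches) receive label contributions only from $x_1$. Since $x_1 = {\rm A}$ is the identity of $G$, removing leaf $1$ and its pendant path leaves every other edge's label unchanged.

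Consequently, in $T'_\sigma$ the edges on the path from leaf $1$ up through $r$ and $e'$ all carry label ${\rm A}$, so their parameters equal $a^f_{\rm A}=1$. More precisely, the edges from the leaf to the attachment point get label $x_1={\rm A}$, while the remaining edges get labels equal to the sums computed in the tree with leaf $1$ deleted. The same holds in $T''_\sigma$. Both monomials therefore equal the Fourier coordinate of the common tree $T_\sigma$ obtained by deleting leaf $1$, its pendant edge, $r$ and its parent edge, after suppressing degree-2 vertices. Here I use $a_{\rm A}=1$ on $\Theta_0$ again to absorb the merged edges: for edges $f_1,f_2$ merged by suppression, the parameter product $a^{f_1}_y a^{f_2}_y$ is the parameter of the combined edge under multiplicative closure.

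The main obstacle is the bookkeeping in that last step. The Fourier parameter of an edge depends on its label, and on $T'_\sigma$ and $T''_\sigma$ the split induced by an edge on the attachment path differs only in which side leaf $1$ falls on. I will handle this by noting that the label is the group sum of states on one side, which is unaffected by adding or removing $x_1={\rm A}$. The one subtlety is that, after deleting $e'$ or $e''$, the parent of that edge may become degree 2 and be suppressed. Since I never suppress and just keep the inherited $a$-parameters, this causes no issue. Summing over $\sigma$ with the matching weights gives $q'=q''$ on the stated coordinates.
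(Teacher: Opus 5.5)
Your proposal is correct and follows essentially the same route as the paper. You reduce to $q'=q''$ via Lemma~\ref{lem:displayedNetworkParam}, pair the displayed trees of $\mathcal{N}'$ and $\mathcal{N}''$ that agree away from $r$, and note that labels on shared edges differ only by $x_1={\rm A}$, while $e'$ and $e''$ both carry label ${\rm A}$ and so have parameter $1$. Your detour through suppressing degree-2 vertices to a common tree $T_\sigma$ is unnecessary: the paper simply observes that the two monomials share every factor over the common edge set $E_i$ and differ only in $a^{e'}_{\rm A}=a^{e''}_{\rm A}=1$.
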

\begin{proof}
    By Lemma~\ref{lem:displayedNetworkParam} we can write the Fourier coordinates of $\mathcal{N}$ as
    $$q_{{\rm A},x_2\ldots, x_n} = \delta q'_{{\rm A},x_2\ldots, x_n} + (1-\delta)q''_{{\rm A},x_2\ldots, x_n},$$
    where $\delta$ is the mixing parameter associated to $r$. Thus, it is sufficient to show that $q'_{{\rm A},x_2\ldots, x_n} = q''_{{\rm A},x_2\ldots,x_n}$.
        
    Let $\mathcal{N}_1, \ldots,\mathcal{N}_{\ell}$ be the displayed networks of $\mathcal{N}$ whose only reticulation vertex is~$r$. For each $i=1,\ldots,\ell$ let $\mathcal{T}_i'$ and $\mathcal{T}_i''$ be the displayed trees of $\mathcal{N}_i$, got by removing reticulation edges $e''$ and $e'$ respectively. It follows that $\mathcal{T}_1',\ldots,\mathcal{T}_{\ell}'$ are the displayed trees of $\mathcal{N}'$, and $\mathcal{T}_1'',\ldots,\mathcal{T}_{\ell}''$ are the displayed trees of $\mathcal{N}''$. Therefore the Fourier coordinates of $\mathcal{N}'$ and $\mathcal{N}''$ are given by
    \begin{equation}\label{eqn:displayedTreeParams}
    q'_{{\rm A},x_2\ldots, x_n} = \sum_{i=1}^{\ell} \delta_i q'^i_{{\rm A},x_2\ldots, x_n},\quad q''_{{\rm A},x_2\ldots, x_n} = \sum_{i=1}^{\ell} \delta_i q''^i_{{\rm A},x_2\ldots, x_n},
    \end{equation}
    where $\delta_i$ is the product of mixing parameters associated to the removed reticulation edges in $\mathcal{N}_i$, and $q'^i_{{\rm A},x_2\ldots, x_n}$ and $q''^i_{{\rm A},x_2\ldots, x_n}$ are the Fourier coordinates of $\mathcal{T}_i'$ and $\mathcal{T}_i''$ respectively.

    Next, for each pair of trees $\mathcal{T}_i'$ and $\mathcal{T}_i''$, let $E_i$ be the set of shared edges, so that $E(\mathcal{T}_i') = E_i\cup\{e'\}$ and $E(\mathcal{T}_i'') = E_i\cup\{e''\}$. Then for a fixed leaf labelling $x_1={\rm A}, x_2, \ldots, x_n$, the Fourier coordinates of $\mathcal{T}_i'$ and $\mathcal{T}_i''$ are given by 
    $$q'^{i}_{{\rm A},x_2,\ldots,x_n} = a_{\rm A}^{e'}\prod_{e\in E_i}a^e_{x_e'}$$
    and
    $$q''^{i}_{{\rm A},x_2,\ldots,x_n} = a_{\rm A}^{e''}\prod_{e\in E_i}a^e_{x_e''}$$
    respectively.  Observe that for each edge $e\in E_i$, the split induced by $e$ in $\mathcal{T}_i'$ differs from the split induced by $e$ in $\mathcal{T}_i''$ only by the placement of leaf 1. That is, if $A\,|\,B$ is the split induced by $e$ in $\mathcal{T}_i'$, and supposing (without loss of generality) that $1\in A$, the split induced by $e$ in $\mathcal{T}_i''$ is either $A|B$ or $(A\setminus\{1\})|(B\cup\{1\})$. In both cases, since $x_1 = {\rm A}$ (the identity of $G$), we must have $x_e'=\sum_{i\in A}x_i = x_e''$.
    
    Finally, since in $\Theta_0$ we have $a_{\rm A}^{e'} = a_{\rm A}^{e''} = 1$, the Fourier coordinates $q'^{i}_{{\rm A},x_2,\ldots,x_n}$ and $q''^{i}_{{\rm A},x_2,\ldots,x_n}$ are equal. The result then follows by equation (\ref{eqn:displayedTreeParams}).
\end{proof}
\begin{remark}
It is clear that Lemma~\ref{lem:paramlemma} holds for any group-based model for an arbitrary group $G$. The only condition is that the parameters $a^e_0$ are $1$ for each edge $e$, where $0$ is the identity element of the group. Even if this condition is not satisfied, the above proof shows that we can take a factor of $(\delta a_0^{e'} + (1-\delta)a_0^{e''})$ out of $q_{0,x_2,\ldots,x_n}$. For networks with a single reticulation, this makes the Fourier coordinate essentially monomial (cf. the proof of \cite[Proposition~12]{gross2024dimensions}).
\end{remark}
\section{Restrictions}

In this section we give a fundamental result that we use to obtain identifiability results in later sections. Our main strategy is to reduce the question of identifiability to networks with a small number of leaves by looking at restricted networks. We begin with a definition.

\begin{definition}
	Let $\mathbb{C}^{4^n}$ be indexed by leaf-patterns for $n$ taxa. The $i^{\text{th}}$ \emph{marginalization map} $\mathrm{m}^n_i$ on $\mathbb{C}^{4^n}$  is the map
$$\mathrm{m}^n_i\,:\ \mathbb{C}^{4^n} \longrightarrow \mathbb{C}^{4^{n-1}}$$
given by marginalizing over the $i^{\text{th}}$ leaf, that is, the $(x_1,\ldots,x_{i-1},x_{i+1},\ldots,x_n)$-component of $\mathrm{m}^n_i(\mathbf{p})$ is given by
$$\big(\mathrm{m}^n_i(\mathbf{p})\big)_{x_1,\ldots, x_{i-1}, x_{i+1},\ldots, x_n} = \sum_{x_i \in \Omega} p_{x_1,\ldots, x_i,\ldots, x_n},$$
for $\mathbf{p}\in \mathbb{C}^{4^n}$ with components $p_{x_1,\ldots, x_n}$. When it is clear from context, we will drop the superscript $n$ and simply write $\mathrm{m}_i$.

For a subset of leaves $\mathcal{Y} = \{y_1,\ldots,y_k\}$, the \emph{marginalization map over} $\mathcal{Y}$ is the map
$$\mathrm{m}^n_\mathcal{Y}\,:\ \mathbb{C}^{4^n} \longrightarrow \mathbb{C}^{4^{n-k}}$$
given by $\mathrm{m}^n_\mathcal{Y} = \mathrm{m}^{n-k+1}_{y_k}\circ\cdots\circ\mathrm{m}^{n}_{y_1}$.
\end{definition}

It is clear that when $\mathbf{p} \in \Delta^{4^n-1}$ we have $\mathrm{m}^n_{\mathcal{Y}}(\mathbf{p})\in\Delta^{4^{n-k}-1}$.

\begin{remark}\label{rem:commute}
    The marginalization maps for two distinct leaves $i$ and $j$ commute, in the sense that $\mathrm{m}_i^{n-1}\circ\mathrm{m}_j^{n} = \mathrm{m}_j^{n-1}\circ\mathrm{m}_i^{n}$. Explicitly, we have
    \begin{align*}     
    (\mathrm{m}_i^{n-1}&\circ\mathrm{m}_j^{n}(\mathbf{p}))_{x_1,\ldots,x_{i-1},x_{i+1},\ldots,x_{j-1}x_{j+1},\ldots,x_n} \\
    &= \sum_{x_i\in \Omega}(\mathrm{m}_j^{n}(\mathbf{p}))_{x_1,\ldots,x_{j-1},x_{j+1},\ldots,x_n} \\ &= \sum_{x_i\in \Omega}\sum_{x_j\in \Omega}p_{x_1,\ldots,x_n} = \sum_{x_j\in \Omega}\sum_{x_i\in \Omega}p_{x_1,\ldots,x_n} \\
    &=\sum_{x_j\in \Omega}(\mathrm{m}_i^{n}(\mathbf{p}))_{x_1,\ldots,x_{i-1},x_{i+1},\ldots,x_n} \\
    &= (\mathrm{m}_j^{n-1}\circ\mathrm{m}_i^{n}(\mathbf{p}))_{x_1,\ldots,x_{i-1},x_{i+1},\ldots,x_{j-1}x_{j+1},\ldots,x_n},
    \end{align*}
    for all $x_k\in \Omega$ with $k\in[n]\setminus\{i,j\}$. Here we have assumed, without loss of generality, that $i<j$. It follows that the ordering of leaves in the subset $\mathcal{Y}$ does not matter in the definition of $\mathrm{m}^n_{\mathcal{Y}}$.
\end{remark}
Observe that the marginalization maps are linear, so that when we marginalize over a leaf in a phylogenetic network with a displayed tree model, we obtain the mixture of the marginal points from the displayed trees. The core result we will need says that marginalizing over a leaf in a tree gives us a point in the model of the tree in which that leaf has been pruned. This result is well-known, but we give a full proof in the appendix for completeness.

\begin{proposition}\label{prop:restriction}
	Let $\mathcal{T}$ be an unrooted binary phylogenetic tree on a set of taxa $\mathcal{X}$, and let $\mathcal{M}$ be the corresponding model for a multiplicatively closed substitution model. For a leaf $x\in\mathcal{X}$, let $\mathcal{T}' = \mathcal{T}|_{\mathcal{X}\setminus\{x\}}$ be the restricted tree obtained from $\mathcal{T}$ by pruning the leaf $x$ and suppressing the resulting degree two vertex, and let $\mathcal{M}'$ be the corresponding model for the same substitution model. If $\mathbf{p}\in\mathcal{M}$ then the point $\mathbf{p}'$ obtained from $\mathbf{p}$ by marginalizing over the leaf $x$ is in $\mathcal{M}'$. \xqed{\proofbox}
\end{proposition}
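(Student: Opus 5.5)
The plan is to prove the statement directly in the original (probability) coordinates using the edge-wise Markov structure, rather than Fourier coordinates, since only multiplicative closure is assumed. Root $\mathcal{T}$ at an arbitrary internal vertex $\rho$ (any rooting gives the same distribution for the time-reversible models we use; alternatively, choose the rooting that realises $\mathbf{p}$). Let $v$ be the unique neighbour of the leaf $x$, with incoming edge $(u,v)$ from its parent $u$, outgoing edge $(v,x)$ to $x$, and outgoing edge $(v,w)$ to its other child $w$. If $v=\rho$, we reroot so that $v$ is not the root. This is possible because $\mathcal{T}$ has at least three leaves whenever $\mathcal{T}'$ is a binary tree of interest; the degenerate cases with $|\mathcal{X}|\le 2$ we treat separately, where they are trivial.

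First, write
\[
p_{x_1\ldots x_n}=\sum_{x\in X(x_1,\ldots,x_n)}\pi_{x_\rho}\prod_{(s,t)\in E(\mathcal{T})}M^{(s,t)}_{x_s,x_t}.
\]
Then sum over the state $x_x$ at leaf $x$. This variable appears only in the factor $M^{(v,x)}_{x_v,x_x}$. Since the rows of a transition matrix sum to $1$, we have $\sum_{x_x\in\Omega}M^{(v,x)}_{x_v,x_x}=1$, so this factor disappears from the marginal.

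Next, the remaining sum is over internal states including $x_v$. The state $x_v$ now appears only in $M^{(u,v)}_{x_u,x_v}M^{(v,w)}_{x_v,x_w}$. Summing over $x_v$ gives $(M^{(u,v)}M^{(v,w)})_{x_u,x_w}$. We assign this product matrix to the new edge $(u,w)$ of $\mathcal{T}'$ obtained by suppressing $v$. By multiplicative closure, $M^{(u,v)}M^{(v,w)}$ lies in the substitution model. All other edges of $\mathcal{T}'$ keep their matrices, and the root distribution is unchanged. The resulting expression is exactly the parameterization of $\mathcal{T}'$ at these parameters, so $\mathbf{p}'\in\mathcal{M}'$.

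The main obstacle is checking that the new parameters lie in the \emph{restricted} parameter set whenever $\mathcal{M}$ is taken over $\Theta_0$. For the JC, K2P and K3P models we verify this in Fourier terms. Eigenvalues multiply, so $a^{(u,w)}_g=a^{(u,v)}_g a^{(v,w)}_g$. Hence $a_{\rm A}=1$, the other eigenvalues stay in $(0,1)$, and the JC and K2P equalities of eigenvalues are preserved. It follows that the product matrix is a stochastic, positive definite, non-identity matrix of the same form.
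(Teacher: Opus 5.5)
Your proof is correct and follows the same route as the paper's appendix proof. You sum out the leaf state using that the rows of $M^{(v,x)}$ sum to $1$, then sum out the state at the suppressed vertex to obtain $M^{(u,v)}M^{(v,w)}$ on the new edge, which lies in the model by multiplicative closure. Beyond this, you reroot so that the suppressed vertex is not the root and check explicitly that the product stays in $\Theta_0$ (eigenvalues multiply); both are details the paper leaves implicit, and the latter is covered there only by the Lie--Markov remark at the start of Section~\ref{sec:level1identifiability}.
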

A natural consequence is that we can perform this restriction for many leaves at once.
\begin{corollary}\label{cor:restriction}
Let $\mathcal{T}$ be an unrooted binary phylogenetic tree on a set of taxa $\mathcal{X}$, and let $\mathcal{M}$ be the corresponding model for a multiplicatively closed substitution model. For a subset $\mathcal{Y}\subset\mathcal{X}$, let $\mathcal{T}' = \mathcal{T}|_{\mathcal{X}\setminus\mathcal{Y}}$ be the restricted tree of $\mathcal{T}$, obtained  by pruning all leaves in $\mathcal{Y}$ and suppressing any resulting degree two vertices, and let $\mathcal{M}'$ be the corresponding model for the same substitution model. If $\mathbf{p}\in\mathcal{M}$ then the point $\mathbf{p}'$ obtained from $\mathbf{p}$ by marginalizing over all leaves $y\in\mathcal{Y}$ is in $\mathcal{M}'$.
\end{corollary}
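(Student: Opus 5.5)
The plan is induction on $k=|\mathcal{Y}|$, applying Proposition~\ref{prop:restriction} one leaf at a time. Write $\mathcal{Y}=\{y_1,\ldots,y_k\}$. By Remark~\ref{rem:commute} the order of marginalization is irrelevant, so we may take $\mathrm{m}^n_{\mathcal{Y}}=\mathrm{m}^{n-k+1}_{y_k}\circ\cdots\circ\mathrm{m}^{n}_{y_1}$. For $j=0,\ldots,k$, define the intermediate trees $\mathcal{T}_0=\mathcal{T}$ and $\mathcal{T}_j=\mathcal{T}_{j-1}|_{\mathcal{X}\setminus\{y_1,\ldots,y_j\}}$, with corresponding models $\mathcal{M}_j$ under the same substitution model. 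Also set $\mathbf{p}_0=\mathbf{p}$ and $\mathbf{p}_j=\mathrm{m}_{y_j}(\mathbf{p}_{j-1})$.

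The inductive step is immediate. If $\mathbf{p}_{j-1}\in\mathcal{M}_{j-1}$, then Proposition~\ref{prop:restriction}, applied to the tree $\mathcal{T}_{j-1}$ and the leaf $y_j$, gives $\mathbf{p}_j\in\mathcal{M}_j$. After $k$ steps we obtain $\mathbf{p}'=\mathbf{p}_k\in\mathcal{M}_k$.

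Two points need checking for the induction to apply.

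\emph{Hypotheses at each step.} Each $\mathcal{T}_{j-1}$ must again be an unrooted binary phylogenetic tree. Pruning one leaf and suppressing the resulting degree-two vertex preserves binarity. This requires $\mathcal{X}\setminus\mathcal{Y}$ to contain enough leaves that the pruned object is still a phylogenetic tree. For very small remaining leaf sets (two leaves, a single edge) I would treat the degenerate case directly, since suppression there just merges edges via multiplicative closure. The substitution model is the same throughout, so multiplicative closure holds at every step.

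\emph{Identification of the final tree.} We need $\mathcal{T}_k=\mathcal{T}|_{\mathcal{X}\setminus\mathcal{Y}}$. Both trees are the union of paths in $\mathcal{T}$ between leaves of $\mathcal{X}\setminus\mathcal{Y}$, with degree-two vertices suppressed. Iterated restriction agrees with one-shot restriction because a path between two retained leaves in $\mathcal{T}_{j-1}$ corresponds exactly to the suppressed image of the path in $\mathcal{T}$. So the minimal subtree spanning the retained leaves is unchanged by the order and grouping of prunings.

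The main obstacle, such as it is, is this bookkeeping that iterated restriction coincides with direct restriction, together with the small-leaf edge cases. The analytic content is entirely contained in Proposition~\ref{prop:restriction}.
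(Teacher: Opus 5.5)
Your proposal is correct and matches the paper's proof: induction on $|\mathcal{Y}|$, removing one leaf at a time with Proposition~\ref{prop:restriction}, and relying on iterated marginalization and pruning agreeing with doing them all at once (the paper just cites commutativity via Remark~\ref{rem:commute} for this). The checks you add are left implicit in the paper, so spelling them out only tightens the argument: binarity is preserved at each step, iterated restriction equals one-shot restriction, and very small leaf sets are handled directly.
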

\begin{proof}
The proof is by induction on the size of the set $\mathcal{Y}$, with the base case $|\mathcal{Y}|=0$ being trivial. For the induction step, first observe that both marginalization and tree pruning are commutative operations (i.e., they can be done in any order to give the same result, see Remark~\ref{rem:commute}). Write $\mathcal{Y} = \mathcal{Y}_0\cup\{y\}$ for a subset $\mathcal{Y}_0 \subset\mathcal{Y}$ with $|\mathcal{Y}_0| = |\mathcal{Y}| - 1$. By the induction hypothesis, if $\mathbf{p}\in\mathcal{M}$, then marginalizing over all leaves in $\mathcal{Y}_0$ gives a point $\mathrm{m}_{\mathcal{Y}_0}(\mathbf{p}) = \mathbf{p}_0$ in the model $\mathcal{M}_0$ of the tree $\mathcal{T}_0$ obtained from $\mathcal{T}$ by pruning the leaves in $\mathcal{Y}_0$ (and suppressing any resulting degree two vertices). Now, by Proposition~\ref{prop:restriction}, marginalizing $\mathbf{p}_0$ over the leaf $y$ gives a point in the model $\mathcal{M}'$.
\end{proof}

\begin{corollary}
Let $\mathcal{T}_1$ and $\mathcal{T}_2$ be two binary, phylogenetic trees on a set of taxa $\mathcal{X}$, and let $\mathcal{M}_1$ and $\mathcal{M}_2$ be the corresponding models for a multiplicatively closed substitution model. Consider the models $\mathcal{M}_1'$ and $\mathcal{M}_2'$ on the restricted trees $\mathcal{T}_1' = \mathcal{T}_1|_{\mathcal{X}\setminus \mathcal{Y}}$ and $\mathcal{T}_2' = \mathcal{T}_2|_{\mathcal{X}\setminus \mathcal{Y}}$ for some subset  $\mathcal{Y}\subset\mathcal{X}$. If $\mathcal{M}_1'\cap\mathcal{M}_2' = \emptyset$, then $\mathcal{M}_1\cap\mathcal{M}_2 = \emptyset$.
\end{corollary}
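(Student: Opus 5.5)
This is a contrapositive argument that only needs Corollary~\ref{cor:restriction} and the fact that marginalization is a well-defined function on points. Suppose, for contradiction, that $\mathcal{M}_1\cap\mathcal{M}_2\neq\emptyset$, and pick a point $\mathbf{p}\in\mathcal{M}_1\cap\mathcal{M}_2$. Apply the marginalization map $\mathrm{m}_{\mathcal{Y}}$ to obtain $\mathbf{p}'=\mathrm{m}_{\mathcal{Y}}(\mathbf{p})$. By Remark~\ref{rem:commute}, the order in which the leaves of $\mathcal{Y}$ are marginalized does not matter, so $\mathbf{p}'$ is well-defined and the same regardless of which tree we view $\mathbf{p}$ as coming from.

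First, view $\mathbf{p}$ as a point of $\mathcal{M}_1$. Corollary~\ref{cor:restriction}, applied to $\mathcal{T}_1$ and the subset $\mathcal{Y}$, gives $\mathbf{p}'\in\mathcal{M}_1'$, the model of $\mathcal{T}_1'=\mathcal{T}_1|_{\mathcal{X}\setminus\mathcal{Y}}$. Next, view $\mathbf{p}$ as a point of $\mathcal{M}_2$; the same corollary applied to $\mathcal{T}_2$ gives $\mathbf{p}'\in\mathcal{M}_2'$. Hence $\mathbf{p}'\in\mathcal{M}_1'\cap\mathcal{M}_2'$, which contradicts the hypothesis. Therefore $\mathcal{M}_1\cap\mathcal{M}_2=\emptyset$.

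The argument has essentially no obstacle; the only points to check are bookkeeping.
\begin{itemize}
\item The parameter space defining $\mathcal{M}_i'$ must match the one used in Corollary~\ref{cor:restriction}, with the same substitution model and the same restricted parameter set. The pruned tree's parameters come from products of matrices in the model, and the model is multiplicatively closed, so this holds.
\item If $|\mathcal{X}\setminus\mathcal{Y}|$ is small (for example, two leaves), the restricted tree is a single edge, and the model is interpreted accordingly. This does not affect the argument.
\end{itemize}
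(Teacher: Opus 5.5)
Your proposal is correct and is the paper's argument: take $\mathbf{p}\in\mathcal{M}_1\cap\mathcal{M}_2$, apply Corollary~\ref{cor:restriction} to each tree to get $\mathrm{m}_{\mathcal{Y}}(\mathbf{p})\in\mathcal{M}_1'\cap\mathcal{M}_2'$, and conclude by contradiction. Your bookkeeping remarks do no harm, though $\mathrm{m}_{\mathcal{Y}}$ is defined on the ambient space independently of any tree, so $\mathbf{p}'$ being the same for both trees does not need Remark~\ref{rem:commute}.
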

\begin{proof}

Suppose not, and that there exists a point $\mathbf{p}\in\mathcal{M}_1\cap\mathcal{M}_2 $. Then by Corollary~\ref{cor:restriction} the marginal point $\mathrm{m}_{\mathcal{Y}}(\mathbf{p})$ is contained in $\mathcal{M}_1'\cap\mathcal{M}_2'$, a contradiction.
\end{proof}
Next we apply our results to phylogenetic networks.

\begin{theorem}\label{thm:restriction_network}
Let $\mathcal{N}_1$ and $\mathcal{N}_2$ be two phylogenetic networks on a set of taxa $\mathcal{X}$, and let $\mathcal{M}_1$ and $\mathcal{M}_2$ be the corresponding models for a multiplicatively closed substitution model.  Consider the models $\mathcal{M}_1'$ and $\mathcal{M}_2'$ on the restricted networks $\mathcal{N}_1' = \mathcal{N}_1|_{\mathcal{X}\setminus \mathcal{Y}}$ and $\mathcal{N}_2' = \mathcal{N}_2|_{\mathcal{X}\setminus \mathcal{Y}}$ for some subset  $\mathcal{Y}\subset\mathcal{X}$. If $\mathcal{M}_1'\cap\mathcal{M}_2' = \emptyset$, then $\mathcal{M}_1\cap\mathcal{M}_2 = \emptyset$.
\end{theorem}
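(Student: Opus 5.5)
The plan mirrors the tree case: argue by contraposition, and reduce everything to the key claim that if $\mathbf{p}\in\mathcal{M}_i$ then $\mathrm{m}_{\mathcal{Y}}(\mathbf{p})\in\mathcal{M}_i'$. Granting this claim, any point of $\mathcal{M}_1\cap\mathcal{M}_2$ would marginalize to a point of $\mathcal{M}_1'\cap\mathcal{M}_2'$, contradicting the hypothesis.

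To prove the claim, fix a network $\mathcal{N}$ and parameters $\theta\in\Theta_0(\mathcal{N})$, and write $\mathbf{p}=\psi_{\mathcal{N}}(\theta)$ as the mixture over displayed trees $\sum_\sigma \delta_\sigma \psi_{\mathcal{T}_\sigma}(\theta)$. Marginalization is linear, so
\[
\mathrm{m}_{\mathcal{Y}}(\mathbf{p})=\sum_\sigma \delta_\sigma\,\mathrm{m}_{\mathcal{Y}}(\psi_{\mathcal{T}_\sigma}(\theta)).
\]
By Corollary~\ref{cor:restriction}, each summand is the point of the restricted tree $\mathcal{T}_\sigma|_{\mathcal{X}\setminus\mathcal{Y}}$ obtained by giving each edge the product of the transition matrices along the corresponding path of $\mathcal{T}_\sigma$; this uses multiplicative closure. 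The remaining task is to recognise this mixture as a point of $\mathcal{M}'$ for $\mathcal{N}'=\mathcal{N}|_{\mathcal{X}\setminus\mathcal{Y}}$. I would do so by constructing parameters $\theta'$ for $\mathcal{N}'$ directly.

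The combinatorial step is the following. The restriction $\mathcal{N}'$ arises from the union of up-down paths between leaves of $\mathcal{X}\setminus\mathcal{Y}$, followed by suppression of degree-two vertices and identification of parallel edges. I would show that every displayed tree of $\mathcal{N}$, restricted to $\mathcal{X}\setminus\mathcal{Y}$, is obtained from some displayed tree of the intermediate subnetwork (before suppression) by the same suppression operations. Then reticulations of $\mathcal{N}$ whose vertex does not survive in the subnetwork, or survives with only one parent edge, contribute mixing parameters that can be summed out, since $\sum$ of $\delta$ and $1-\delta$ is $1$. Reticulations that survive keep their $\delta_i$.

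The next step handles edges. Each edge of $\mathcal{N}'$ corresponds to a path in $\mathcal{N}$ and receives the product of its matrices. Identified parallel edges are replaced by the appropriate convex combination, which is legitimate because group-based models are closed under convex combinations; this is the same justification cited from \cite{sullivant2025phylogenetic} in defining displayed networks. After these replacements, the mixture equals $\psi_{\mathcal{N}'}(\theta')$.

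The main obstacle is checking that $\theta'$ lies in $\Theta_0(\mathcal{N}')$. Products of positive definite, non-identity JC/K2P/K3P matrices have eigenvalues in $(0,1)$, and so do convex combinations of such matrices, so the edge conditions are fine. Surviving mixing parameters also stay in $(0,1)$. The delicate part is the bookkeeping when an identified parallel pair involves a reticulation whose mixing weight gets absorbed into an edge matrix, as opposed to one that remains a genuine reticulation of $\mathcal{N}'$. Here I would argue inductively, removing one leaf of $\mathcal{Y}$ at a time and using Remark~\ref{rem:commute}, so that only local suppressions near a single pruned leaf need to be analysed.
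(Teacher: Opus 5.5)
Your proposal is correct and follows the paper's proof. Both argue by contraposition, write a point of $\mathcal{M}_i$ as a mixture over displayed trees, apply linearity of marginalization together with the tree restriction result, and then recognise the marginal mixture as a point of $\mathcal{M}_i'$. The paper dispatches that last step in one line, citing that the displayed trees of $\mathcal{N}_i|_{\mathcal{X}\setminus\mathcal{Y}}$ are exactly the restrictions of the displayed trees of $\mathcal{N}_i$ \cite{frohn2026bounds} and calling membership ``clear''. Your explicit construction of $\theta'$ (summing out mixing parameters, path products, convex combinations for identified parallel edges, and the check that $\theta'\in\Theta_0$) is just a more careful version of that same step.
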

\begin{proof}
Suppose not, and that we have a point $\mathbf{p} \in\mathcal{M}_1\cap\mathcal{M}_2$. For each network $\mathcal{N}_i$ for $i=1,2$ we can write $\mathbf{p}$ as a sum of points coming from the displayed trees $\mathcal{T}_j^i$ of $\mathcal{N}_i$ for a fixed set of parameters and mixing weights
$$ \mathbf{p} = \sum_{j}\delta^i_j \mathbf{p}^i_j,$$
where $\mathbf{p}^i_j$ is the point coming from tree $\mathcal{T}_j^i$.
Now restricting to the leaf set $\mathcal{X}\setminus\mathcal{Y}$ by marginalizing over the leaves in $\mathcal{Y}$ we obtain the point
$$\mathrm{m}_{\mathcal{Y}}(\mathbf{p}) = \sum_{j} \delta^i_j \mathrm{m}_{\mathcal{Y}}(\mathbf{p}^i_j ),$$
where  $\mathrm{m}_{\mathcal{Y}}(\mathbf{p}^i_j)$ is the marginal point from the tree $\mathcal{T}_j^i$, which lies in the model of the restricted tree $\mathcal{T}_j^i |_{\mathcal{X}\setminus\mathcal{Y}}$ by Proposition~\ref{prop:restriction}. Now, since the displayed trees of a restricted network are precisely the restrictions of the displayed trees of the original network \cite[Proposition~A.1]{frohn2026bounds}, it is clear that $\mathrm{m}_{\mathcal{Y}}(\mathbf{p})$ lies in the model $\mathcal{M}'_i$. Thus we have $\mathrm{m}_{\mathcal{Y}}(\mathbf{p}) \in \mathcal{M}_1'\cap\mathcal{M}_2'$, a contradiction.
\end{proof}

\section{Full Identifiability of Level-1 Phylogenetic Networks}\label{sec:level1identifiability}
In this section we give identifiability results on level-1 phylogenetic networks, restricting our attention to group-based models on the parameter space~$\Theta_0$, for which transition matrices are probabilistic and can be obtained by the usual continuous-time Markov process. This subset of transition matrices forms a Lie-Markov model~\cite{sumner2012lie}, and so is multiplicatively closed. We may therefore apply the restriction results from the previous section. In Section~\ref{sec:identifiability} we use Theorem~\ref{thm:restriction_network} to reduce arbitrarily large level-1 phylogenetic networks to the small cases dealt with in Sections~\ref{sec:trinet} and~\ref{sec:quarnets}.

\subsection{Trinet Inequality (Level-1)}\label{sec:trinet}
In \cite[Proposition~2.15]{englander2025identifiability} the authors showed that, for the JC substitution model, the intersection of a 3-star tree model and a 3-sunlet model on the restricted parameter set $\Theta_0$ is empty. They called this result the \lq trinet inequality\rq, since 3-sunlets are the only (strictly) level-1 trinets. Here, we extend the trinet inequality to the K2P and K3P models. For the JC and K2P models, we further generalize these inequalities to trinets of arbitrary level in Section~\ref{sec:levelktrinet}.

\begin{lemma}\label{lem:trinetk2p}
    Let $\mathcal{N}_1$ be the 3-star unrooted tree and $\mathcal{N}_2$ be a 3-sunlet semi-directed network, and let $\mathcal{M}_1$ and $\mathcal{M}_2$ be the associated models under the K2P substitution model with parameter values in $\Theta_0(\mathcal{N}_1)$ and $\Theta_0(\mathcal{N}_2)$ respectively. Then the polynomial
    $$ Q = q_{\rm AGG}q_{\rm GAG}q_{\rm CCA}^2 - q_{\rm AAA}q_{\rm GGA}q_{\rm TCG}^2$$
    is zero on $\mathcal{M}_1$ and strictly positive on $\mathcal{M}_2$. In particular, $\mathcal{M}_1\cap\mathcal{M}_2=\emptyset$.\xqed{\proofbox}
\end{lemma}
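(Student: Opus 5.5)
The plan is to compute $Q$ directly in Fourier coordinates on each model. On $\mathcal{M}_1$ this is a monomial identity. On $\mathcal{M}_2$ it is a factorization that splits into a manifestly positive monomial and a small binomial-type expression. Throughout I use that in $\Theta_0$ we have $a^e_{\rm A}=1$, that $a^e_{\rm C}=a^e_{\rm T}$ (K2P), and that all remaining parameters lie in $(0,1)$ with $\delta\in(0,1)$. In the Klein group I use ${\rm C}+{\rm G}={\rm T}$.

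\emph{Tree case.} For the 3-star tree with leaf-edge parameters $a,b,c$, equation~(\ref{eqn:groupparam}) gives $q_{x_1x_2x_3}=a_{x_1}b_{x_2}c_{x_3}$. Substituting into $Q$, both terms equal $a_{\rm G}b_{\rm G}c_{\rm G}^2a_{\rm C}^2b_{\rm C}^2$; the second term uses $a_{\rm T}=a_{\rm C}$ in $q_{\rm TCG}$. Hence $Q\equiv 0$ on $\mathcal{M}_1$.

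\emph{Sunlet case.} The polynomial $Q$ is not symmetric in the leaves, so I must treat separately each of the three possible positions of the reticulation leaf in the 3-sunlet. For each position I use the parameterization of Example~\ref{ex:3sunlet}, relabelled accordingly. When the reticulation leaf carries state ${\rm A}$, Lemma~\ref{lem:paramlemma} makes the coordinate a monomial. All other coordinates have the form (monomial)$\times$(binomial mixture).

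Take the reticulation at leaf $3$ and write $u=\delta d_{\rm G}$, $v=(1-\delta)e_{\rm G}$, $f=f_{\rm G}$. The coordinates are:
\begin{itemize}
\item $q_{\rm AGG}=b_{\rm G}c_{\rm G}(uf+v)$;
\item $q_{\rm GAG}=a_{\rm G}c_{\rm G}(u+vf)$;
\item $q_{\rm CCA}=a_{\rm C}b_{\rm C}f_{\rm C}$;
\item $q_{\rm AAA}=1$;
\item $q_{\rm GGA}=a_{\rm G}b_{\rm G}f$;
\item $q_{\rm TCG}=a_{\rm C}b_{\rm C}c_{\rm G}f_{\rm C}(u+v)$, using $f_{\rm T}=f_{\rm C}$.
\end{itemize}
Factoring out the positive monomial $a_{\rm G}b_{\rm G}c_{\rm G}^2a_{\rm C}^2b_{\rm C}^2f_{\rm C}^2$ leaves
\[
(uf+v)(u+vf)-f(u+v)^2 = uv(1-f)^2 ,
\]
which is strictly positive because $u,v>0$ and $f\in(0,1)$.

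The same procedure applies with the reticulation at leaf $1$ and then at leaf $2$. In each case I factor out the common positive monomial (again using $a_{\rm C}=a_{\rm T}$ on every edge) and expect the residual to collapse to a product of the form $uv(1-f)^2$ for suitable edge parameters. If instead a residual appears as a square-like difference, I would try to rewrite it as a sum of nonnegative terms, at least one of them strictly positive on $\Theta_0$.

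The main obstacle is these two remaining placements. There the reticulation leaf carries state ${\rm G}$ or ${\rm C}/{\rm T}$ in several coordinates, so more of the factors become mixtures, and positivity of the residual is less transparent. Once $Q>0$ on $\mathcal{M}_2$ is established for all three placements, the conclusion $\mathcal{M}_1\cap\mathcal{M}_2=\emptyset$ is immediate, since $Q$ vanishes identically on $\mathcal{M}_1$.
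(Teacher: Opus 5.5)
Your tree computation and your computation for the sunlet with the reticulation adjacent to leaf $3$ are correct. They coincide with the paper's proof, which labels the sunlet as in Figure~\ref{fig:3sunlet} and obtains $Q=a_{\rm C}^2a_{\rm G}b_{\rm C}^2b_{\rm G}c_{\rm G}^2d_{\rm G}e_{\rm G}f_{\rm C}^2\,\delta(1-\delta)(1-f_{\rm G})^2$. That is your $uv(1-f)^2$ times the pendant monomial, and the paper treats no other placement.

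\textbf{The gap.} The step you leave open cannot be completed as you expect. For the other two placements $Q$ is not sign-definite on $\Theta_0$, so no factorization of the form $uv(1-f)^2$ and no rewriting as a sum of nonnegative terms can exist. Keep the labels of Figure~\ref{fig:3sunlet}, but put the reticulation at the vertex of leaf $1$, with reticulation edges $d$ and $f$ and weight $\delta$ on keeping $d$. Then $q_{x_1x_2x_3}=a_{x_1}b_{x_2}c_{x_3}\bigl(\delta d_{x_1}e_{x_2}+(1-\delta)f_{x_1}e_{x_3}\bigr)$. Set $U=(1-\delta)f_{\rm G}$, $V=\delta d_{\rm G}$, $X=(1-\delta)f_{\rm C}$, $Y=\delta d_{\rm C}e_{\rm C}$ and $g=e_{\rm G}$. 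Then
\[
\frac{Q}{a_{\rm C}^2a_{\rm G}b_{\rm C}^2b_{\rm G}c_{\rm G}^2}
= g(gU+V)(X+Y)^2-(U+gV)(gX+Y)^2
= (1-g)\Bigl[(1+g)\bigl(gVX^2-UY^2\bigr)+2gXY(V-U)\Bigr].
\]
Take $\delta=\tfrac12$ and all triangle parameters equal to $\tfrac12$, except $d_{\rm G}=f_{\rm C}=0.01$; this gives about $-0.003$. If instead $f_{\rm G}=e_{\rm C}=0.01$ (others $\tfrac12$), it gives about $+0.006$. Both points satisfy the K2P constraint $2a^e_{\rm C}\le 1+a^e_{\rm G}$, which is nonnegativity of the transition probabilities, so they lie in $\Theta_0$. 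Since $\Theta_0$ is connected, $Q$ even vanishes somewhere on this sunlet's model. The leaf-$2$ placement behaves the same way: on K2P models $q_{\rm CTG}=q_{\rm TCG}$, so $Q$ is invariant under swapping leaves $1$ and $2$. Hence the strict positivity of this particular $Q$ holds only for the placement of Figure~\ref{fig:3sunlet}, which is the case the paper's proof covers.

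\textbf{The fix.} Your final step should be a relabelling argument, not more algebra. The 3-star tree model is invariant under permuting leaf positions. So if the reticulation is adjacent to leaf $i\in\{1,2\}$, apply the transposition $(i\,3)$ to the index positions of $Q$. The resulting polynomial still vanishes on $\mathcal{M}_1$. Relabelling your leaf-$3$ computation shows it is strictly positive on $\mathcal{M}_2$. This gives $\mathcal{M}_1\cap\mathcal{M}_2=\emptyset$ for every 3-sunlet, and disjointness is all that the later uses of the lemma (e.g.\ Lemma~\ref{lem:quarnettopologies} and Theorem~\ref{thm:level1identifiability}) require.
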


The proof of Lemma \ref{lem:trinetk2p} is analogous to the proof of \cite[Proposition~2.15]{englander2025identifiability} and given in the appendix. For the K3P model we could not find a single invariant like $Q$ that distinguishes between the two models. However, we can find several invariants for the 3-star tree that cannot be simultaneously zero for a point on the 3-sunlet model under the restricted parameters.

\begin{lemma}\label{lem:trinetk3p}
    Let $\mathcal{N}_1$ be the 3-star unrooted tree and $\mathcal{N}_2$ be a 3-sunlet semi-directed network, and let $\mathcal{M}_1$ and $\mathcal{M}_2$ be the associated models under the K3P substitution model with parameter values in $\Theta_0(\mathcal{N}_1)$ and $\Theta_0(\mathcal{N}_2)$ respectively. Then $\mathcal{M}_1\cap\mathcal{M}_2=\emptyset$.
\end{lemma}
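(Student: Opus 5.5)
We plan to work in Fourier coordinates, using the explicit parameterization of Example~\ref{ex:3sunlet}. On $\Theta_0$ every edge has $a^e_{\rm A}=1$, so the 3-star tree has the monomial parameterization
$$q_{x_1x_2x_3}=a_{x_1}b_{x_2}c_{x_3}\qquad (x_1+x_2+x_3=0).$$
From this we can read off a family of binomial invariants for $\mathcal{M}_1$. Write $x,y,z$ for the three non-identity elements ${\rm C},{\rm G},{\rm T}$ of $G$, so that $x+y+z=0$. The invariants are of the form
$$Q_{x,y,z}=q_{{\rm A}yy}\,q_{x{\rm A}x}\,q_{xyz}\,q_{{\rm AAA}}\;-\;\text{(a matching monomial)}.$$
The K2P polynomial $Q$ of Lemma~\ref{lem:trinetk2p} is one instance of this pattern, and we expect its natural K3P analogues, obtained by letting $(x,y,z)$ range over the permutations of $({\rm C},{\rm G},{\rm T})$, to vanish on $\mathcal{M}_1$. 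We would first fix this finite list of binomials $Q_1,\ldots,Q_s$ and check directly from the monomial form that each vanishes on $\mathcal{M}_1$.

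Next we evaluate each $Q_i$ on the 3-sunlet, with the reticulation placed above leaf $3$ as in Example~\ref{ex:3sunlet}. There $q_{x_1x_2x_3}=a_{x_1}b_{x_2}c_{x_3}\,h(x_1,x_2,x_3)$, where
$$h(x_1,x_2,x_3)=\delta d_{x_3}f_{x_2}+(1-\delta)e_{x_3}f_{x_1},$$
using $x_2+x_3=x_1$. Since each $Q_i$ is a binomial that vanishes on the tree, the leaf factors $a,b,c$ cancel identically between the two monomials. Hence each $Q_i$ equals a positive factor times a difference of products of the mixture terms $h$. Following the proof of \cite[Proposition~2.15]{englander2025identifiability} and Lemma~\ref{lem:trinetk2p}, we then expand this difference into a sum of terms of the form $\delta(1-\delta)(u-v)^2$ and products of quantities such as $(1-f_x)$, $(1-d_x)$, $(1-e_x)$, $(d_x-e_x)$. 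On $\Theta_0$ all eigenvalues lie in $(0,1)$ and $\delta\in(0,1)$, so we aim to show that each $Q_i\ge 0$ on $\mathcal{M}_2$. We would also determine exactly when equality holds; we expect this to happen only when certain ratios coincide, for instance $d_{x}f_{y}/(e_{x}f_{z})$ being equal for two different labellings.

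The final step is to suppose that a point lies in $\mathcal{M}_1\cap\mathcal{M}_2$. Then all $Q_i$ vanish simultaneously, so all the equality conditions hold at once. We would combine them to force a condition that is impossible on $\Theta_0$, such as $f_x=1$ for some $x\ne{\rm A}$, which would make $f$ the identity transition matrix. The condition should come out of the equality cases in a form like $f_x^2=f_yf_z\cdot(\text{ratio})$, which chains around the three non-identity elements to $f_x=1$. If this chaining does not close, we would add further tree invariants, such as ones involving $q_{{\rm A}xx}\,q_{x{\rm A}x}\,q_{xx{\rm A}}$ versus $q_{{\rm AAA}}\,q_{xyz}^2$-type terms, to pin down the remaining freedom.

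The main obstacle is this last combinatorial step. We need to select a set of invariants whose equality conditions are jointly incompatible with $\Theta_0$, and to verify the sign of each difference on $\Theta_0$. We would search for these invariants among the binomials of degree at most $4$, guided by the K2P case, and check the sign and equality analysis by hand, mirroring the argument for Lemma~\ref{lem:trinetk2p}.
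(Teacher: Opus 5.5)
Your overall architecture matches the paper's: take several star-tree binomials, evaluate them on the sunlet, and show they have no common zero on $\Theta_0$ by chaining relations among $f_{\rm C},f_{\rm G},f_{\rm T}$ until some $f^2=1$. However, the real content of the proof (which invariants, and why their zero sets on $\mathcal{M}_2$ are incompatible) is left open, and the concrete choices you do make fail.

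\emph{The invariants.} Your template $q_{{\rm A}yy}q_{x{\rm A}x}q_{xyz}q_{\rm AAA}$ has leaf multisets $\{{\rm A},{\rm A},x,x\}$, $\{{\rm A},{\rm A},y,y\}$, $\{{\rm A},x,y,z\}$. No other product of four consistent patterns realizes these multisets, so your $Q_{x,y,z}$ is identically zero. The K2P polynomial is not of this form. Moreover, neither it nor any relabeling of it is a K3P star-tree invariant: on the star tree it equals $a_{\rm G}b_{\rm G}b_{\rm C}^2c_{\rm G}^2(a_{\rm C}^2-a_{\rm T}^2)$.

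\emph{The sign step.} Proving $Q_i\ge 0$ via an expansion into terms $\delta(1-\delta)(u-v)^2$ is not available here. The paper's $Q_1=q_{\rm AAA}q_{\rm CTG}q_{\rm TGC}-q_{\rm AGG}q_{\rm CAC}q_{\rm TTA}$ equals a positive monomial times $\delta(1-\delta)(f_{\rm C}f_{\rm G}-f_{\rm T})(d_{\rm C}e_{\rm G}-d_{\rm G}e_{\rm C}f_{\rm T})$. The last factor changes sign as $d,e$ vary, so $Q_1$ takes both signs on $\mathcal{M}_2$. This is why no single separating invariant is found for K3P; the argument has to be about zero sets, not signs.

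What is missing is a family whose zero loci on $\mathcal{M}_2$ are jointly empty. The paper pairs four cubic binomials. $Q_1$ and $Q_2$ share the factor $f_{\rm C}f_{\rm G}-f_{\rm T}$, while their other factors $d_{\rm C}e_{\rm G}-d_{\rm G}e_{\rm C}f_{\rm T}$ and $d_{\rm G}e_{\rm C}-d_{\rm C}e_{\rm G}f_{\rm T}$ cannot vanish together, since that would give $d_{\rm C}e_{\rm G}(1-f_{\rm T}^2)=0$. This forces $f_{\rm C}f_{\rm G}=f_{\rm T}$. Likewise $Q_3,Q_4$ force $f_{\rm C}f_{\rm T}=f_{\rm G}$ or $f_{\rm G}f_{\rm T}=f_{\rm C}$. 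Either one, together with $f_{\rm C}f_{\rm G}=f_{\rm T}$, gives $f_{\rm C}^2=1$ or $f_{\rm G}^2=1$, which is impossible on $\Theta_0$.

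Your K2P-guided search can also be rescued, but only by replacing the squares with C/T products. The binomial $q_{\rm AGG}q_{\rm GAG}q_{\rm CCA}q_{\rm TTA}-q_{\rm AAA}q_{\rm GGA}q_{\rm TCG}q_{\rm CTG}$ is a K3P star-tree invariant. On the sunlet it equals a positive monomial times $\delta(1-\delta)d_{\rm G}e_{\rm G}(f_{\rm C}-f_{\rm G}f_{\rm T})(f_{\rm T}-f_{\rm G}f_{\rm C})$. Take it together with its images under swapping ${\rm G}$ with ${\rm C}$ and with ${\rm T}$. A common zero of the three then requires two of the relations $f_x=f_yf_z$, which again forces some $f^2=1$. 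Here too only the zero set is used, not a sign.
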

\begin{proof}
    Consider the polynomials
    \begin{align*}
        Q_1 &= q_{\rm AAA}q_{\rm CTG}q_{\rm TGC} - q_{\rm AGG}q_{\rm CAC}q_{\rm TTA},\\
        Q_2 &= q_{\rm AAA}q_{\rm GTC}q_{\rm TCG} - q_{\rm ACC}q_{\rm GAG}q_{\rm TTA}, \\
        Q_3 &= q_{\rm ATT}q_{\rm CCA}q_{\rm TGC} - q_{\rm ACC}q_{\rm CGT}q_{\rm TTA}, \\
        Q_4 &= q_{\rm ACC}q_{\rm GGA}q_{\rm TAT} - q_{\rm AAA}q_{\rm GCT}q_{\rm TGC}.
    \end{align*}

    The reader can check that each of these polynomials lies in $I_{\mathcal{N}_1}$ by looking at the multisets of labels for each leaf position, as in the proof of Lemma \ref{lem:trinetk2p} (see appendix). We will show that for any point $x\in\mathcal{M}_2$ we cannot have $Q_i(x)=0$ for all $i=1,2,3,4$, and therefore $x\not\in\mathcal{M}_1$.

    As in the proof of Lemma~\ref{lem:trinetk2p}, we label the edges of the 3-sunlet as in Figure~\ref{fig:3sunlet}. For each edge $a$, we have the associated parameters $a_{\rm A}, a_{\rm C}, a_{\rm G}$, and $a_{\rm T}$, which are subject to the constraint $a_{\rm A} = 1$. We substitute a generic point $x$ into $Q$ using the parameterization of the model, which is given by
    $$q_{x_1x_2x_3}=a_{x_1}b_{x_2}c_{x_3}(\delta d_{x_3}f_{x_2} + (1-\delta)e_{x_3}f_{x_2+x_3}),$$
    for $x_1, x_2, x_3 \in G$ with $x_1+x_2+x_3 = 0$. We have
    \begin{align*}
        Q_1 &= q_{\rm AAA}q_{\rm CTG}q_{\rm TGC} - q_{\rm AGG}q_{\rm CAC}q_{\rm TTA} \\
        &=a_{\rm C}a_{\rm T}b_{\rm G}b_{\rm T}c_{\rm C}c_{\rm G}\delta(1-\delta)(f_{\rm C}f_{\rm G}-f_{\rm T})(d_{\rm C}e_{\rm G} - d_{\rm G}e_{\rm C}f_{\rm T})
    \end{align*}
    and
    \begin{align*}
        Q_2 &= q_{\rm AAA}q_{\rm GTC}q_{\rm TCG} - q_{\rm ACC}q_{\rm GAG}q_{\rm TTA} \\
        &= a_{\rm G}a_{\rm T}b_{\rm C}b_{\rm T}c_{\rm C}c_{\rm G}\delta(1-\delta)(f_{\rm C}f_{\rm G}-f_{\rm T})(d_{\rm G}e_{\rm C}-d_{\rm C}e_{\rm G}f_{\rm T}).
    \end{align*}
    Since all parameters lie in $(0,1)$, we have $Q_1(x)=0$ if and only if either $f_{\rm C}f_{\rm G}-f_{\rm T} = 0$, or $d_{\rm C}e_{\rm G} - d_{\rm G}e_{\rm C}f_{\rm T} = 0$. On the other hand $Q_2(x) = 0$ if and only if either $f_{\rm C}f_{\rm G}-f_{\rm T} = 0$ or $d_{\rm G}e_{\rm C}-d_{\rm C}e_{\rm G}f_{\rm T} = 0$. Now if $f_{\rm C}f_{\rm G}-f_{\rm T} \neq 0$ then 
    \begin{align*}
        d_{\rm C}e_{\rm G} - d_{\rm G}e_{\rm C}f_{\rm T} &= 0\quad \text{ and }\\
        d_{\rm G}e_{\rm C} - d_{\rm C}e_{\rm G}f_{\rm T} & = 0.
    \end{align*}
    Then $d_{\rm C}e_{\rm G}(1-f_{\rm T}^2) = 0$, but this is impossible since all parameters lie in $(0,1)$, so we must have $f_{\rm C}f_{\rm G}-f_{\rm T} = 0$.

    Next consider $Q_3$ and $Q_4$. We have 
        \begin{align*}
        Q_3 &= q_{\rm ATT}q_{\rm CCA}q_{\rm TGC} - q_{\rm ACC}q_{\rm CGT}q_{\rm TTA} \\
        &=a_{\rm C}a_{\rm T}b_{\rm C}b_{\rm G}b_{\rm T}c_{\rm C}c_{\rm T}\delta(1-\delta)(f_{\rm C}f_{\rm T}-f_{\rm G})(d_{\rm T}e_{\rm C}f_{\rm T}-d_{\rm C}e_{\rm T}f_{\rm C})
    \end{align*}
    and
    \begin{align*}
        Q_4 &= q_{\rm ACC}q_{\rm GGA}q_{\rm TAT} - q_{\rm AAA}q_{\rm GCT}q_{\rm TGC} \\
        &= a_{\rm G}a_{\rm T}b_{\rm C}b_{\rm G}c_{\rm C}c_{\rm T}\delta(1-\delta)(f_{\rm G}-f_{\rm C}f_{\rm T})(d_{\rm T}e_{\rm C}-d_{\rm C}e_{\rm T}f_{\rm G}).
    \end{align*}
    As before we have $Q_3(x)=0$ if and only if $f_{\rm C}f_{\rm T}-f_{\rm G}=0$ or $d_{\rm T}e_{\rm C}f_{\rm T}-d_{\rm C}e_{\rm T}f_{\rm C}=0$. Similarly $Q_4(x) = 0$ if and only if $f_{\rm C}f_{\rm T}-f_{\rm G}=0$ or $d_{\rm T}e_{\rm C}-d_{\rm C}e_{\rm T}f_{\rm G} = 0$. 

    Now if $f_{\rm C}f_{\rm T}-f_{\rm G}=0$, since we also have $f_{\rm C}f_{\rm G}-f_{\rm T} = 0$ we obtain $f_{\rm G}(f_{\rm C}^2 - 1) = 0$. But this is not possible since $f_{\rm C},f_{\rm G}\in(0,1)$. Then we must have $d_{\rm T}e_{\rm C}f_{\rm T}-d_{\rm C}e_{\rm T}f_{\rm C}=0$ and $d_{\rm T}e_{\rm C}-d_{\rm C}e_{\rm T}f_{\rm G} = 0$. But then $d_{\rm C}e_{\rm T}(f_{\rm G}f_{\rm T}-f_{\rm C}) = 0$ so $f_{\rm G}f_{\rm T}-f_{\rm C} = 0$. But combining this with $f_{\rm C}f_{\rm G}-f_{\rm T} = 0$ we get $f_{\rm C}(f_{\rm G}^2 - 1) = 0$. Again, this is impossible. Thus there are no points in $\mathcal{M}_2$ that simultaneously satisfy $Q_1, Q_2, Q_3$ and $Q_4$. It follows that $\mathcal{M}_1\cap\mathcal{M}_2=\emptyset$.
\end{proof}

\subsection{Level-1 Quarnets}\label{sec:quarnets}
In this section we determine identifiability results for level-1 semi-directed phylogenetic networks with four leaves, or \emph{level-1 quarnets}. Our first result shows that, for the K3P model on the parameter space $\Theta_0$, the three unrooted quartet trees are identifiable, extending \cite[Proposition~2.9]{englander2025identifiability}.

\begin{lemma}\label{lem:k3pquartets}
    Let $\mathcal{T}_1, \mathcal{T}_2$, and $\mathcal{T}_3$ be the three unrooted quartet trees with splits $12|34, 13|24$, and $14|23$ respectively. Let $\mathcal{M}_1, \mathcal{M}_2$, and $\mathcal{M}_3$ be the corresponding models under the K3P substitution model on the parameter sets $\Theta_0(\mathcal{N}_1)$, $\Theta_0(\mathcal{N}_2)$ and $\Theta_0(\mathcal{N}_3)$ respectively. Then the polynomial
    $$Q_{12|34}=q_{\rm AAAA}q_{\rm TTTT} - q_{\rm AATT}q_{\rm TTAA}$$
    evaluates to zero on $\mathcal{M}_1$ and is strictly positive on $\mathcal{M}_2$ and $\mathcal{M}_3$.
\end{lemma}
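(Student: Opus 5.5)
The plan is a direct computation in Fourier coordinates using the monomial parameterization~(\ref{eqn:groupparam}). For each quartet tree I will label the pendant edges by $a,b,c,d$ (incident to leaves $1,2,3,4$) and the internal edge by $e$, and impose the $\Theta_0$ constraint $a^e_{\rm A}=1$ for every edge. All four leaf-patterns in $Q_{12|34}$ are consistent labellings, since ${\rm T}+{\rm T}={\rm A}$ in the Klein group. Each Fourier coordinate is then the product of pendant-edge parameters times $e_{x_e}$, where $x_e$ is the sum of the states on one side of the internal edge.

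For $\mathcal{T}_1$ (split $12|34$) the internal edge label is $x_1+x_2$, and it equals ${\rm A}$ in all four patterns ${\rm AAAA},{\rm TTTT},{\rm AATT},{\rm TTAA}$. So
$$q_{\rm AAAA}=1,\quad q_{\rm TTTT}=a_{\rm T}b_{\rm T}c_{\rm T}d_{\rm T},\quad q_{\rm AATT}=c_{\rm T}d_{\rm T},\quad q_{\rm TTAA}=a_{\rm T}b_{\rm T}.$$
It follows that $Q_{12|34}$ vanishes identically on $\mathcal{M}_1$. In fact it vanishes for all complex parameters, so $Q_{12|34}$ lies in the ideal of $\mathcal{T}_1$.

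For $\mathcal{T}_2$ (split $13|24$) the internal edge label is $x_1+x_3$. This is ${\rm A}$ for ${\rm AAAA}$ and ${\rm TTTT}$, and ${\rm T}$ for ${\rm AATT}$ and ${\rm TTAA}$. Hence
$$Q_{12|34}=a_{\rm T}b_{\rm T}c_{\rm T}d_{\rm T}\,\bigl(1-e_{\rm T}^2\bigr).$$
On $\Theta_0$ every factor lies in $(0,1)$, so $1-e_{\rm T}^2>0$ and $Q_{12|34}>0$ on $\mathcal{M}_2$. The case of $\mathcal{T}_3$ (split $14|23$, internal label $x_1+x_4$) is identical: the same two patterns pick up $e_{\rm T}$, giving the same factorization and the same positivity conclusion.

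There is no real obstacle beyond tracking the edge labels correctly. The one point needing care is that positivity uses the strict bounds $0<a^e_x<1$ from $\Theta_0$: without $e_{\rm T}\neq 1$, that is, without excluding a zero-length internal edge, the polynomial could vanish on $\mathcal{M}_2$. By symmetry (relabelling the leaves), the analogous polynomials $Q_{13|24}$ and $Q_{14|23}$ separate the other trees. This gives pairwise disjointness of $\mathcal{M}_1,\mathcal{M}_2,\mathcal{M}_3$.
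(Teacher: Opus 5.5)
Your proposal is correct and follows the paper's proof essentially verbatim. In both, you substitute the monomial Fourier parameterization and note that the internal-edge label is ${\rm A}$ in all four patterns for $\mathcal{T}_1$, so $Q_{12|34}=0$; for $\mathcal{T}_2$ and $\mathcal{T}_3$ the patterns ${\rm AATT}$ and ${\rm TTAA}$ pick up $e_{\rm T}$, giving $a_{\rm T}b_{\rm T}c_{\rm T}d_{\rm T}(1-e_{\rm T}^2)>0$ on $\Theta_0$.
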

\begin{proof}
We substitute into $Q_{12|34}$ generic points $x$ using the parameterizations of the models. For $\mathcal{T}_1$ we have 
$$q_{x_1x_2x_3x_4} = a^1_{x_1}a^2_{x_2}a^3_{x_3}a^4_{x_4}a^5_{x_1 + x_2},$$
giving
\begin{align*}
    Q_{12|34}(x) &= (a^1_{\rm A}a^2_{\rm A}a^3_{\rm A}a^4_{\rm A}a^5_{\rm A})(a^1_{\rm T}a^2_{\rm T}a^3_{\rm T}a^4_{\rm T}a^5_{\rm A}) - (a^1_{\rm A}a^2_{\rm A}a^3_{\rm T}a^4_{\rm T}a^5_{\rm A})(a^1_{\rm T}a^2_{\rm T}a^3_{\rm A}a^4_{\rm A}a^5_{\rm A}) \\
    &= a^1_{\rm T}a^2_{\rm T}a^3_{\rm T}a^4_{\rm T} - a^1_{\rm T}a^2_{\rm T}a^3_{\rm T}a^4_{\rm T} \\
    & = 0,
\end{align*}
where we make the usual identification $a^i_{\rm A}=1$. On the other hand, for $\mathcal{T}_2$ we have
$$q_{g_1g_2g_3g_4} = a^1_{g_1}a^2_{g_2}a^3_{g_3}a^4_{g_4}a^5_{g_1 + g_3},$$
giving
\begin{align*}
    Q_{12|34}(x) &= (a^1_{\rm A}a^2_{\rm A}a^3_{\rm A}a^4_{\rm A}a^5_{\rm A})(a^1_{\rm T}a^2_{\rm T}a^3_{\rm T}a^4_{\rm T}a^5_{\rm A}) - (a^1_{\rm A}a^2_{\rm A}a^3_{\rm T}a^4_{\rm T}a^5_{\rm T})(a^1_{\rm T}a^2_{\rm T}a^3_{\rm A}a^4_{\rm A}a^5_{\rm T}) \\
    &=a^1_{\rm T}a^2_{\rm T}a^3_{\rm T}a^4_{\rm T}(1-(a^5_{\rm T})^2).
\end{align*}
Since $a^i_{\rm T} \in (0,1)$, we have that $Q_{12|34}(x)$ is strictly positive. Similarly, $Q_{12|34}$ is strictly positive on $\mathcal{M}_3$.
\end{proof}
\begin{remark}\label{rem:quartetpermutation}
    By permuting the leaf labels we can obtain invariants for $\mathcal{T}_2$ and $\mathcal{T}_3$ that are strictly positive on the models of the other quartet trees. We have
    $$Q_{13|24}=q_{\rm AAAA}q_{\rm TTTT} - q_{\rm ATAT}q_{\rm TATA}$$
    and
    $$Q_{14|23}=q_{\rm AAAA}q_{\rm TTTT} - q_{\rm ATTA}q_{\rm TAAT}.$$
    Thus we have that $\mathcal{M}_i\cap\mathcal{M}_j=\emptyset$ for all $i,j\in\{1,2,3\}$ with $i\neq j$.
\end{remark}

In \cite{englander2025identifiability} the authors use invariants for the quartet trees to distinguish between networks with different displayed quartets. The quartet tree invariants they found are linear, and so when applied to points in $\mathcal{M}_{\mathcal{N}}$, for some network $\mathcal{N}$ with 4 leaves, they apply linearly across the mixture of points from the displayed quartet models, and are therefore able to pick out the displayed quartets. Here however, our K3P invariants are quadratic (there are no linear invariants for K3P, because it is a general group-based model for $\mathbb{Z}/2\mathbb{Z}\times\mathbb{Z}/2\mathbb{Z}$), and so when applied to mixtures of points from the quartet tree models we get cross-terms that do not cancel. Thus, we cannot use them to pick out displayed quartets.

\begin{figure}[h!]
\centering
    \begin{subfigure}{0.23\linewidth}
        \centering
        \begin{tikzpicture}[scale = .4]

        \node[shape=circle,draw=black,scale=0.3] (A) at (0,0) {};
        \node[shape=circle,draw=black,scale=0.3] (B) at (0,3) {};
        \node[shape=circle,draw=black,scale=0.3,fill=black] (1) at (2,5) {};
        \node[shape=circle,draw=black,scale=0.3,fill=black] (2) at (-2,5) {};
        \node[shape=circle,draw=black,scale=0.3,fill=black] (3) at (2,-2) {};
        \node[shape=circle,draw=black,scale=0.3,fill=black] (4) at (-2,-2) {};
        \draw (A)--(B);
        \draw (1)--(B);
        \draw (2)--(B);
        \draw (3)--(A);
        \draw (4)--(A);
        
        \end{tikzpicture}
    \end{subfigure}
    \begin{subfigure}{0.23\linewidth}
        \centering
        \begin{tikzpicture}[scale = .35]
        
        \node[shape=circle,draw=black,scale=0.3] (A) at (0,0) {};
        \node[shape=circle,draw=black,scale=0.3] (B) at (0,3) {};
        \node[shape=circle,draw=black,scale=0.3] (C) at (1.8,5) {};
        \node[shape=circle,draw=black,scale=0.3] (D) at (-1.8,5) {};
        \node[shape=circle,draw=black,scale=0.3,fill=black] (1) at (3,6.5) {};
        \node[shape=circle,draw=black,scale=0.3,fill=black] (2) at (-3,6.5) {}; 
        \node[shape=circle,draw=black,scale=0.3,fill=black] (3) at (2,-2) {};
        \node[shape=circle,draw=black,scale=0.3,fill=black] (4) at (-2,-2) {};  
        
        \draw (A)--(B);
        \draw (B)--(C);
        \draw (B)--(D);
        \draw (D)--(C);
        \draw (C)--(1);
        \draw (D)--(2);
        \draw (A)--(3);
        \draw (A)--(4);

        \end{tikzpicture}      
    \end{subfigure}
    \begin{subfigure}{0.23\linewidth}
        \centering
        \begin{tikzpicture}[scale = .3]
        
        \node[shape=circle,draw=black,scale=0.3] (A) at (0,0) {};
        \node[shape=circle,draw=black,scale=0.3] (B) at (0,3) {};
        \node[shape=circle,draw=black,scale=0.3] (C) at (1.8,5) {};
        \node[shape=circle,draw=black,scale=0.3] (D) at (-1.8,5) {};
        \node[shape=circle,draw=black,scale=0.3] (E) at (1.8,-2) {};
        \node[shape=circle,draw=black,scale=0.3] (F) at (-1.8,-2) {};
        \node[shape=circle,draw=black,scale=0.3,fill=black] (1) at (3,6.5) {};
        \node[shape=circle,draw=black,scale=0.3,fill=black] (2) at (-3,6.5) {}; 
        \node[shape=circle,draw=black,scale=0.3,fill=black] (3) at (3,-3.5) {};
        \node[shape=circle,draw=black,scale=0.3,fill=black] (4) at (-3,-3.5) {};  
        
        \draw (A)--(B);
        \draw (B)--(C);
        \draw (B)--(D);
        \draw (D)--(C);
        \draw (C)--(1);
        \draw (D)--(2);
        
        \draw (A)--(E);
        \draw (A)--(F);
        \draw (E)--(F);       
        \draw (E)--(3);
        \draw (F)--(4);
        
        \end{tikzpicture}      
    \end{subfigure}
    \begin{subfigure}{0.23\linewidth}
        \centering
        \begin{tikzpicture}[scale = .7]

        \node[shape=circle,draw=black,scale=0.3] (A) at (0,4) {};
        \node[shape=circle,draw=black,scale=0.3] (B) at (1,3) {};
        \node[shape=circle,draw=black,scale=0.3] (C) at (-1,3) {};
        \node[shape=circle,draw=black,scale=0.3] (D) at (0,2) {};
        \node[shape=circle,draw=black,scale=0.3,fill=black] (1) at (0,5) {};
        \node[shape=circle,draw=black,scale=0.3,fill=black] (2) at (2,3) {}; 
        \node[shape=circle,draw=black,scale=0.3,fill=black] (3) at (-2,3) {};
        \node[shape=circle,draw=black,scale=0.3,fill=black] (4) at (0,1) {};

        \draw (1)--(A);
        \draw (A)--(B);
        \draw (A)--(C);
        \draw [dashed,->] (B)--(D);
        \draw [dashed,->] (C)--(D);
        \draw (B)--(2);
        \draw (C)--(3);
        \draw (D)--(4);
        
        \end{tikzpicture}
    \end{subfigure}
    
    \caption{The four level-1 quarnet topologies. From left to right: a quartet tree, a single triangle, a double triangle, a 4-sunlet. Note that reticulation edges are not identifiable in triangles. In the 4-sunlet, reticulation edges are dashed. Only the 4-sunlet does not have a non-trivial split.}
    \label{fig:4SunletAndTrees}
\end{figure}

Given a quarnet, we are able to determine its topology (i.e. the underlying graph without leaf labels) by considering its induced trinets. We have four distinct level-1 quarnet topologies: the quartet tree, the quartet with a single triangle, the quartet with a double triangle, and the 4-sunlet (see Figure~\ref{fig:4SunletAndTrees}). In each case we have 4 ways of restricting to 3 leaves. The topologies of the induced trinets can be determined using the trinet inequalities from Section \ref{sec:trinet}, and each level-1 quarnet has a different induced trinet signature (see Table~\ref{tab:trinets}).

\begin{lemma}\label{lem:quarnettopologies}
    Let $\mathcal{N}_1$ and $\mathcal{N}_2$ be level-1 quarnets with different topologies, and let $\mathcal{M}_1$ and $\mathcal{M}_2$ be the corresponding models on the parameter sets $\Theta_0(\mathcal{N}_1)$ and $\Theta_0(\mathcal{N}_2)$ respectively, for either JC, K2P, or K3P. Then $\mathcal{M}_1\cap\mathcal{M}_2=\emptyset$.
\end{lemma}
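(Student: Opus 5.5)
The plan is to reduce the comparison of two quarnets to comparisons of their induced trinets, using Theorem~\ref{thm:restriction_network} together with the trinet inequalities of Section~\ref{sec:trinet}. For a level-1 quarnet $\mathcal{N}$ and each leaf $i\in[4]$, the restriction $\mathcal{N}|_{[4]\setminus\{i\}}$ is a level-1 trinet. After suppressing degree-2 vertices and parallel edges (which does not change the model, since group-based models are multiplicatively closed and closed under convex combinations), every such trinet is either the 3-star tree or a 3-sunlet. The induced trinet signature of $\mathcal{N}$ is the number of its four restrictions that are 3-sunlets (equivalently, the multiset of trinet types). Table~\ref{tab:trinets} records these counts for the four topologies.

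\textbf{Step 1: compute the signatures.} I would work out the count for each topology.
\begin{itemize}
\item For the quartet tree, all four restrictions are 3-star trees, so the count is 0.
\item For the single triangle, say on the side of leaves $1,2$, removing leaf $3$ or leaf $4$ keeps the triangle intact and gives a 3-sunlet. Removing leaf $1$ or leaf $2$ collapses the triangle to parallel edges and gives a star. The count is 2.
\item For the double triangle, removing any leaf destroys one triangle and keeps the other, so the count is 4.
\item For the 4-sunlet with reticulation adjacent to leaf $4$, removing leaf $4$ collapses the cycle to a star. Removing any other leaf leaves a 3-cycle that still contains the reticulation vertex, which is a 3-sunlet. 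The count is 3.
\end{itemize}
The counts $0,2,4,3$ are pairwise distinct. I would check each restriction directly from the definition via up-down paths. The delicate cases are removing a leaf adjacent to the reticulation and removing a leaf whose neighbour lies on the cycle, where a degree-2 vertex is suppressed.

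\textbf{Step 2: separate two topologies.} Take quarnets $\mathcal{N}_1,\mathcal{N}_2$ with different topologies; their leaf labellings are arbitrary. Suppose $\mathbf{p}\in\mathcal{M}_1\cap\mathcal{M}_2$. For every $i$, Theorem~\ref{thm:restriction_network} (via its proof) gives
\[
\mathrm{m}_i(\mathbf{p})\in\mathcal{M}_{\mathcal{N}_1|_{[4]\setminus\{i\}}}\cap\mathcal{M}_{\mathcal{N}_2|_{[4]\setminus\{i\}}}.
\]
Since the two counts differ, there is some leaf $i$ for which one restriction is a 3-star tree and the other is a 3-sunlet. For that $i$, the two trinet models are disjoint: by \cite[Proposition~2.15]{englander2025identifiability} under JC, by Lemma~\ref{lem:trinetk2p} under K2P, and by Lemma~\ref{lem:trinetk3p} under K3P. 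This contradiction proves the lemma.

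\textbf{Checks and expected obstacle.} Two technical points need verifying.
\begin{itemize}
\item The marginalized parameters must land in $\Theta_0$ of the restricted network. The mixing weights stay in $(0,1)$. Products of transition matrices with eigenvalues in $(0,1)$ keep eigenvalues in $(0,1)$. Convex combinations of parallel-edge matrices, with weights in $(0,1)$ and eigenvalues in $(0,1)$, do as well.
\item A triangle that collapses to parallel edges must be replaced by a single edge whose parameters are still in $\Theta_0$. This holds because the mixed parameters $\delta x+(1-\delta)y$ remain in $(0,1)$.
\end{itemize}
I expect the main obstacle to be this parameter-space bookkeeping, rather than the combinatorial signature computation.
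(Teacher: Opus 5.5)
Your proposal is correct and follows the paper's proof. The paper likewise uses the induced-trinet counts in Table~\ref{tab:trinets} to find a 3-subset $S$ on which one quarnet restricts to a 3-star tree and the other to a 3-sunlet, applies the relevant trinet inequality (\cite[Proposition~2.15]{englander2025identifiability}, Lemma~\ref{lem:trinetk2p}, or Lemma~\ref{lem:trinetk3p}), and concludes with Theorem~\ref{thm:restriction_network}. Your explicit computation of the counts, and your observation that different sunlet counts force the sunlet-producing leaf sets to differ, fill in details the paper leaves implicit in the table.
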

\begin{proof}
    By Table~\ref{tab:trinets}, there exists a subset of 3 taxa $S$ for which $\mathcal{N}_1|_S$ and $\mathcal{N}_2|_S$ are a 3-star tree and a 3-sunlet (possibly not in that order). By the appropriate trinet inequality (\cite[Proposition~2.15]{englander2025identifiability}, Lemma~\ref{lem:trinetk2p}, or Lemma~\ref{lem:trinetk3p}) we have $\mathcal{M}_1|_S\cap\mathcal{M}_2|_S=\emptyset$. Then the result follows by Theorem~\ref{thm:restriction_network}.
\end{proof}

\begin{table}[h!]
    \centering
    \begin{tabular}{c|c|c|c|c}
        & Quartet Tree & Single Triangle & Double Triangle & 4-Sunlet \\
        \hline
    3-Star     & 4 & 2 & 0 & 1  \\
    3-Sunlet   & 0 & 2 & 4 & 3
    \end{tabular}
    \caption{The number of each type of induced trinet from the four level-1 quarnet topologies. Note that each column is distinct.}
    \label{tab:trinets}
\end{table}

Observe that the only level-1 quarnet without a non-trivial split is the 4-sunlet. Then by Lemma~\ref{lem:quarnettopologies}, we are able to determine whether or not a level-1 quarnet has a non-trivial split. By \cite[Proposition~2.9]{englander2025identifiability} and Lemma~\ref{lem:k3pquartets} we are able to determine the non-trivial split in a quartet tree. In our next result, we show that we can also determine the non-trivial split in a single or double triangle quarnet.

\begin{lemma}\label{lem:k3psplit}
    Let $\mathcal{N}_1$ be a level-1 quarnet with split $12|34$, and let $\mathcal{N}_2$ be a level-1 quarnet without the split $12|34$. Let $\mathcal{M}_1$ and $\mathcal{M}_2$ be the corresponding models for either the JC, K2P, or K3P models on the parameter sets $\Theta_0(\mathcal{N}_1)$ and $\Theta_0(\mathcal{N}_2)$ respectively. Then $\mathcal{M}_1\cap\mathcal{M}_2=\emptyset$.
\end{lemma}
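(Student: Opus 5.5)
The plan is to reduce to the case of equal topologies using Lemma~\ref{lem:quarnettopologies}, and then to treat each topology separately. If $\mathcal{N}_1$ and $\mathcal{N}_2$ have different topologies, Lemma~\ref{lem:quarnettopologies} already gives $\mathcal{M}_1\cap\mathcal{M}_2=\emptyset$. If both are 4-sunlets, there is nothing to prove, because $\mathcal{N}_1$ has the split $12|34$ and a 4-sunlet has no non-trivial split. So assume both are quartet trees, both are single-triangle quarnets, or both are double-triangle quarnets. In each case $\mathcal{N}_2$ has one of the splits $13|24$ or $14|23$.

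\emph{Quartet trees.} This case is \cite[Proposition~2.9]{englander2025identifiability} for JC and K2P, and Lemma~\ref{lem:k3pquartets} together with Remark~\ref{rem:quartetpermutation} for K3P.

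\emph{Single triangles.} In a single-triangle quarnet with split $ij|kl$, the triangle carries either the pair $\{i,j\}$ or the pair $\{k,l\}$; by symmetry suppose it carries $\{i,j\}$. Then the restrictions to $\{i,j,k\}$ and $\{i,j,l\}$ are 3-sunlets, and the restrictions to $\{i,k,l\}$ and $\{j,k,l\}$ are 3-star trees. Hence the pair on the triangle is the intersection of the two 3-taxon sets on which the quarnet restricts to a 3-sunlet, and that pair determines the split. If $\mathcal{N}_1$ and $\mathcal{N}_2$ have different splits, their triangle pairs differ, so the sunlet triples differ. There is then a 3-set $S$ on which one network restricts to a 3-star tree and the other to a 3-sunlet. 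The appropriate trinet inequality (\cite[Proposition~2.15]{englander2025identifiability}, Lemma~\ref{lem:trinetk2p}, or Lemma~\ref{lem:trinetk3p}) separates the two restricted models, and Theorem~\ref{thm:restriction_network} lifts this to $\mathcal{M}_1\cap\mathcal{M}_2=\emptyset$.

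\emph{Double triangles.} Restriction does not help here, since every induced trinet is a 3-sunlet. Instead I will use the quadratic invariant $Q_{12|34}$ of Lemma~\ref{lem:k3pquartets}, and its permuted versions. For JC and K2P the letter ${\rm T}$ in the invariant is still a valid choice, since the parameters are just specialised. The key structural point is the following. For a double triangle with split $ij|kl$ and central edge $c$, Example~\ref{ex:3sunlet} applied to each triangle shows that the Fourier coordinates factor as
\[
q_{x_1x_2x_3x_4}=F(x_i,x_j)\,c_{x_i+x_j}\,H(x_k,x_l).
\]
Here $F$ and $H$ are the mixture factors of the two triangles, with the pendant-edge parameters absorbed. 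Because the mixing happens independently in the two triangles, the mixture stays a product and no cross-terms appear.

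For $\mathcal{N}_1$ (split $12|34$), substituting this factorization into $Q_{12|34}$ gives zero, because $c_{\rm A}=1$ and the factors pair off exactly as in the tree case. For $\mathcal{N}_2$ with split $13|24$, substitution gives
\[
Q_{12|34}=F({\rm A},{\rm A})F({\rm T},{\rm T})H({\rm A},{\rm A})H({\rm T},{\rm T})-c_{\rm T}^2\,F({\rm A},{\rm T})F({\rm T},{\rm A})H({\rm A},{\rm T})H({\rm T},{\rm A}).
\]
Here $F$ and $H$ are now the factors of the triangles on $\{1,3\}$ and $\{2,4\}$, and the case $14|23$ is symmetric. Since $c_{\rm T}\in(0,1)$, it suffices to prove the triangle inequality
\[
F({\rm A},{\rm A})F({\rm T},{\rm T})\ge F({\rm A},{\rm T})F({\rm T},{\rm A})>0,
\]
and the same for $H$. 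Then $Q_{12|34}$ is strictly positive on $\mathcal{M}_2$ and zero on $\mathcal{M}_1$.

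I expect this triangle inequality to be the main obstacle. The plan for proving it is as follows.
\begin{enumerate}
\item Write $F$ explicitly from Example~\ref{ex:3sunlet}, placing the reticulation at each of the three triangle vertices in turn, since reticulation edges in triangles are not identifiable.
\item Use Lemma~\ref{lem:paramlemma} to simplify the ${\rm A}$-coordinates.
\item Observe that the resulting expression has the form $\delta u+(1-\delta)v$ with all parameters in $(0,1)$.
\item Show the difference is non-negative by a direct expansion, in the same style as the factorizations of $Q_1,\dots,Q_4$ in Lemma~\ref{lem:trinetk3p}.
\end{enumerate}
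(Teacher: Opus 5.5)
\textbf{The gap is in the double-triangle case.} The rest holds up. Your reduction, the quartet-tree case, and the vanishing of $Q_{12|34}$ on $\mathcal{M}_1$ via the product factorization are fine. Your single-triangle argument differs from the paper's, which evaluates $Q_{12|34}$ on one single-triangle quarnet and transports it by leaf permutations. Comparing the sets of 3-sunlet triples and applying the trinet inequalities is correct. Because those inequalities do not care, up to relabelling, where the reticulation of the 3-sunlet sits, your argument covers every placement of the reticulation in the triangle.

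\textbf{Where step~4 fails.} The inequality $F({\rm A},{\rm A})F({\rm T},{\rm T})\ge F({\rm A},{\rm T})F({\rm T},{\rm A})$ that step~4 is meant to establish is false when the reticulation is the triangle vertex $w$ incident to the central edge. This is a legitimate level-1 quarnet: place the root on the triangle edge $f$ joining the two leaf-adjacent vertices. Take pendant edges $\alpha,\beta$ to $i,j$, and reticulation edges $p,s$ entering $w$ from the $i$- and $j$-side, with weight $\delta$ on $p$. The two displayed trees give
\[
F(x_i,x_j)=\alpha_{x_i}\beta_{x_j}\bigl(\delta f_{x_j}p_{x_i+x_j}+(1-\delta)f_{x_i}s_{x_i+x_j}\bigr),
\]
so
\[
F({\rm A},{\rm A})F({\rm T},{\rm T})-F({\rm A},{\rm T})F({\rm T},{\rm A})=\alpha_{\rm T}\beta_{\rm T}\bigl(f_{\rm T}-(\delta f_{\rm T}p_{\rm T}+(1-\delta)s_{\rm T})(\delta p_{\rm T}+(1-\delta)f_{\rm T}s_{\rm T})\bigr).
\]
This tends to $-\alpha_{\rm T}\beta_{\rm T}\delta(1-\delta)p_{\rm T}s_{\rm T}<0$ as $f_{\rm T}\to 0$. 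For the triangle placement of Figure~\ref{fig:singletri}, the same difference is $a_{\rm T}c_{\rm T}\,\delta f_{\rm T}(1-h_{\rm T}^2)>0$, so step~4 only works for leaf-adjacent reticulations.

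\textbf{The invariant itself fails here, not just your inequality.} Take a double triangle with split $13|24$ whose $\{1,3\}$-triangle has this placement. Your formula for $Q_{12|34}$ is negative for $f_{\rm T}$ small and positive for $p_{\rm T},s_{\rm T}$ small. By connectedness of the parameter set, $Q_{12|34}$ therefore vanishes somewhere on $\mathcal{M}_2$, already under the JC specialisation. No argument based on $Q_{12|34}$ alone can handle this $\mathcal{N}_2$, for any of the three models.

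\textbf{How to repair it.} For JC and K2P, use linear quartet invariants; this is what the paper's appeal to \cite[Theorem~2.11]{englander2025identifiability} rests on. For instance, under JC, $q_{\rm CGCG}-q_{\rm CGGC}$ is $0$ on every quartet tree with split $12|34$, positive on those with split $13|24$, and negative on those with split $14|23$. Every displayed tree of a quarnet displays the quarnet's split, and the invariant is linear over the mixture. So it is $0$ on $\mathcal{M}_1$ and nonzero on $\mathcal{M}_2$, whatever the reticulation placements.

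For K3P there are no linear invariants. A double triangle with a reticulation at a central triangle vertex therefore needs a genuinely new argument, for example several flattening minors that cannot vanish simultaneously, in the spirit of Lemma~\ref{lem:trinetk3p}.

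\textbf{The paper does not close this case either.} Its K3P computations treat only leaf-adjacent reticulations and pass to other placements via \cite{gross2018distinguishing}. That placement-invariance is a statement about varieties. The sign change above, set against $Q_{12|34}>0$ everywhere for leaf-adjacent placements, shows that the $\Theta_0$-images do depend on the placement. The same sign change occurs for single triangles, which is why your restriction argument is the safer one there.
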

\begin{proof}
   For the JC and K2P models, the result is a special case of \cite[Theorem~2.11]{englander2025identifiability}. We will prove the result for K3P. First, if $\mathcal{N}_1$ and $\mathcal{N}_2$ have different underlying topologies, then the result is given by Lemma~\ref{lem:quarnettopologies}, so we may assume that $\mathcal{N}_1$ and $\mathcal{N}_2$ have the same underlying topology. Since we are assuming that $\mathcal{N}_1$ has a non-trivial split, we may exclude the 4-sunlet topology.

   We show that the polynomial $Q_{12|34}=q_{\rm AAAA}q_{\rm TTTT} - q_{\rm AATT}q_{\rm TTAA}$ is an invariant for $\mathcal{N}_1$. We restrict our attention to the subgroup of $\mathbb{Z}/2\mathbb{Z}\times\mathbb{Z}/2\mathbb{Z}$ generated by  ${\rm A} = (0,0)$ and ${\rm T}=(1,1)$ (isomorphic to $\mathbb{Z}/2\mathbb{Z}$), and the subalgebra $\mathcal{C} = \mathbb{C}[q_{x_1x_2x_3x_4}\ |\ x_i \in \{{\rm A,T}\},\, x_1+x_2+x_3+x_4={\rm A}]$. The ideal for the general group-based model for $\mathbb{Z}/2\mathbb{Z}$, known as the Cavender-Farris-Neyman (CFN) model, on $\mathcal{N}_1$ embeds naturally into this subalgebra. Furthermore, we have that 
   $$I^{\text{CFN}}_{\mathcal{N}_1} \cong I^{\text{K3P}}_{\mathcal{N}_1}\cap\mathcal{C}.$$
   Thus, it is sufficient to show that $Q_{12|34}\in I^{\text{CFN}}_{\mathcal{N}_1}$. Consider the flattening matrix
   \[ \text{Flat}(12,34)= 
\begin{blockarray}{ccccc}
    & {\rm AA} & {\rm TT} & {\rm AT} & {\rm TA} \\
    \begin{block}{c(cccc)}
        {\rm AA} & q_{\rm AAAA} & q_{\rm AATT} & 0 & 0 \\
        {\rm TT} & q_{\rm TTAA} & q_{\rm TTTT} & 0 & 0 \\
        {\rm AT} & 0 & 0 & q_{\rm ATAT} & q_{\rm ATTA} \\
        {\rm TA} & 0 & 0 & q_{\rm TAAT} & q_{\rm TATA} \\
    \end{block}
\end{blockarray}\ .
\]
Since $\mathcal{N}_1$ has split $12|34$, a single edge disconnects $\mathcal{N}_1$ into two subgraphs with leaf sets $\{1,2\}$ and $\{3,4\}$ respectively. Then by \cite[Theorem~7.8]{sullivant2025phylogenetic}, for any point $p\in\mathcal{M}_1$ we have $\text{rank}\,\text{Flat}(12,34)(p) \leq 2$. It follows that each block diagonal has rank $1$. In particular, the determinant of the upper left block is zero, i.e., $Q_{12|34}(p)=0$.

It remains to show that for any point $p\in\mathcal{M}_2$ we have $Q_{12|34}(p) \neq 0$. We have already shown this for the quartet trees with splits $13|24$ and $14|23$ in the proof of Lemma~\ref{lem:k3pquartets}, so we may assume that $\mathcal{N}_2$ is either a single-triangle or double-triangle quarnet. There are 4 single-triangle quarnets and 2 double-triangle quarnets without the split $12|34$. We will show the result for one single-triangle quarnet, and defer the remaining cases to the appendix.

\begin{figure}[h!]

    \centering
    \begin{tikzpicture}[scale = .5]
        
        \node[shape=circle,draw=black,scale=0.3] (B) at (0,2) {};
        \node[shape=circle,draw=black,scale=0.3] (C) at (1.8,5) {};
        \node[shape=circle,draw=black,scale=0.3] (D) at (-1.8,5) {};
        \node[shape=circle,draw=black,scale=0.3] (E) at (0,0) {};
        \node[shape=circle,draw=black,scale=0.3,fill=black] (3) at (3,6.5) {};
        \node[shape=circle,draw=black,scale=0.3,fill=black] (1) at (-3,6.5) {}; 
        \node[shape=circle,draw=black,scale=0.3,fill=black] (4) at (3,-2) {};
        \node[shape=circle,draw=black,scale=0.3,fill=black] (2) at (-3,-2) {};
        \node at (3.2,7.2) {$3$};
        \node at (-3.2,7.2) {$1$};
        \node at (3.5,-2) {$4$};
        \node at (-3.5,-2) {$2$};

        \draw [->](B)--(C) node[midway,right=0.1] {$h$};
        \draw [dashed,->](B)--(D) node[midway,left=0.1] {$k$};
        \draw [dashed,->](C)--(D) node[midway,above=0.05] {$f$};
        \draw [->](C)--(3) node[midway,below right] {$c$};
        \draw [->](D)--(1) node[midway,below left] {$a$};
        \draw [->](B)--(E) node[midway,left=0.05] {$e$};
        \draw [->](E)--(2) node[midway,above left] {$b$};
        \draw [->](E)--(4) node[midway,above right] {$d$};

    \end{tikzpicture}
    \caption{A leaf and edge-labelled directed single-triangle quarnet. }
    \label{fig:singletri}
\end{figure}

Let $\mathcal{N}_2$ be the single-triangle quarnet with split $13|24$ pictured in Figure~\ref{fig:singletri}, and label the edges of $\mathcal{N}_2$ therein. The parameterization of $\mathcal{N}_2$ is given by
$$q_{x_1x_2x_3x_4} = a_{x_1}b_{x_2}c_{x_3}d_{x_4}e_{x_1+x_3}(\delta f_{x_1}h_{x_1+x_3} + (1-\delta)k_{x_1}h_{x_3}).$$
Then for a point $p\in\mathcal{M}_2$ we have
\begin{align*}
    Q_{12|34}(p) &= a_{\rm T}b_{\rm T}c_{\rm T}d_{\rm T}(\delta f_{\rm T} + (1-\delta)k_{\rm T}h_{\rm T}) - c_{\rm T}d_{\rm T}e_{\rm T}h_{\rm T}\times a_{\rm T}b_{\rm T}e_{\rm T}(\delta f_{\rm T}h_{\rm T} + (1-\delta)k_{\rm T})  \\
    & = a_{\rm T}b_{\rm T}c_{\rm T}d_{\rm T}(\delta f_{\rm T} + (1-\delta)k_{\rm T}h_{\rm T}) - a_{\rm T}b_{\rm T}c_{\rm T}d_{\rm T}e_{\rm T}^2h_{\rm T}(\delta f_{\rm T}h_{\rm T} + (1-\delta)k_{\rm T}) \\
    &=a_{\rm T}b_{\rm T}c_{\rm T}d_{\rm T}(\delta f_{\rm T} + (1-\delta)k_{\rm T}h_{\rm T} - e_{\rm T}^2h_{\rm T}(\delta f_{\rm T}h_{\rm T} + (1-\delta)k_{\rm T})) \\
    &= a_{\rm T}b_{\rm T}c_{\rm T}d_{\rm T}(\delta f_{\rm T}(1-e_{\rm T}^2h_{\rm T}^2) + (1-\delta)h_{\rm T}k_{\rm T}(1-e_{\rm T}^2h_{\rm T})) > 0.
\end{align*}
\end{proof}

By permuting the leaves as in Remark \ref{rem:quartetpermutation}, given any two quarnets $\mathcal{N}_1$ and $\mathcal{N}_2$ with different splits, we have $\mathcal{M}_1\cap\mathcal{M}_2=\emptyset$. Thus, in this section we have so far shown that we can identify whether a quarnet is a 4-sunlet or has a split (Lemma~\ref{lem:quarnettopologies}), and if it has a split we can identify which one (Lemma~\ref{lem:k3psplit}). This is enough to identify the tree-of-blobs. In order to identify a semi-directed phylogenetic network, however, we must also be able to identify the blobs themselves. For level-1 phylogenetic networks, the blobs correspond to sunlet networks.

We now show that we can identify $n$-sunlets for $n\geq 4$. We start with the case $n=4$. For the JC and K2P substitution models, we already have the results to show this.
\begin{lemma}\label{lem:JCK2P4sunlet}
    Let $\mathcal{N}_1$ and $\mathcal{N}_2$ be two distinct $4$-sunlet networks. Let $\mathcal{M}_1$ and $\mathcal{M}_2$ be the corresponding models for either the JC or K2P substitution models on the parameter sets $\Theta_0(\mathcal{N}_1)$ and $\Theta_0(\mathcal{N}_2)$ respectively. Then $\mathcal{M}_1\cap\mathcal{M}_2=\emptyset$.
\end{lemma}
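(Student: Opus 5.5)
Two distinct 4-sunlets on leaf set $\{1,2,3,4\}$ differ either in the circular ordering of the leaves around the cycle, or in which leaf sits below the reticulation vertex with the same circular ordering. We handle these two cases separately, each time reducing to a result that is already available.

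\emph{Case 1: same circular ordering, different reticulation leaf.} Here we use restriction together with the trinet inequality. Deleting the leaf adjacent to the reticulation vertex of a 4-sunlet removes the reticulation entirely, since that vertex now has out-degree zero. The three remaining leaves then induce a 3-star tree. Deleting any other leaf leaves a cycle that still contains the reticulation and has at least three cycle vertices. More precisely, deleting a leaf adjacent to the reticulation on the cycle leaves a 3-sunlet, and deleting the leaf opposite the reticulation also leaves a 3-sunlet. (This matches the count in Table~\ref{tab:trinets}: one star and three sunlets.)

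So if $\mathcal{N}_1$ has its reticulation at leaf $i$ and $\mathcal{N}_2$ has it at leaf $j\neq i$, set $S=\{1,2,3,4\}\setminus\{i\}$. Then $\mathcal{N}_1|_S$ is a 3-star tree and $\mathcal{N}_2|_S$ is a 3-sunlet. The trinet inequality (\cite[Proposition~2.15]{englander2025identifiability} for JC, Lemma~\ref{lem:trinetk2p} for K2P) gives $\mathcal{M}_1|_S\cap\mathcal{M}_2|_S=\emptyset$, and Theorem~\ref{thm:restriction_network} finishes this case. Note that this argument does not need the orderings to agree, so it disposes of every pair with different reticulation leaves.

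\emph{Case 2: same reticulation leaf, different circular ordering.} Now both networks have their reticulation at leaf $i$, and the trinet restrictions of the two networks coincide in type. We therefore use quartet invariants. A 4-sunlet with circular order $(1,2,3,4)$ displays exactly the two quartet trees $12|34$ and $14|23$, obtained by removing one of the two reticulation edges. A different circular order, say $(1,3,2,4)$, displays $13|24$ and $14|23$. For JC and K2P, \cite{englander2025identifiability} provides linear phylogenetic invariants for each quartet tree, and the key point is that they are linear. Because $\psi_{\mathcal{N}}=\delta\psi_{\mathcal{T}'}+(1-\delta)\psi_{\mathcal{T}''}$ with $\delta\in(0,1)$, evaluating a linear invariant on a sunlet point gives the corresponding convex combination of its values on the two displayed trees. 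Take a linear form $L$ that vanishes on every displayed quartet of $\mathcal{N}_1$; for example, an invariant shared by the quartets $12|34$ and $14|23$ (I expect to take it from the explicit JC/K2P linear invariants, i.e. the machinery behind \cite[Theorem~2.11]{englander2025identifiability}). We need $L$ to be nonzero, and of constant sign, on $\mathcal{M}_{\mathcal{T}}(\Theta_0)$ for the quartet tree $\mathcal{T}$ that $\mathcal{N}_2$ displays but $\mathcal{N}_1$ does not. Then on $\mathcal{M}_2$ the value of $L$ is $\delta$ or $1-\delta$ times a nonzero quantity, hence nonzero, while $L$ vanishes on $\mathcal{M}_1$.

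The main obstacle is Case 2. We need a linear invariant common to two of the three quartet trees that is strictly signed on the third over $\Theta_0$, not merely generically nonzero. The most efficient route is to cite the relevant statement of \cite{englander2025identifiability} directly (their full identifiability of displayed quartets under JC/K2P on $\Theta_0$), which likely already covers distinct 4-sunlets for these models. Failing that, we would compute the Fourier evaluation explicitly, as in the proof of Lemma~\ref{lem:k3pquartets}, and exhibit a factor of the form $1-(a_{\rm T})^2>0$.
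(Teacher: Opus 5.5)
Your proposal is correct and is essentially the paper's proof. When the reticulation leaves differ, you restrict to the three leaves other than one network's reticulation leaf, which gives a 3-star tree versus a 3-sunlet, and then apply the trinet inequality and Theorem~\ref{thm:restriction_network}; when the reticulation leaf is shared, the two sunlets display different quartets, and the paper settles this by citing \cite[Theorem~2.11]{englander2025identifiability}, exactly as you suggest. If you want your linear-invariant sketch to be self-contained, $L=q_{\rm CCCC}-q_{\rm CCTT}+q_{\rm CTCT}-q_{\rm CTTC}$ works for both JC and K2P: it vanishes on the quartet trees $12|34$ and $14|23$ and equals $2a^1_{\rm C}a^2_{\rm C}a^3_{\rm C}a^4_{\rm C}(1-a^5_{\rm G})>0$ on $13|24$.
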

\begin{proof}
    First, assume that $\mathcal{N}_1$ and $\mathcal{N}_2$ have different leaves under the reticulation vertex. Without loss of generality, assume that $1$ is the leaf under the reticulation vertex for $\mathcal{N}_1$, and that leaves $1,2$ and $3$ are not under the reticulation vertex for $\mathcal{N}_2$. Restricting to the leaf set $\mathcal{Y}=\{1,2,3\}$ we have that $\mathcal{N}_1|_{\mathcal{Y}}$ is a 3-sunlet and $\mathcal{N}_2|_{\mathcal{Y}}$ is an unrooted 3-leaf tree. Let $\mathcal{M}_1'$ and $\mathcal{M}_2'$ be the corresponding models. By Lemma~\ref{lem:trinetk2p} we have $\mathcal{M}_1'\cap\mathcal{M}_2'=\emptyset$. Then the result follows by Theorem~\ref{thm:restriction_network}.

    Next, assume that $\mathcal{N}_1$ and $\mathcal{N}_2$ have the same leaf under the reticulation vertex. Then since $\mathcal{N}_1$ and $\mathcal{N}_2$ are distinct, they must have different displayed quartets. The result follows by \cite[Theorem~2.11]{englander2025identifiability}.
\end{proof}

Whilst we cannot use results on displayed trees for the K3P model, the result still holds.

\begin{lemma}\label{lem:k3p4sunlet}
    Let $\mathcal{N}_1$ and $\mathcal{N}_2$ be two distinct $4$-sunlet networks. Let $\mathcal{M}_1$ and $\mathcal{M}_2$ be the corresponding models for the K3P model on the parameter sets $\Theta_0(\mathcal{N}_1)$ and $\Theta_0(\mathcal{N}_2)$ respectively. Then $\mathcal{M}_1\cap\mathcal{M}_2=\emptyset$.
\end{lemma}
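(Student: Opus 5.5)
I will follow the two-case structure of the proof of Lemma~\ref{lem:JCK2P4sunlet}. In the first case the two $4$-sunlets have different leaves below the reticulation vertex. Here the first argument of that proof carries over verbatim. Restricting to a set $\mathcal{Y}$ of three leaves containing the reticulation leaf of $\mathcal{N}_1$ but not that of $\mathcal{N}_2$ gives a 3-sunlet and a 3-star tree. Lemma~\ref{lem:trinetk3p} then gives disjointness of the restricted models, and Theorem~\ref{thm:restriction_network} lifts this to $\mathcal{M}_1\cap\mathcal{M}_2=\emptyset$.

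The remaining case is when both sunlets have the same reticulation leaf, say leaf $1$, but a different circular order of the cycle. With leaf $1$ fixed at the reticulation, a $4$-sunlet is determined by which leaf is opposite to $1$ in the cycle. So there are three such sunlets, with leaf $2$, $3$ or $4$ opposite. I will not use displayed-quartet linear invariants, which are unavailable for K3P. Instead I will use Lemma~\ref{lem:paramlemma}: on $\Theta_0$, every Fourier coordinate $q_{{\rm A},x_2,x_3,x_4}$ of the sunlet equals the corresponding coordinate of either displayed quartet tree. Those coordinates are monomials. Say leaf $3$ is opposite leaf $1$ in $\mathcal{N}_1$. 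The displayed trees of $\mathcal{N}_1$ are then quartets with splits $12|34$ and $14|23$, meaning the tree obtained by deleting a reticulation edge has leaf $1$ attached next to leaf $2$ or next to leaf $4$. Whichever we choose, the coordinates with $x_1={\rm A}$ look like those of a tree on leaves $2,3,4$ with leaf $1$ grafted as a pendant of trivial label. The position of leaf $1$ is invisible when $x_1={\rm A}$, but the internal path structure of the cycle is not.

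The concrete plan is to pick a quadratic binomial in the coordinates $q_{{\rm A}x_2x_3x_4}$ that vanishes on one sunlet and is strictly positive on the other, in the spirit of Lemma~\ref{lem:k3pquartets}. I will write out the edge parameters of $\mathcal{N}_1$: the cycle edges $g$ (between the vertex of $2$ and the vertex of $3$) and $h$ (between the vertex of $3$ and the vertex of $4$), plus pendant edges. Taking the displayed tree with reticulation edges removed, I get coordinates such as $q_{{\rm A}x_2x_3x_4}=b_{x_2}c_{x_3}d_{x_4}\,g_{x_2}\,h_{x_4}\cdots$, with the label on each cycle edge determined by the leaves on one side of it. One complication is that the reticulation edges themselves carry labels $x_2$ and $x_4$ in the two displayed trees. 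By Lemma~\ref{lem:paramlemma} these two expressions agree, which I should double-check: it forces the relevant products to coincide.

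The candidate invariant is a three-leaf tree invariant on $\{2,3,4\}$ that detects where leaf $3$ sits along the path. The marginal on $\{2,3,4\}$ is just a star, so the candidate must exploit that the reticulation vertex splits the path between $2$ and $4$ via the two reticulation edges. I will search among binomials of the form $q_{{\rm A}x_2x_3x_4}q_{{\rm A}y_2y_3y_4}-q_{{\rm A}z_2z_3z_4}q_{{\rm A}w_2w_3w_4}$ with matching label multisets per leaf. I will test them on the three parameterizations and aim for factorizations like $(\text{monomial})(1-t^2)$ with $t\in(0,1)$.

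The hard step is finding such an invariant. If the $x_1={\rm A}$ coordinates turn out to carry no ordering information, which is plausible since they are star-like, I will fall back to coordinates with $x_1\neq{\rm A}$. There the coordinates are two-term mixtures $\delta(\cdot)+(1-\delta)(\cdot)$, and I would proceed as in the proof of Lemma~\ref{lem:trinetk3p}. I would use a few binomials vanishing on $\mathcal{N}_1$ and show by a case analysis of their factored forms on $\mathcal{N}_2$ that they cannot all vanish, since every factor reduces to an impossible identity like $f_{\rm G}(1-f_{\rm C}^2)=0$ for parameters in $(0,1)$. Symmetry under permuting leaves $2,3,4$ then covers all pairs.
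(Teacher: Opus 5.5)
Your first case (different reticulation leaves) is the paper's argument and is fine. The gap is the second case, which is the whole content of the lemma. You never produce a separating polynomial, and your primary route provably cannot. Label $\mathcal{N}_1$ (circular order $(1,2,3,4)$) like $\mathcal{N}_2$ in Figure~\ref{fig:4Sunlets}, with leaves $2$ and $3$ exchanged, so that
\[
q_{x_1x_2x_3x_4}=a_{x_1}b_{x_2}c_{x_3}d_{x_4}\bigl(\delta e_{x_1}f_{x_1+x_2}k_{x_4}+(1-\delta)f_{x_2}k_{x_1+x_4}h_{x_1}\bigr).
\]
Then $q_{{\rm A}x_2x_3x_4}=(bf)_{x_2}c_{x_3}(dk)_{x_4}$. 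The reticulation edges carry the trivial label $x_1={\rm A}$ (not $x_2$, $x_4$), and the other two cycle edges are absorbed into the pendant edges of $2$ and $4$. These are just the Fourier coordinates of the marginal on $\{2,3,4\}$. Their image is the full $3$-star model for each of the three sunlets with reticulation leaf $1$, so the projections of $\mathcal{M}_1$ and $\mathcal{M}_2$ onto these coordinates coincide. Your claim that the path structure of the cycle is visible at $x_1={\rm A}$ is therefore false. Your fallback is left unspecified, and binomials are the wrong shape, at least in degree two. Matching pendant monomials forces the label multisets at each leaf to agree. Evaluating at $\delta=1$ and $\delta=0$ forces the internal labels $f_{x_1+x_2}$ and $k_{x_1+x_4}$ to agree as well. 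A short case check then shows that $I_{\mathcal{N}_1}$ contains no nontrivial quadratic binomial under K3P.

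The missing idea is a non-binomial invariant that is linear in the mixture. Pair each monomial coordinate with $x_1={\rm A}$ (Lemma~\ref{lem:paramlemma}) with a two-term coordinate with $x_1={\rm T}$, so that no $\delta(1-\delta)$ cross terms arise. Then add two such binomials so that the contributions of the displayed tree $14|23$, which both sunlets share, cancel. The paper uses
\[
Q = q_{\rm AGAG}q_{\rm TCTC} - q_{\rm AGTC}q_{\rm TCAG} + q_{\rm ACAC}q_{\rm TGTG} - q_{\rm ACTG}q_{\rm TGAC}.
\]
On $\mathcal{N}_1$ the two binomials equal $\pm(1-\delta)\,a_{\rm T}b_{\rm C}b_{\rm G}c_{\rm T}d_{\rm C}d_{\rm G}f_{\rm C}f_{\rm G}h_{\rm T}(k_{\rm G}^2-k_{\rm C}^2)$. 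So neither is an invariant by itself, but $Q$ vanishes. On $\mathcal{N}_2$ (circular order $(1,3,2,4)$, labelled as in Figure~\ref{fig:4Sunlets}) the $14|23$ contributions cancel again while the $13|24$ contributions add. This gives $Q=2a_{\rm T}b_{\rm T}c_{\rm C}c_{\rm G}d_{\rm C}d_{\rm G}e_{\rm T}k_{\rm C}k_{\rm G}\delta(1-f_{\rm T}^2)>0$. The other pairs then follow from the leaf permutations $(24)$ and $(34)$, as you anticipated.
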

\begin{proof}
    In the case that $\mathcal{N}_1$ and $\mathcal{N}_2$ have different leaves under the reticulation vertex, we can restrict to a suitable set of three leaves as in the proof of Lemma \ref{lem:JCK2P4sunlet}. 

    We therefore assume that $\mathcal{N}_1$ and $\mathcal{N}_2$ have the same leaf under the reticulation vertex, and thus have different circular orders (see e.g. \cite{rhodes2025identifying}). We will assume that the leaf under the reticulation vertex is $1$. The other cases can be obtained by symmetry. Here, we will show that if $\mathcal{N}_1$ has circular order $(1,2,3,4)$ and $\mathcal{N}_2$ has circular order $(1,3,2,4)$, as in Figure~\ref{fig:4Sunlets}, then $\mathcal{M}_1\cap\mathcal{M}_2=\emptyset$. The remaining cases can be reduced to this one as follows. The case where $\mathcal{N}_1$ has circular order $(1,2,3,4)$ and $\mathcal{N}_2$ has circular order $(1,2,4,3)$ can be reduced to the above case by applying the permutation $(24)$ (in cycle notation). This leaves $\mathcal{N}_1$ unchanged (it merely changes the presentation of the graph) and changes $\mathcal{N}_2$ to the network in Figure~\ref{fig:4Sunlets}(b). The final case where $\mathcal{N}_1$ has circular order $(1,3,2,4)$ and $\mathcal{N}_2$ has circular order $(1,2,4,3)$ can be reduced to the above case by applying the permutation $(34)$.

    \begin{figure}[h!]
        \centering
        \begin{subfigure}[b]{0.4\linewidth}
            \centering
            \begin{tikzpicture}[scale = .9]
    
            \node[shape=circle,draw=black,scale=0.3] (A) at (0,4) {};
            \node[shape=circle,draw=black,scale=0.3] (B) at (1,3) {};
            \node[shape=circle,draw=black,scale=0.3] (C) at (-1,3) {};
            \node[shape=circle,draw=black,scale=0.3] (D) at (0,2) {};
            \node[shape=circle,draw=black,scale=0.3,fill=black] (1) at (0,5) {};
            \node[shape=circle,draw=black,scale=0.3,fill=black] (2) at (2,3) {}; 
            \node[shape=circle,draw=black,scale=0.3,fill=black] (3) at (-2,3) {};
            \node[shape=circle,draw=black,scale=0.3,fill=black] (4) at (0,1) {}; 

            \node (t1) at (0,5.4) {$3$};
            \node (t2) at (2.4,3) {$2$}; 
            \node (t3) at (-2.4,3) {$4$};
            \node (t4) at (0,0.6) {$1$};

            \draw (1)--(A);
            \draw (A)--(B);
            \draw (A)--(C);
            \draw [dashed,->] (B)--(D);
            \draw [dashed,->] (C)--(D);
            \draw (B)--(2);
            \draw (C)--(3);
            \draw (D)--(4);
            
            \end{tikzpicture}
            \caption{$\mathcal{N}_1$}
        \end{subfigure}
        \begin{subfigure}[b]{0.4\linewidth}
            \centering
            \begin{tikzpicture}[scale = .9]
    
            \node[shape=circle,draw=black,scale=0.3] (A) at (0,4) {};
            \node[shape=circle,draw=black,scale=0.3] (B) at (1,3) {};
            \node[shape=circle,draw=black,scale=0.3] (C) at (-1,3) {};
            \node[shape=circle,draw=black,scale=0.3] (D) at (0,2) {};
            \node[shape=circle,draw=black,scale=0.3,fill=black] (1) at (0,5) {};
            \node[shape=circle,draw=black,scale=0.3,fill=black] (2) at (2,3) {}; 
            \node[shape=circle,draw=black,scale=0.3,fill=black] (3) at (-2,3) {};
            \node[shape=circle,draw=black,scale=0.3,fill=black] (4) at (0,1) {}; 

            \node (t1) at (0,5.4) {$2$};
            \node (t2) at (2.4,3) {$3$}; 
            \node (t3) at (-2.4,3) {$4$};
            \node (t4) at (0,0.6) {$1$};

            \draw [->](A)--(1) node[midway,right] {$c$};
            \draw [->](A)--(B) node[midway,above right] {$f$};;
            \draw [->](A)--(C) node[midway,above left] {$k$};;
            \draw [dashed,->] (B)--(D) node[midway,below right] {$e$};;
            \draw [dashed,->] (C)--(D) node[midway,below left] {$h$};;
            \draw [->](B)--(2) node[midway,above] {$b$};;
            \draw [->](C)--(3) node[midway,above] {$d$};;
            \draw [->](D)--(4) node[midway,right] {$a$};;
            
            \end{tikzpicture}
            \caption{$\mathcal{N}_2$}
        \end{subfigure}   
        \caption{Two $4$-sunlets with circular orders \textbf{(a)} (1,2,3,4) and \textbf{(b)} (1,3,2,4). Note that $\mathcal{N}_2$ is directed and edge-labelled to determine the parameterization.}
        \label{fig:4Sunlets}
    \end{figure}

    Now, let $\mathcal{N}_1$ be the 4-sunlet with circular order $(1,2,3,4)$ and let $\mathcal{N}_2$ be the 4-sunlet with circular order $(1,3,2,4)$, as in Figure \ref{fig:4Sunlets}. Consider the polynomial
    \begin{equation}\label{eqn:4sunletQ}
        Q = q_{\rm AGAG}q_{\rm TCTC} - q_{\rm AGTC}q_{\rm TCAG} + q_{\rm ACAC}q_{\rm TGTG} - q_{\rm ACTG}q_{\rm TGAC}.
    \end{equation}
    The reader can check that $Q \in I_{\mathcal{N}_1}$. As usual, we substitute the parameterization of $\mathcal{N}_2$ into $Q$ and show that the resulting expression is strictly positive. The parameterization of $\mathcal{N}_2$ is given by
    \begin{equation*}
        q_{x_1x_2x_3x_4} = a_{x_1}b_{x_3}c_{x_2}d_{x_4}(\delta e_{x_1}f_{x_1+x_3}k_{x_4} + (1-\delta)f_{x_3}k_{x_1+x_4}h_{x_1}),
    \end{equation*}
    with variables corresponding to edges as shown in Figure \ref{fig:4Sunlets}.b. Thus we have
    \begin{align*}
        q_{\rm AGAG} &= c_{\rm G}d_{\rm G}k_{\rm G}, \\ 
        q_{\rm TCTC} &= a_{\rm T}b_{\rm T}c_{\rm C}d_{\rm C}(\delta e_{\rm T}k_{\rm C} + (1-\delta)f_{\rm T}k_{\rm G}h_{\rm T}), \\
        q_{\rm AGTC} &= b_{\rm T}c_{\rm G}d_{\rm C}f_{\rm T}k_{\rm C}, \\
        q_{\rm TCAG} &= a_{\rm T}c_{\rm C}d_{\rm G}(\delta e_{\rm T}f_{\rm T}k_{\rm G} + (1-\delta)k_{\rm C}h_{\rm T}), \\
        q_{\rm ACAC} &= c_{\rm C}d_{\rm C}k_{\rm C},\\
        q_{\rm TGTG} &= a_{\rm T}b_{\rm T}c_{\rm G}d_{\rm G}(\delta e_{\rm T}k_{\rm G} + (1-\delta)f_{\rm T}k_{\rm C}h_{\rm T}), \\
        q_{\rm ACTG} &= b_{\rm T}c_{\rm C}d_{\rm G}f_{\rm T}k_{\rm G}, \\
        q_{\rm TGAC} &= a_{\rm T}c_{\rm G}d_{\rm C}(\delta e_{\rm T}f_{\rm T}k_{\rm C} + (1-\delta)k_{\rm G}h_{\rm T})
    \end{align*}
    Substituting these into equation (\ref{eqn:4sunletQ}) gives
    \begin{align*}
        Q = &a_{\rm T}b_{\rm T}c_{\rm C}c_{\rm G}d_{\rm C}d_{\rm G}k_{\rm G}(\delta e_{\rm T}k_{\rm C} + (1-\delta)f_{\rm T}k_{\rm G}h_{\rm T}) \\
          &- a_{\rm T}b_{\rm T}c_{\rm C}c_{\rm G}d_{\rm C}d_{\rm G}f_{\rm T}k_{\rm C}(\delta e_{\rm T}f_{\rm T}k_{\rm G} + (1-\delta)k_{\rm C}h_{\rm T}) \\
          &+ a_{\rm T}b_{\rm T}c_{\rm C}c_{\rm G}d_{\rm C}d_{\rm G}k_{\rm C}(\delta e_{\rm T}k_{\rm G} + (1-\delta)f_{\rm T}k_{\rm C}h_{\rm T}) \\
          &- a_{\rm T}b_{\rm T}c_{\rm C}c_{\rm G}d_{\rm C}d_{\rm G}f_{\rm T}k_{\rm G}(\delta e_{\rm T}f_{\rm T}k_{\rm C} + (1-\delta)k_{\rm G}h_{\rm T}) \\ 
          = &2a_{\rm T}b_{\rm T}c_{\rm C}c_{\rm G}d_{\rm C}d_{\rm G}e_{\rm T}k_{\rm C}k_{\rm G}\delta(1-f_{\rm T}^2) > 0. 
    \end{align*}
    
\end{proof}

\subsection{Level-1 Phylogenetic Networks}\label{sec:identifiability}
Using the results developed in the previous two sections, we are now ready to prove our main result on the full identifiability of level-1 phylogenetic networks under the JC, K2P and K3P models.

\begin{theorem}\label{thm:level1identifiability}
    Let $\mathcal{N}_1$ and $\mathcal{N}_2$ be two level-1, semi-directed phylogenetic networks  on a set of $n$ taxa $\mathcal{X}$, with $n\geq 3$. Let $\mathcal{M}_1$ and $\mathcal{M}_2$ be the corresponding models for either the JC, K2P, or K3P substitution models on the  parameter sets $\Theta_0(\mathcal{N}_1)$ and $\Theta_0(\mathcal{N}_2)$ respectively. If $\mathcal{N}_1$ and $\mathcal{N}_2$ are distinct, modulo the placement of the reticulation vertices in any triangles, then $\mathcal{M}_1\cap\mathcal{M}_2=\emptyset$.
\end{theorem}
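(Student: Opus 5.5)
The plan is to reduce to networks with three or four leaves via Theorem~\ref{thm:restriction_network}, and then invoke the trinet and quarnet separation results of Sections~\ref{sec:trinet} and~\ref{sec:quarnets}. Throughout, restrictions of level-1 networks are again level-1. The only $2$-blobs that can arise in a restriction are parallel edges, and these are suppressed without changing the model, so every restricted network is again a level-1 semi-directed network of the type handled earlier.

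\emph{Case $n=3$.} A level-1 semi-directed trinet is either the $3$-star tree or a $3$-sunlet. Since a $3$-sunlet is a triangle, all $3$-sunlets on $\{1,2,3\}$ coincide modulo the placement of the reticulation. So if $\mathcal{N}_1$ and $\mathcal{N}_2$ are distinct modulo triangles, one is the star tree and the other a $3$-sunlet. The trinet inequality then separates them: \cite[Proposition~2.15]{englander2025identifiability} for JC, Lemma~\ref{lem:trinetk2p} for K2P, and Lemma~\ref{lem:trinetk3p} for K3P.

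\emph{Case $n=4$.} I would check that any two level-1 quarnets that are distinct modulo triangle reticulations have disjoint models, by going through the possibilities in turn.
\begin{itemize}
\item Different topologies are handled by Lemma~\ref{lem:quarnettopologies}.
\item The same topology with different non-trivial splits is handled by \cite[Proposition~2.9]{englander2025identifiability}, Lemma~\ref{lem:k3pquartets} and Lemma~\ref{lem:k3psplit}.
\item Two distinct $4$-sunlets are handled by Lemma~\ref{lem:JCK2P4sunlet} and Lemma~\ref{lem:k3p4sunlet}.
\item Quartet trees, and double-triangle quarnets, with the same split coincide modulo triangles, so there is nothing to prove.
\item Two single-triangle quarnets with the same split $12|34$ remain, with the triangle on opposite sides. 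If the triangle sits on the $\{1,2\}$ side, the restriction to $\{1,2,3\}$ is a $3$-sunlet, while for the other quarnet it is the $3$-star tree. The trinet inequality together with Theorem~\ref{thm:restriction_network} finishes this case.
\end{itemize}

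\emph{General case $n\ge 5$.} The key input is the combinatorial fact that a level-1 semi-directed network, with $2$-blobs suppressed and considered modulo the placement of reticulations in triangles, is uniquely determined by its set of induced quarnets (again modulo triangles). This is the known quarnet-encoding result for level-1 networks, for which I would cite the literature on encodings of semi-directed level-1 networks (e.g.\ \cite{rhodes2025identifying}, or the argument in \cite{englander2025identifiability}). Given this fact, if $\mathcal{N}_1\neq\mathcal{N}_2$ modulo triangles, there is a $4$-subset $S\subseteq\mathcal{X}$ with $\mathcal{N}_1|_S\neq\mathcal{N}_2|_S$ modulo triangles. The quarnet case then gives $\mathcal{M}_1|_S\cap\mathcal{M}_2|_S=\emptyset$, and Theorem~\ref{thm:restriction_network} yields $\mathcal{M}_1\cap\mathcal{M}_2=\emptyset$.

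The main obstacle is this encoding step, and in particular its interaction with triangles. Restricting to four leaves can create triangles from larger cycles, and in a triangle the reticulation placement is invisible. So I must make sure the quarnet signature, read modulo triangles, still recovers the tree-of-blobs, each cycle, and the reticulation leaf of every cycle of length at least $4$.

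I would argue this as follows. Splits of quarnets recover the tree-of-blobs. For each blob of size $k\ge 4$, choose one leaf below each incident edge of the blob. The induced $k$-sunlet is then recovered: its reticulation leaf comes from the $4$-sunlet quarnets, since a $4$-sunlet arises exactly when the chosen four leaves include the reticulation leaf, and its circular order comes from these $4$-sunlets and the single-triangle quarnets. For blobs of size $3$, the triangle quarnets detect the blob, and its reticulation placement is exactly the ambiguity permitted by the statement.
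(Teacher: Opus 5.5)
Your proposal is correct and follows essentially the paper's route: reduce via Theorem~\ref{thm:restriction_network} to three- and four-leaf restrictions, then apply the trinet and quarnet separation lemmas, with a combinatorial quarnet-encoding fact for level-1 networks doing the work. The only differences are bookkeeping. The paper cites \cite[Theorem~14(b)]{frohn2025reconstructing} for networks distinct modulo triangle placement \emph{and} triangle contraction, and handles the remaining triangle-versus-vertex discrepancy directly by a three-leaf restriction; you instead fold that discrepancy into your quarnet case analysis (the ``triangle on opposite sides of the same split'' case) and sketch the encoding yourself rather than citing a precise result.
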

\begin{proof}
    We may assume that $n > 3$, otherwise the result follows directly from \cite[Proposition~2.15]{englander2025identifiability} (JC), Lemma~\ref{lem:trinetk2p} (K2P), or Lemma~\ref{lem:trinetk3p} (K3P).

    We will show that for $\mathcal{N}_1$ and $\mathcal{N}_2$ to be distinct (modulo the placement of the reticulation vertices in the triangles), there is a set~$\mathcal{Y} \subseteq \mathcal{X}$ with $|\mathcal{Y}| \in \{3, 4\}$ such that $\mathcal{M}'_1\cap\mathcal{M}'_2=\emptyset$, where $\mathcal{M}'_i$ is the model corresponding to $\mathcal{N}_i|_{\mathcal{Y}}$, $i \in \{1, 2\}$. The result then follows from Theorem~\ref{thm:restriction_network}.
    
    First suppose that $\mathcal{N}_1$ and $\mathcal{N}_2$ are distinct, modulo the placement of the reticulation vertices in any triangles \emph{and} contracting triangles to single vertices. It then follows from \cite[Theorem~14(b)]{frohn2025reconstructing}, that there is some $\mathcal{Y} \subseteq \mathcal{X}$ with $|\mathcal{Y}|=4$ such that one of the following holds:
    \begin{enumerate}
        \item $\mathcal{N}_1|_{\mathcal{Y}}$ and $\mathcal{N}_2|_{\mathcal{Y}}$ induce a different non-trivial split;
        \item one of $\mathcal{N}_1|_{\mathcal{Y}}$ and $\mathcal{N}_2|_{\mathcal{Y}}$ induces a non-trivial split and the other one does not;
        \item $\mathcal{N}_1|_{\mathcal{Y}}$ and $\mathcal{N}_2|_{\mathcal{Y}}$ both do not induce a non-trivial split; in particular, they are distinct 4-sunlets.
    \end{enumerate}
    In the first two cases, $\mathcal{M}'_1\cap\mathcal{M}'_2=\emptyset$ by Lemma~\ref{lem:k3psplit}. In the last case, this follows from Lemma~\ref{lem:JCK2P4sunlet} (JC and K2P) or Lemma~\ref{lem:k3p4sunlet} (K3P).

    It remains to consider the case where $\mathcal{N}_1$ and $\mathcal{N}_2$ are isomorphic when considered modulo the placement of the reticulation vertices in any triangles \emph{and} contracting triangles to single vertices, but they become distinct when only comparing them modulo the placement of the reticulation vertices in the triangles. Then, there is a triangle in one of $\mathcal{N}_1$ and $\mathcal{N}_2$ that corresponds to a single vertex in the other network. Thus, there is some set $\mathcal{Y}$ with $|\mathcal{Y}| = 3$, such that one of $\mathcal{N}_1$ and $\mathcal{N}_2$ is a 3-sunlet and the other is a 3-star unrooted tree. The result again follows from \cite[Proposition~2.15]{englander2025identifiability} (JC), Lemma~\ref{lem:trinetk2p} (K2P), or Lemma~\ref{lem:trinetk3p} (K3P).
\end{proof}

For the next result we make the following definitions. Let $\mathcal{N}_1$ and $\mathcal{N}_2$ be two semi-directed phylogenetic networks with $n$ leaves. A semi-directed phylogenetic network $\mathcal{N}$ is a \emph{maximal shared network} of $\mathcal{N}_1$ and $\mathcal{N}_2$ if it is displayed by both $\mathcal{N}_1$ and $\mathcal{N}_2$ and there is no other semi-directed phylogenetic network $\mathcal{N}'$ such that $\mathcal{N}'$ is also displayed by both $\mathcal{N}_1$ and $\mathcal{N}_2$ and $\mathcal{N} < \mathcal{N}'$. We denote the set of all such networks as ${\rm max}(\mathcal{N}_1,\mathcal{N}_2)$.

For a semi-directed phylogenetic network $\mathcal{N}$, let $\Theta_0^+(\mathcal{N})$ be the parameter set $\Theta_0(\mathcal{N})$ expanded to allow mixing parameters $\delta_i$ to be in the range $[0,1]$, and let $\mathcal{M}_{\mathcal{N}}^+=\psi_{\mathcal{N}}(\Theta_0^+)$. We observe that if $\mathcal{N}'$ is a displayed network of $\mathcal{N}$, then $\mathcal{M}_{\mathcal{N}'}^+\subset\mathcal{M}_{\mathcal{N}}^+$ (got by setting the corresponding mixing parameters to either 0 or 1 in the parameterization $\psi_{\mathcal{N}}$).

\begin{corollary}\label{cor:level1identifiability}
    Let $\mathcal{N}_1$ and $\mathcal{N}_2$ be two $n$-leaf, level-1, semi-directed phylogenetic networks for $n\geq 3$ (considered modulo the placement of the reticulation vertex in any triangles). Let $\mathcal{M}^+_1$ and $\mathcal{M}^+_2$ be the corresponding models for either the JC, K2P, or K3P substitution models on the respective parameter sets $\Theta_0^+(\mathcal{N}_1)$ and $\Theta_0^+(\mathcal{N}_2)$. Then 
    $$\mathcal{M}^+_1\cap\mathcal{M}^+_2 = \bigcup_{\mathcal{N}\in {\rm max}(\mathcal{N}_1,\mathcal{N}_2)}\mathcal{M}^+_{\mathcal{N}}$$
    where ${\rm max}(\mathcal{N}_1,\mathcal{N}_2)$ is the set of maximal shared displayed networks of $\mathcal{N}_1$ and $\mathcal{N}_2$.
\end{corollary}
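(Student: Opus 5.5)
The inclusion $\supseteq$ is immediate. If $\mathcal{N}\in{\rm max}(\mathcal{N}_1,\mathcal{N}_2)$, then $\mathcal{N}$ is displayed by both $\mathcal{N}_1$ and $\mathcal{N}_2$. By the observation preceding the corollary, $\mathcal{M}^+_{\mathcal{N}}\subset\mathcal{M}^+_1$ and $\mathcal{M}^+_{\mathcal{N}}\subset\mathcal{M}^+_2$.

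For $\subseteq$, I would first decompose each $\mathcal{M}^+_i$ by the support of the mixing parameters. Take $\mathbf{p}\in\mathcal{M}^+_1\cap\mathcal{M}^+_2$ and fix parameters $\theta_i\in\Theta_0^+(\mathcal{N}_i)$ with $\psi_{\mathcal{N}_i}(\theta_i)=\mathbf{p}$. At each reticulation vertex of $\mathcal{N}_i$ whose mixing parameter is $0$ or $1$, delete the reticulation edge carrying zero weight. Lemma~\ref{lem:displayedNetworkParam} then gives $\psi_{\mathcal{N}_i}(\theta_i)=\psi_{\mathcal{N}_i'}(\theta_i')$, where $\mathcal{N}_i'$ is the resulting displayed network and $\theta_i'\in\Theta_0(\mathcal{N}_i')$ is the inherited parameter, whose mixing parameters all lie in $(0,1)$.

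One point needs checking here: the inherited parameters must remain in $\Theta_0$ after suppressing degree-two vertices and identifying parallel edges. Products of matrices with eigenvalues in $(0,1)$ still have eigenvalues in $(0,1)$. For parallel edges in level-1 networks, the convex combination of the two edges again has eigenvalues in $(0,1)$, and the mixing parameter is absorbed into that combination. So $\theta_i'\in\Theta_0(\mathcal{N}_i')$ and $\mathbf{p}\in\mathcal{M}_{\mathcal{N}_1'}\cap\mathcal{M}_{\mathcal{N}_2'}$.

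Next I would use full identifiability. Displayed networks of level-1 networks are level-1 on the same $n$ taxa, so Theorem~\ref{thm:level1identifiability} applies to $\mathcal{N}_1'$ and $\mathcal{N}_2'$. Since $\mathcal{M}_{\mathcal{N}_1'}\cap\mathcal{M}_{\mathcal{N}_2'}\neq\emptyset$, it forces $\mathcal{N}_1'=\mathcal{N}_2'=:\mathcal{N}'$ modulo the placement of reticulations in triangles.

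The main obstacle is this triangle ambiguity: $\mathcal{N}_1'$ and $\mathcal{N}_2'$ may differ in where a triangle's reticulation sits, so the common $\mathcal{N}'$ must be displayed by both $\mathcal{N}_1$ and $\mathcal{N}_2$ as graphs, not merely up to this equivalence. I would handle it with the corollary's convention that networks are considered modulo triangle reticulation placement, together with the observation that redirecting a triangle's reticulation does not change the model. Concretely, the triangle's three edges can be reparametrized, as in the paper's identification of triangles, so $\mathcal{M}^+$ is invariant under this redirection. Displayedness should likewise be read modulo this equivalence. Deleting a reticulation edge in a triangle yields, after suppression, the same tree-like structure whichever reticulation edge is chosen, up to the position of the suppressed node. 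I would verify this small case explicitly.

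Finally I would conclude. $\mathcal{N}'$ is displayed by both $\mathcal{N}_1$ and $\mathcal{N}_2$. Since displayedness is a partial order on a finite set, $\mathcal{N}'\le\mathcal{N}$ for some $\mathcal{N}\in{\rm max}(\mathcal{N}_1,\mathcal{N}_2)$. Therefore $\mathbf{p}\in\mathcal{M}_{\mathcal{N}'}\subseteq\mathcal{M}^+_{\mathcal{N}'}\subseteq\mathcal{M}^+_{\mathcal{N}}$, which gives the required inclusion.
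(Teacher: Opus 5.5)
Your proof is correct within the paper's convention on triangles (see the caveat below), and it is organized differently from the paper's.

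\textbf{Comparison of routes.} The paper inducts on $\max(r_1,r_2)$. It peels off one reticulation at a time via $\mathcal{M}_1^+=\mathcal{M}_1\cup\bigcup_i\big((\mathcal{M}^i_{11})^+\cup(\mathcal{M}^i_{12})^+\big)$. It then proves the auxiliary claim $\mathcal{M}_1\cap\mathcal{M}_2^+=\mathcal{M}_1\cap\mathcal{M}_2$ by a second induction on $r_2$. Finally it matches the maximal shared networks of the subproblems with ${\rm max}(\mathcal{N}_1,\mathcal{N}_2)$, separately for $r_2<r_1$ and $r_1=r_2$.

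You instead work pointwise. Collapsing all degenerate mixing parameters at once gives the stratification $\mathcal{M}^+_{\mathcal{N}}=\bigcup_{\mathcal{N}'\le\mathcal{N}}\mathcal{M}_{\mathcal{N}'}$. A single application of Theorem~\ref{thm:level1identifiability}, plus finiteness of the displayed-network order, then finishes. The paper's induction is essentially this stratification unrolled one reticulation at a time. Your version avoids the nested induction and the case split. It also makes explicit the one technical point the paper uses silently when it identifies the $\delta_i\in\{0,1\}$ slices of $\mathcal{M}_1^+$ with $(\mathcal{M}^i_{1j})^+$: inherited parameters of a displayed network lie in $\Theta_0$. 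And it makes clear that the corollary follows formally from Theorem~\ref{thm:level1identifiability} together with closure of level-1 networks under taking displayed networks.

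\textbf{Correction on triangles.} Redirecting a triangle's reticulation does \emph{not} leave $\mathcal{M}^+$ unchanged on $\Theta_0^+$, and no reparametrization of the triangle's edges achieves this. The Gross--Long invariance is a statement about varieties, not about the restricted images.

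Already for 3-sunlets under JC, take the parameterization of Example~\ref{ex:3sunlet}, where the reticulation is adjacent to leaf $3$. There $q_{\rm ACC}q_{\rm CAC}/q_{\rm CCA}=c_{\rm C}^2(\delta d_{\rm C}f_{\rm C}+(1-\delta)e_{\rm C})(\delta d_{\rm C}+(1-\delta)e_{\rm C}f_{\rm C})/f_{\rm C}$, which exceeds $1$ for small $f_{\rm C}$. The same ratio is $<1$ everywhere on $\mathcal{M}^+$ of the sunlet whose reticulation is adjacent to leaf $1$, and on the star tree. So the two models differ, even though they intersect outside the star-tree model (for example, at symmetric parameter choices).

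The paper's proof tacitly relies on the same invariance; its opening case ``$\mathcal{N}_1=\mathcal{N}_2$'' taken modulo triangles needs it. So this is not a weakness of your route relative to the paper's. Still, your argument does not need invariance. Read the model of a triangle class as the union of the models of its representatives.
\begin{itemize}
\item For $\subseteq$, choose the representative of the maximal class whose surviving triangles are oriented as in $\mathcal{N}_1'$. It displays $\mathcal{N}_1'$ as a graph, so $\mathbf{p}\in\mathcal{M}_{\mathcal{N}_1'}$ lies in its $\mathcal{M}^+$.
\item For $\supseteq$, each representative of a shared class is displayed, as a graph, by a suitably oriented representative of each $\mathcal{N}_i$.
\end{itemize}
Your observation that deleting either reticulation edge of a triangle yields the same displayed network is exactly what makes these choices consistent.
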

\begin{proof}
    First, observe that if $\mathcal{N}_1=\mathcal{N}_2$ then ${\rm max}(\mathcal{N}_1,\mathcal{N}_2)=\mathcal{N}_1$ and so the result holds. We may therefore assume that $\mathcal{N}_1\neq\mathcal{N}_2$.
    
    Let $\mathcal{N}_j$ have $r_j$ reticulation vertices for $j=1,2$. We proceed by induction on ${\rm max}(r_1,r_2)$. For the base case, suppose ${\rm max}(r_1,r_2) = 0$, so that $r_1=r_2 =0$ and $\mathcal{N}_1$ and $\mathcal{N}_2$ are unrooted trees. Since neither $\mathcal{N}_1$ nor $\mathcal{N}_2$ has any non-trivial displayed networks and $\mathcal{N}_1\neq\mathcal{N}_2$, we have ${\rm max}(\mathcal{N}_1,\mathcal{N}_2) = \emptyset$. Since these trees are distinct, there exists a subset of 4 taxa $\mathcal{Y}$ such that $\mathcal{N}_1|_{\mathcal{Y}}$ and $\mathcal{N}_2|_{\mathcal{Y}}$ are distinct quartets. By either \cite[Proposition~2.9]{englander2025identifiability} (JC, K2P) or Remark~\ref{rem:quartetpermutation} (K3P) and Proposition~\ref{prop:restriction} we have $\mathcal{M}_1\cap\mathcal{M}_2 = \emptyset$. Now since neither $\mathcal{N}_1$ nor $\mathcal{N}_2$ has any reticulation vertices, we have $\Theta_0(\mathcal{N}_i) = \Theta_0^+(\mathcal{N}_i)$ for $i=1,2$. It follows that $\mathcal{M}_1^+\cap\mathcal{M}_2^+ = \emptyset$.

    For the induction step, suppose ${\rm max}(r_1,r_2) = r$ and assume, without loss of generality, that $r_1 = r$, so $r_2\leq r$. Let $v_1,\ldots, v_r$ be the reticulation vertices of $\mathcal{N}_1$, and let $\mathcal{N}^i_{11}$ and $\mathcal{N}^i_{12}$ be the displayed networks of $\mathcal{N}_1$ obtained by removing one of the pair of reticulation edges incident to $v_i$ for $i=1,\ldots, r$. For each reticulation vertex $v_i$, by Lemma \ref{lem:displayedNetworkParam} we can write the parameterization of $\mathcal{N}_1$ as
    $$\psi_{\mathcal{N}_1} = \delta_i\psi_{\mathcal{N}^i_{11}} + (1-\delta_i)\psi_{\mathcal{N}^i_{12}},$$ where $\delta_i$ is the mixing parameter associated to $v_i$. Then it is clear that
    \begin{equation}\label{eqn:m1plus}
    \mathcal{M}_1^+ = \mathcal{M}_1\cup\bigcup_{i = 1}^r\big((\mathcal{M}^i_{11})^+\cup(\mathcal{M}^i_{12})^+\big).
    \end{equation}
    Here, the $(\mathcal{M}^i_{1j})^+$ are the models of the displayed networks $\mathcal{N}^i_{1j}$ on the parameter set $\Theta^+_0(\mathcal{N}^i_{1j})$, viewed as a subset of $\Theta^+_0(\mathcal{N}_1)$. They are the points in $\mathcal{M}_1^+$ got when $\delta_i$ is 0 or 1. It follows that
    \begin{equation}\label{eqn:m1pluscapm2plus}
        \mathcal{M}_1^+\cap\mathcal{M}_2^+=(\mathcal{M}_1\cap\mathcal{M}_2^+) \cup \bigcup_{i = 1}^r\big(((\mathcal{M}^i_{11})^+\cap\mathcal{M}_2^+)\cup((\mathcal{M}^i_{12})^+\cap\mathcal{M}_2^+)\big).
    \end{equation}
    We claim that $\mathcal{M}_1\cap\mathcal{M}_2^+=\mathcal{M}_1\cap\mathcal{M}_2$. We will prove this by induction on $r_2$. For the base case $r_2 = 0$, we have that $\mathcal{N}_2$ is a tree and therefore $\mathcal{M}_2^+ =\mathcal{M}_2$.
    For the induction step (on~$r_2$) we write
    \begin{equation}\label{eqn:m2plus}
    \mathcal{M}_2^+ = \mathcal{M}_2\cup\bigcup_{i = 1}^{r_2}\big((\mathcal{M}^i_{21})^+\cup(\mathcal{M}^i_{22})^+\big),
    \end{equation}
    analogously to equation (\ref{eqn:m1plus}). Then we have
    $$\mathcal{M}_1\cap\mathcal{M}_2^+=(\mathcal{M}_1\cap\mathcal{M}_2)\cup\bigcup_{i=1}^{r_2}\big((\mathcal{M}_1\cap(\mathcal{M}_{21}^i)^+)\cup(\mathcal{M}_1\cap(\mathcal{M}_{22}^i)^+)\big).$$
    Now, each $\mathcal{N}_{2j}^i$ has $r_2-1$ reticulation vertices, for $i=1,\ldots, r_2$ and $j=1,2$. By induction then, we have $\mathcal{M}_1\cap(\mathcal{M}_{2j}^i)^+=\mathcal{M}_1\cap\mathcal{M}_{2j}^i$. Since $\mathcal{N}_{2j}^i$ has strictly fewer reticulation vertices than $\mathcal{N}_1$, they must be distinct. Then by Theorem \ref{thm:level1identifiability} we have $\mathcal{M}_1\cap\mathcal{M}_{2j}^i=\emptyset$. It follows that $\mathcal{M}_1\cap\mathcal{M}_2^+=\mathcal{M}_1\cap\mathcal{M}_2$, so the claim is proved.
    
    Returning to the main proof, we can now write equation (\ref{eqn:m1pluscapm2plus}) as
    $$\mathcal{M}_1^+\cap\mathcal{M}_2^+=(\mathcal{M}_1\cap\mathcal{M}_2) \cup \bigcup_{i = 1}^r\big(((\mathcal{M}^i_{11})^+\cap\mathcal{M}_2^+)\cup((\mathcal{M}^i_{12})^+\cap\mathcal{M}_2^+)\big).$$
    By Theorem \ref{thm:level1identifiability}, since $\mathcal{N}_1$ and $\mathcal{N}_2$ are distinct, we have $\mathcal{M}_1\cap\mathcal{M}_2=\emptyset$. It follows that 
    \begin{equation}\label{eqn:m1pluscapm2plus2}
        \mathcal{M}_1^+\cap\mathcal{M}_2^+=\bigcup_{i = 1}^r\big(((\mathcal{M}^i_{11})^+\cap\mathcal{M}_2^+)\cup((\mathcal{M}^i_{12})^+\cap\mathcal{M}_2^+)\big).
    \end{equation}
    We now consider two cases. First, suppose that $r_2<r_1=r$. Since for each $i=1,\ldots,r$ the networks $\mathcal{N}_{11}^i$ and $\mathcal{N}_{12}^i$ have $r-1$ reticulation vertices,  by the induction hypothesis we have
    $$ (\mathcal{M}^i_{11})^+\cap\mathcal{M}_2^+= \bigcup_{\mathcal{N}\in {\rm max}(\mathcal{N}_{11}^i,\mathcal{N}_2)}\mathcal{M}_{\mathcal{N}}^+$$
    and similarly for $\mathcal{N}^i_{12}$. It follows then that
    \begin{align*}
        \mathcal{M}_1^+\cap\mathcal{M}_2^+ &= \bigcup_{i=1}^r\Big(\big(\bigcup_{\mathcal{N}\in {\rm max}(\mathcal{N}_{11}^i,\mathcal{N}_2)}\mathcal{M}_{\mathcal{N}}^+\big)\cup\big(\bigcup_{\mathcal{N}\in {\rm max}(\mathcal{N}_{12}^i,\mathcal{N}_2)}\mathcal{M}_{\mathcal{N}}^+\big)\Big) \\
        &=\bigcup_{\mathcal{N}\in \cup_{i=1}^r({\rm max}(\mathcal{N}_{11}^i,\mathcal{N}_2)\cup{\rm max}(\mathcal{N}_{12}^i,\mathcal{N}_2))}\mathcal{M}_{\mathcal{N}}^+.
    \end{align*}
    Next we observe that since $\mathcal{N}_1\neq\mathcal{N}_2$, any shared maximal displayed network of $\mathcal{N}_1$ and $\mathcal{N}_2$ is strictly smaller than $\mathcal{N}_1$. It must therefore be a displayed network of either $\mathcal{N}_{11}^i$ or $\mathcal{N}_{12}^i$ for some $i\in\{1,\ldots,r\}$, so we have 
    $${\rm max}(\mathcal{N}_1, \mathcal{N}_2) \subseteq \bigcup_{i=1}^r\big({\rm max}(\mathcal{N}_{11}^i,\mathcal{N}_2)\cup{\rm max}(\mathcal{N}_{12}^i,\mathcal{N}_2)\big),$$
    and it follows that 
    $$ \bigcup_{\mathcal{N}\in {\rm max}(\mathcal{N}_1,\mathcal{N}_2)}\mathcal{M}^+_{\mathcal{N}} \subseteq \bigcup_{\mathcal{N}\in \cup_{i=1}^r({\rm max}(\mathcal{N}_{11}^i,\mathcal{N}_2)\cup{\rm max}(\mathcal{N}_{12}^i,\mathcal{N}_2))}\mathcal{M}_{\mathcal{N}}^+.$$
    On the other hand, if $\mathcal{N}\in{\rm max}(\mathcal{N}_{1j}^i,\mathcal{N}_2)$ for $j\in\{1,2\}$ then $\mathcal{N}$ is a displayed network of $\mathcal{N}_1$ and $\mathcal{N}_2$ so there must exist some $\mathcal{N}'\in {\rm max}(\mathcal{N}_{1},\mathcal{N}_2)$ with $\mathcal{N}\leq\mathcal{N}'$ (possibly $\mathcal{N}'=\mathcal{N}$). It follows then that
    $$\mathcal{M}_{\mathcal{N}}^+\subseteq\bigcup_{\mathcal{N}'\in {\rm max}(\mathcal{N}_1,\mathcal{N}_2)}\mathcal{M}^+_{\mathcal{N}'},$$
    for all $\mathcal{N}\in{\rm max}(\mathcal{N}_{1j}^i,\mathcal{N}_2)$ for $j\in\{1,2\}$, thereby proving the result for the case $r_2<r_1$.

    For the case that $r_1=r_2=r$, we substitute equation (\ref{eqn:m2plus}) into equation (\ref{eqn:m1pluscapm2plus2}) to obtain
    \begin{align*}
        \mathcal{M}_1^+\cap\mathcal{M}_2^+ = &\bigcup_{i_1=1}^r\Big[\big((\mathcal{M}_{11}^{i_1})^+\cap\mathcal{M}_2\big)\cup\big((\mathcal{M}_{12}^{i_1})^+\cap\mathcal{M}_2\big)\Big]\cup \\
        &\bigcup_{i_1=1}^r\bigcup_{i_2=1}^{r}\Big[\big((\mathcal{M}_{11}^{i_1})^+\cap(\mathcal{M}_{21}^{i_2})^+\big)\cup\big((\mathcal{M}_{11}^{i_1})^+\cap(\mathcal{M}_{22}^{i_2})^+\big) \\
        &\cup\big((\mathcal{M}_{12}^{i_1})^+\cap(\mathcal{M}_{21}^{i_2})^+\big)\cup\big((\mathcal{M}_{12}^{i_1})^+\cap(\mathcal{M}_{22}^{i_2})^+\big)\Big].
    \end{align*}
    By our previous claim we have that $(\mathcal{M}_{1j}^{i_1})^+\cap\mathcal{M}_2=\mathcal{M}_{1j}^{i_1}\cap\mathcal{M}_2$ for $i_1=1,\ldots,r$ and $j=1,2$. Now in each case $\mathcal{N}_{1j}^{i_1}$ has strictly fewer reticulation vertices than $\mathcal{N}_2$, and so these networks must be distinct. By Theorem \ref{thm:level1identifiability} it follows that $(\mathcal{M}_{1j}^{i_1})^+\cap\mathcal{M}_2=\emptyset$. Next we apply the induction hypothesis to all terms of the form $(\mathcal{M}_{1j_1}^{i_1})^+\cap(\mathcal{M}_{2j_2}^{i_2})^+$ (since in all cases the two networks have strictly fewer than $r$ reticulation vertices). We thus have
    $$(\mathcal{M}_{1j_1}^{i_1})^+\cap(\mathcal{M}_{2j_2}^{i_2})^+=\bigcup_{\mathcal{N}\in {\rm max}(\mathcal{N}_{1j_1}^{i_1},\mathcal{N}_{2j_2}^{i_2})}\mathcal{M}_{\mathcal{N}}^+.$$
    We finally observe that since $\mathcal{N}_1\neq\mathcal{N}_2$, any maximal shared displayed network of $\mathcal{N}_1$ and $\mathcal{N}_2$ is neither $\mathcal{N}_1$ nor $\mathcal{N}_2$. Furthermore, since both $\mathcal{N}_1$ and $\mathcal{N}_2$ have $r$ reticulation vertices, any maximal shared displayed network of $\mathcal{N}_1$ and $\mathcal{N}_2$ must have at most $r-1$ reticulation vertices, and is therefore a maximal displayed network of $\mathcal{N}_{1j_1}^{i_1}$ and $\mathcal{N}_{2j_2}^{i_2}$ for some  $i_1,i_2\in\{1,\ldots,r\}$ and $j_1,j_2\in\{1,2\}$. It follows that the set of maximal displayed networks of $\mathcal{N}_1$ and $\mathcal{N}_2$ is a subset of the union of the set of maximal displayed networks of $\mathcal{N}_{1j_1}^{i_1}$ and $\mathcal{N}_{2j_2}^{i_2}$ over all $i_1,i_2=1,\ldots,r$ and $j_1,j_2=1,2$, that is,
    $$ \bigcup_{\mathcal{N}\in {\rm max}(\mathcal{N}_1,\mathcal{N}_2)}\mathcal{M}^+_{\mathcal{N}} \subseteq \bigcup_{i_1,i_2,j_1,j_2}\bigcup_{\mathcal{N}\in {\rm max}(\mathcal{N}_{1j_1}^{i_1},\mathcal{N}_{2j_2}^{i_2})}\mathcal{M}_{\mathcal{N}}^+.$$
    As in the previous case, for any $\mathcal{N}\in{\rm max}(\mathcal{N}_{1j_1}^{i_1},\mathcal{N}_{2j_2}^{i_2})$ we also have
    $$\mathcal{M}_{\mathcal{N}}^+\subseteq\bigcup_{\mathcal{N}'\in {\rm max}(\mathcal{N}_1,\mathcal{N}_2)}\mathcal{M}^+_{\mathcal{N}'}$$
    thereby proving the result.
\end{proof}

\section{Tree-Network Distinguishability}\label{sec:levelktrinet}

In this section we extend our view to more general phylogenetic networks by removing the level-1 restriction. In particular, we extend the trinet inequality from Section~\ref{sec:trinet} to trinets of any level, under the JC model. We then prove a lemma that allows us to lift this result from trinets to networks with an arbitrary number of leaves, ultimately yielding a general tree-network distinguishability result: no phylogenetic tree and non-tree phylogenetic network (modulo suppressing 2-blobs) can produce identical leaf-pattern distributions. To this end, we first show that we can disregard certain substructures in a network called 2-sub-blobs~\cite{ane2024anomalous}.

\subsection{Suppressing 2-sub-blobs}
Let $\mathcal{N} = (V,E)$ be a semi-directed phylogenetic network, let $W \subseteq V$, and let $B = \mathcal{N}[W]$ be the subgraph induced by~$W$. We call $B$ a \emph{2-sub-blob} of $\mathcal{N}$ if 
\begin{enumerate}[label={(\roman*)}, noitemsep]
    \item $B$ is connected;
    \item $B$ contains no edge that is a cut-edge of $\mathcal{N}$;
    \item $W$ contains exactly two vertices that are adjacent to a vertex in $V \setminus W$ in $\mathcal{N}$.
\end{enumerate}
Observe that a $2$-blob is also a 2-sub-blob, but a 2-sub-blob need not necessarily be a 2-blob, as it can be contained in a $k$-blob, $k\geq 2$. See for example Figure~\ref{fig:2subblobs}.

A 2-sub-blob~$B$ of a semi-directed phylogenetic network $\mathcal{N}$ \emph{traps the root} if in any rooting of $\mathcal{N}$, the root is contained in the 2-sub-blob (or in a larger blob that is suppressed into the $2$-sub-blob when obtaining the semi-directed network).
%\todo{I think the root be contained ``in'' parallel edges that are suppressed when obtaining the semi-directed network.}

By \emph{suppressing} a 2-sub-blob~$B$, we mean contracting all vertices in $B$ to a single vertex, and then suppressing that vertex. Note that the mixed graph obtained by this operation is also a semi-directed network. See again Figure~\ref{fig:2subblobs}.

\begin{figure}[h!]
    \centering
    \begin{tikzpicture}[scale=0.27]
    \tikzstyle{node}=[circle, draw=black, fill=black, scale=0.25]
    \tikzstyle{edge}=[]
    \tikzstyle{special_edge}=[thick]
    \tikzstyle{dashed_edge}=[dashed]
    \tikzstyle{arc}=[-{Latex[scale=1.2]}]
    \tikzstyle{dashed_arc}=[-{Latex[scale=1.2]}, dashed]
    \tikzstyle{none}=[]
    \tikzstyle{medium_label}=[none,scale=1.0]
    \tikzstyle{internal_node}=[shape=circle,draw=black,scale=0.3]
    \tikzstyle{special_node}=[shape=circle,draw=black,scale=0.3, thick]
    \tikzstyle{leaf_node}=[shape=circle,draw=black,scale=0.3,fill=black] %0.275
    \tikzstyle{ret_arc}=[-{>[scale=.8]}, dashed]
    \tikzstyle{special_edge2}=[-{>[scale=.8]}, dashed, thick]
    \tikzstyle{small_label}=[scale=0.85]

	\begin{pgfonlayer}{nodelayer}
		\node [style={medium_label}] (70) at (12.75, 7.25) {$\mathcal{N}_2$};
		\node [style={internal_node}] (453) at (12.75, 14.25) {};
		\node [style={internal_node}] (454) at (17, 7.25) {};
		\node [style={internal_node}] (455) at (21.25, 14.25) {};
		\node [style={internal_node}] (456) at (19.5, 14.25) {};
		\node [style={internal_node}] (457) at (20.25, 12.75) {};
		\node [style={internal_node}] (458) at (17.75, 14.25) {};
		\node [style={internal_node}] (459) at (19.5, 11.5) {};
		\node [style={special_node}] (460) at (13.5, 12.75) {};
		\node [style={special_node}] (461) at (16, 8.75) {};
		\node [style={special_node}] (462) at (14.5, 9.5) {};
		\node [style={special_node}] (463) at (16, 10.5) {};
		\node [style={special_node}] (464) at (13.5, 11) {};
		\node [style={special_node}] (465) at (15, 12) {};
		\node [style={leaf_node}] (466) at (22.5, 15.5) {};
		\node [style={leaf_node}] (467) at (17, 5.75) {};
		\node [style={leaf_node}] (468) at (11.5, 15.5) {};
		\node [style={internal_node}] (487) at (-2.5, 14.25) {};
		\node [style={internal_node}] (488) at (1.75, 7.25) {};
		\node [style={internal_node}] (489) at (6, 14.25) {};
		\node [style={internal_node}] (490) at (4.25, 14.25) {};
		\node [style={internal_node}] (491) at (5, 12.75) {};
		\node [style={internal_node}] (492) at (2.5, 14.25) {};
		\node [style={internal_node}] (493) at (4.25, 11.5) {};
		\node [style={special_node}] (494) at (-1.75, 12.75) {};
		\node [style={special_node}] (495) at (0.75, 8.75) {};
		\node [style={special_node}] (496) at (-0.75, 9.5) {};
		\node [style={special_node}] (497) at (0.75, 10.5) {};
		\node [style={special_node}] (498) at (-1.75, 11) {};
		\node [style={special_node}] (499) at (-0.25, 12) {};
		\node [style={leaf_node}] (500) at (7.25, 15.5) {};
		\node [style={leaf_node}] (501) at (1.75, 5.75) {};
		\node [style={leaf_node}] (502) at (-3.75, 15.5) {};
		\node [style={small_label}] (503) at (11, 16) {$y$};
		\node [style={small_label}] (504) at (23.25, 16) {$z$};
		\node [style={small_label}] (505) at (17, 5) {$x$};
		\node [style={small_label}] (506) at (-4.25, 16) {$y$};
		\node [style={small_label}] (507) at (8, 16) {$z$};
		\node [style={small_label}] (508) at (1.75, 5) {$x$};
		\node [style={internal_node}] (509) at (28, 14.25) {};
		\node [style={internal_node}] (510) at (32.25, 7.25) {};
		\node [style={internal_node}] (511) at (36.5, 14.25) {};
		\node [style={internal_node}] (512) at (34.75, 14.25) {};
		\node [style={internal_node}] (513) at (35.5, 12.75) {};
		\node [style={internal_node}] (514) at (33, 14.25) {};
		\node [style={internal_node}] (515) at (34.75, 11.5) {};
		\node [style={leaf_node}] (522) at (37.75, 15.5) {};
		\node [style={leaf_node}] (523) at (32.25, 5.75) {};
		\node [style={leaf_node}] (524) at (26.75, 15.5) {};
		\node [style={small_label}] (525) at (26.25, 16) {$y$};
		\node [style={small_label}] (526) at (38.5, 16) {$z$};
		\node [style={small_label}] (527) at (32.25, 5) {$x$};
		\node [style={medium_label}] (528) at (29, 7.25) {$\mathcal{N}_3$};
		\node [style={medium_label}] (529) at (-2.5, 7.25) {$\mathcal{N}_1$};
	\end{pgfonlayer}
	\begin{pgfonlayer}{edgelayer}
		\draw [style={ret_arc}] (461) to (454);
		\draw [style={ret_arc}] (459) to (454);
		\draw [style={special_edge2}] (462) to (461);
		\draw [style={special_edge2}] (463) to (461);
		\draw [style={special_edge2}] (464) to (460);
		\draw [style={special_edge2}] (465) to (460);
		\draw [style={special_edge2}] (464) to (465);
		\draw [style={special_edge2}] (463) to (465);
		\draw [style={special_edge}] (462) to (464);
		\draw [style={special_edge}] (462) to (463);
		\draw [style=edge] (460) to (453);
		\draw [style=edge] (453) to (458);
		\draw [style=edge] (458) to (459);
		\draw [style={ret_arc}] (459) to (457);
		\draw [style={ret_arc}] (456) to (457);
		\draw [style={ret_arc}] (457) to (455);
		\draw [style={ret_arc}] (456) to (455);
		\draw [style=edge] (458) to (456);
		\draw [style=edge] (466) to (455);
		\draw [style=edge] (453) to (468);
		\draw [style=edge] (454) to (467);
		\draw [style={ret_arc}] (495) to (488);
		\draw [style={ret_arc}] (493) to (488);
		\draw [style={special_edge2}] (496) to (495);
		\draw [style={special_edge2}] (497) to (495);
		\draw [style={special_edge}] (496) to (498);
		\draw [style=edge] (494) to (487);
		\draw [style=edge] (487) to (492);
		\draw [style=edge] (492) to (493);
		\draw [style={ret_arc}] (493) to (491);
		\draw [style={ret_arc}] (490) to (491);
		\draw [style={ret_arc}] (491) to (489);
		\draw [style={ret_arc}] (490) to (489);
		\draw [style=edge] (492) to (490);
		\draw [style=edge] (500) to (489);
		\draw [style=edge] (487) to (502);
		\draw [style=edge] (488) to (501);
		\draw [style={special_edge2}] (496) to (497);
		\draw [style={special_edge2}] (499) to (497);
		\draw [style={special_edge2}] (494) to (498);
		\draw [style={special_edge2}] (499) to (498);
		\draw [style={special_edge}] (494) to (499);
		\draw [style={ret_arc}] (515) to (510);
		\draw [style=edge] (509) to (514);
		\draw [style=edge] (514) to (515);
		\draw [style={ret_arc}] (515) to (513);
		\draw [style={ret_arc}] (512) to (513);
		\draw [style={ret_arc}] (513) to (511);
		\draw [style={ret_arc}] (512) to (511);
		\draw [style=edge] (514) to (512);
		\draw [style=edge] (522) to (511);
		\draw [style=edge] (509) to (524);
		\draw [style=edge] (510) to (523);
		\draw [style={ret_arc}] (509) to (510);
	\end{pgfonlayer}
\end{tikzpicture}

    \caption{Three trinets with leaf set $\{x, y, z\}$. $\mathcal{N}_1$ and $\mathcal{N}_2$ both have a 2-sub-blob (in bold). The 2-sub-blob in $\mathcal{N}_2$ traps the root and the one in $\mathcal{N}_1$ does not. The trinet $\mathcal{N}_3$ is the one obtained from $\mathcal{N}_1$ or $\mathcal{N}_2$ by suppressing their respective 2-sub-blobs.}
    \label{fig:2subblobs}
\end{figure}

The following result follows by a straightforward generalization of the results in Section~5 of~\cite{sullivant2025phylogenetic} from 2-blobs to 2-sub-blobs. The arguments given there do not rely on the considered subgraph being a blob itself. In particular, they apply equally to a 2-sub-blob that is contained within a larger $k$-blob, for some $k \geq 2$. We note that our models and restricted parameter set $\Theta_0$ also satisfy the assumptions required for those results (multiplicative closure, closure under marginal convex combinations, time-reversibility, splittability, and being a semi-algebraic set; see also Section~\ref{sec:prelim}).
\begin{lemma}\label{lem:suppres_2sub-blobs}
    Let $\mathcal{N}$ be a semi-directed phylogenetic network with a 2-sub-blob~$B$ and let $\mathcal{N}'$ be the semi-directed phylogenetic network obtained from $\mathcal{N}$ by suppressing $B$. Let $\mathcal{M}$ and $\mathcal{M}'$ be the corresponding models under %either
    JC, K2P, or K3P %models
    on the parameter set $\Theta_0(\mathcal{N})$.
    \begin{enumerate}[label={(\roman*)}, noitemsep]
        \item If $B$ traps the root, then $\mathcal{M} \subseteq \mathcal{M}'$ and $\mathcal{M}, \mathcal{M}'$ have the same topological closure.
        \item If $B$ does not trap the root, then $\mathcal{M} = \mathcal{M}'$.
    \end{enumerate}
    \xqed{\proofbox}
\end{lemma}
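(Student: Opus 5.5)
The plan is to reduce to a local statement about the 2-sub-blob $B$ and the two edges attaching it to the rest of $\mathcal{N}$. Let $u,w\in W$ be the two vertices of $B$ adjacent to $V\setminus W$. Because $B$ contains no cut edge, $u$ and $w$ each have exactly one neighbour outside $W$: if $u$ had two outside edges, its one remaining edge into $B$ would separate the rest of $B$ from $\mathcal{N}$ and hence be a cut edge. Call these outside edges $e_u$ and $e_w$. Every displayed tree $\mathcal{T}_\sigma$ of $\mathcal{N}$ (fix a rooting) then splits into two parts: the portion of the tree outside $B$, and the set of lineages passing through $B$ between $e_u$ and $e_w$. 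In the displayed tree, the restriction to $W$ is either a single path from $u$ to $w$ (so $B$ lies on the path between the two sides) or it is disconnected between $u$ and $w$. The second case can only occur if some part of $B$ contains the root, since otherwise every vertex of $B$ has an ancestral path entering through $e_u$ or $e_w$.

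\textbf{Case (ii): $B$ does not trap the root.} First I would choose a rooting with the root outside $B$. All edges of $B$ are then oriented consistently from one attaching vertex to the other, say $u$ to $w$. Marginalizing the states at the internal vertices of $B$, for a fixed choice of reticulation edges inside $B$, gives a product of transition matrices along a $u$–$w$ path. Summing over those choices weighted by the mixing parameters gives a convex combination $M_B$ of such products. Since all reticulation choices inside $B$ act only through this factor, and each displayed tree of $\mathcal{N}$ factors as (outside part)$\times$(path through $B$), the distribution of $\mathcal{N}$ equals that of $\mathcal{N}'$ with the single edge $e_u e_w$ carrying matrix $M_{e_u}M_BM_{e_w}$. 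Multiplicative closure and closure under convex combinations (group-based matrices are simultaneously diagonalized, so products and mixtures remain diagonal in the Fourier basis with eigenvalues in $(0,1)$) place this matrix in the model. Hence $\mathcal{M}\subseteq\mathcal{M}'$. Conversely, for any $\mathcal{N}'$ parameter I would realise the new edge by setting all $B$ edges and $e_w$ to matrices whose eigenvalues are close to but below $1$, adjusting $e_u$ so the composite eigenvalue is exact. This uses \emph{splittability}: any eigenvalue in $(0,1)$ factors as a product of eigenvalues in $(0,1)$. The resulting mixture over paths of identical composite eigenvalues needs care. I would make each $u$–$w$ path through $B$ have the same product by choosing per-path parameters, or simpler, give the edges so that every path product equals a common target. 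I expect this to be doable because $B$ is a directed acyclic structure in which a potential-function assignment works: set the eigenvalue of edge $(v,v')$ to $\lambda^{h(v')-h(v)}$ for a height function $h$ on $B$. This yields $\mathcal{M}'\subseteq\mathcal{M}$.

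\textbf{Case (i): $B$ traps the root.} Here some displayed trees disconnect $u$ from $w$ inside $B$, namely those in which the root lies in $B$ and the lineages to $u$ and $w$ separate. In Fourier coordinates such a tree contributes a term in which the composite factor for the $e_u e_w$ edge is nonzero only for the group identity, i.e. an edge eigenvalue $0$. The effective edge therefore has eigenvalue a convex combination $\sum_\sigma \delta_\sigma \lambda_\sigma$, where some $\lambda_\sigma$ are path products in $(0,1)$ and the others are $0$. The result still lies in $(0,1)$, giving $\mathcal{M}\subseteq\mathcal{M}'$, but it cannot reach the values achievable on $\mathcal{M}'$ when all paths are used. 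Using the height-function construction with mixing weights on the root-separating trees tending to $0$ shows that $\mathcal{M}'$ lies in the closure of $\mathcal{M}$, which gives equal closures.

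The main obstacle is bookkeeping the rootings. The precise claim that, in the non-trapping case, a rooting exists in which no displayed tree separates $u$ from $w$ inside $B$, together with handling 2-sub-blobs sitting inside a larger blob whose reticulations outside $B$ still interact through $e_u,e_w$, is where I would follow Section~5 of \cite{sullivant2025phylogenetic} most closely. I would first verify that the factorization "outside part $\times$ $B$-factor" holds uniformly over the outside reticulation choices.
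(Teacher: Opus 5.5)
Case (ii) of your proposal follows the route the paper cites: marginalise the interior of $B$, use multiplicative and convex closure for $\mathcal{M}\subseteq\mathcal{M}'$, and use near-identity edges plus splitting for the converse. The genuine gap is in case (i), which rests on a false picture of the displayed trees.

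\textbf{Main gap: case (i).} If the root lies in $B$, both external edges $e_u,e_w$ point out of $B$. Were $e_u$ to point into $u$, its tail could be reached from the root only through $e_w$. The descendants of $u$ could also leave $B$ only through $e_w$, and this gives a directed cycle. Consequences:
\begin{itemize}
\item In every displayed tree the root-to-$u$ and root-to-$w$ paths stay inside $B$. So $u$ and $w$ are always joined inside $B$ by an up-down path $P_\sigma$ through their lowest common ancestor, and the ``disconnected'' branch of your dichotomy never occurs.
\item Even the Fourier claim is wrong: no such tree contributes an eigenvalue $0$. For group-based models the Fourier parametrization does not depend on the root position. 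Every edge of $P_\sigma$ induces the same leaf split as $e_u$ and $e_w$, so it carries the same label $g$.
\item Everything else in $B$, including the root-to-LCA segment, is leafless on one side, carries ${\rm A}$ and contributes $1$. Hence the $B$-factor is $\sum_\sigma\delta_\sigma\prod_{e\in P_\sigma}a^e_g$, exactly the form in (ii).
\end{itemize}
As written, your inclusion in (i) is derived from a wrong formula. Your closure argument (``mixing weights on the root-separating trees tending to $0$'') is about trees that do not exist.

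What is missing is this rooting-independence (time-reversibility) step, which lets an up-down path through $B$ be treated like a directed one. Probabilistically, the root-to-LCA segment marginalises out because the uniform root distribution is stationary, and the upward half can be reversed because group-based matrices are symmetric. With this step, (i) reduces to the computation in (ii), which gives $\mathcal{M}\subseteq\mathcal{M}'$. Your near-identity construction, applied to targets with positive entries, then gives the density needed for equal closures. The same Fourier bookkeeping also settles the rooting and larger-blob worries you raise at the end: if an outside choice drops $e_w$, then $B$ is leafless on one side and contributes $1$ in both $\mathcal{N}$ and $\mathcal{N}'$.

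\textbf{Two smaller issues.}

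First, your derivation that $u$ and $w$ each have exactly one outside neighbour is incorrect. Deleting $u$'s only edge into $B$ need not disconnect $\mathcal{N}$, since the rest of $B$ stays attached through $w$. For example, two adjacent vertices on a cycle satisfy the three defining conditions literally. Having exactly two external edges is what makes ``suppressing'' $B$ meaningful, so it must be read into the definition rather than deduced from the absence of cut edges.

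Second, eigenvalue-level splittability suffices for JC, but not for K2P and K3P. There $\Theta_0$ also imposes stochasticity; for K2P with $a_{\rm C}=a_{\rm T}=b$ and $a_{\rm G}=c$ this means $1-2b+c\ge 0$. Any product of two matrices from $\Theta_0$ has strictly positive entries. So when the target matrix has a zero entry, your rescaled $e_u$ (target eigenvalues divided by those of $M_BM_{e_w}$) cannot lie in $\Theta_0$, and no parameters on $B$, $e_u$, $e_w$ realise that target. The paper's blanket appeal to splittability passes over the same boundary case. For targets with positive entries your construction is fine.
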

It remains an open question whether part (i) can be strengthened to yield equality~\cite{sullivant2025phylogenetic}. However, the lemma suffices to prove the following corollary, which lets us consider only networks with suppressed 2-sub-blobs for identifiability purposes. See again Section~5 of \cite{sullivant2025phylogenetic} for the 2-blob analogue.

\begin{corollary}\label{cor:safely_suppres_2blobs}
    Let $\mathcal{N}_1$ and $\mathcal{N}_2$ be two $n$-leaf, semi-directed phylogenetic networks for $n\geq 3$ and let $\mathcal{N}'_1$ and $\mathcal{N}'_2$ be the corresponding networks obtained by suppressing all 2-sub-blobs. Let $\mathcal{M}_1$, $\mathcal{M}_2$, $\mathcal{M}'_1$, and $\mathcal{M}'_2$ be the corresponding models under
    %for either the 
    JC, K2P, or K3P %models
    on the  parameter sets $\Theta_0(\mathcal{N}_1)$ and $\Theta_0(\mathcal{N}_2)$ respectively. If $\mathcal{M}'_1\cap\mathcal{M}'_2=\emptyset$, then $\mathcal{M}_1\cap\mathcal{M}_2=\emptyset$.
\end{corollary}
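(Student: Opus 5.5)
We argue by contraposition, using Lemma~\ref{lem:suppres_2sub-blobs} to compare each model with the model of its suppressed network. The key observation is that in either case of that lemma we get the inclusion $\mathcal{M}\subseteq\mathcal{M}'$: in case (i) it is stated directly, and in case (ii) we even have equality. So suppressing a single 2-sub-blob can only enlarge (or preserve) the model. Consequently, if $\mathcal{M}_1\subseteq\mathcal{M}_1'$ and $\mathcal{M}_2\subseteq\mathcal{M}_2'$, then $\mathcal{M}_1\cap\mathcal{M}_2\subseteq\mathcal{M}_1'\cap\mathcal{M}_2'=\emptyset$, which is the claim.

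The first step is to establish $\mathcal{M}_i\subseteq\mathcal{M}_i'$ when $\mathcal{N}_i'$ is obtained by suppressing \emph{all} 2-sub-blobs. We proceed by induction on the number of suppression operations. Suppressing a 2-sub-blob $B$ gives a semi-directed network (as noted after the definition). Repeating this while 2-sub-blobs remain terminates, because each suppression strictly decreases the number of vertices. This yields a chain
\[
\mathcal{N}_i=\mathcal{N}_i^{(0)},\ \mathcal{N}_i^{(1)},\ \ldots,\ \mathcal{N}_i^{(t)}=\mathcal{N}_i',
\]
and Lemma~\ref{lem:suppres_2sub-blobs} applied at each step gives $\mathcal{M}_{\mathcal{N}_i^{(s)}}\subseteq\mathcal{M}_{\mathcal{N}_i^{(s+1)}}$. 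Transitivity then gives $\mathcal{M}_i\subseteq\mathcal{M}_i'$.

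Some care is needed with the parameter sets. Lemma~\ref{lem:suppres_2sub-blobs} is phrased with both models on ``$\Theta_0(\mathcal{N})$'', but the intended meaning is that $\mathcal{M}'$ is the image of $\Theta_0(\mathcal{N}')$ under $\psi_{\mathcal{N}'}$. We apply it in that sense at each step of the chain, so each model is taken over its own natural parameter space.

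The main obstacle is making sure ``the network obtained by suppressing all 2-sub-blobs'' is well defined. If different suppression orders gave different final networks, the hypothesis on $\mathcal{M}'_i$ would be ambiguous. This does not actually matter for the argument: whatever terminal network $\mathcal{N}_i'$ the corollary refers to, it is reached by some finite sequence of single suppressions, and the inclusion argument above applies along that sequence. We therefore fix such a sequence and do not attempt to prove uniqueness. Finally, each intermediate network must be a genuine semi-directed network with at least $n\ge 3$ leaves, so that the lemma applies; this holds because suppression never removes leaves.
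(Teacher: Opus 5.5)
Your proposal is correct and is essentially the paper's argument: each single suppression gives $\mathcal{M}\subseteq\mathcal{M}'$ by Lemma~\ref{lem:suppres_2sub-blobs}, and iterating these suppressions yields $\mathcal{M}_1\cap\mathcal{M}_2\subseteq\mathcal{M}'_1\cap\mathcal{M}'_2=\emptyset$. Your remarks on termination, on each model living on its own parameter space, on the order of suppressions, and on leaves never lying in a 2-sub-blob are sound bookkeeping that the paper leaves implicit.
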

\begin{proof}
By Lemma~\ref{lem:suppres_2sub-blobs}, if $\mathcal{N}'$ is obtained from a semi-directed network $\mathcal{N}$ by suppressing a single 2-sub-blob, then $\mathcal{M}_\mathcal{N} \subseteq \mathcal{M}_{\mathcal{N}'}$ and the two models have the same topological closure. Consequently, if a third network $\mathcal{N}''$ satisfies $\mathcal{M}_{\mathcal{N}'} \cap \mathcal{M}_{\mathcal{N}''} = \emptyset$, then also $\mathcal{M}_\mathcal{N} \cap \mathcal{M}_{\mathcal{N}''} = \emptyset$.

Applying this argument iteratively over all suppressed 2-sub-blobs yields the result.
\end{proof}

\subsection{Trinet Inequality (Level-$k$)}

We first prove the following lemma. The proof uses Lemma~1 and Theorem~1 of \cite{holtgrefe2026characterizing}, which provide a characterization of when a mixed graph is a semi-directed phylogenetic network. In particular, they imply that any semi-directed phylogenetic network either contains a leaf adjacent to a reticulation or contains a \emph{cherry}, that is, a pair of leaves adjacent to the same vertex.
\begin{lemma}\label{lem:knet_reticulation_leaf}
    Let $\mathcal{N}$ be a semi-directed phylogenetic network on $\mathcal{X}$ with $n = |\mathcal{X}|$. If $\mathcal{N}$ has a non-trivial $n$-blob, then there exists a reticulation vertex that is adjacent to a leaf.
\end{lemma}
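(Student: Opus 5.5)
The plan is to combine the dichotomy quoted just before the statement with the structure forced by the $n$-blob. By Lemma~1 and Theorem~1 of \cite{holtgrefe2026characterizing}, every semi-directed phylogenetic network contains either a leaf adjacent to a reticulation vertex, or a cherry. If the first alternative holds we are done. So it suffices to show that $\mathcal{N}$ cannot contain a cherry.

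Suppose for contradiction that leaves $x,y$ are both adjacent to a vertex $u$. First I would rule out that $u$ is a reticulation vertex. A reticulation vertex in a binary network has degree three: two incoming reticulation edges and one outgoing edge. It is therefore adjacent to at most one leaf through an undirected edge, since a leaf cannot be the tail of a reticulation edge. Thus $u$ is a tree vertex, and its third incident edge $e$ joins $u$ to some vertex $w$.

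Next I would argue that $e$ is a cut edge inducing the split $\{x,y\}\,|\,\mathcal{X}\setminus\{x,y\}$. Removing $e$ separates the component $\{u,x,y\}$ from the rest of the graph, because the only edges at $u$ go to $x$, $y$ and $w$, and leaves have degree one.

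Now consider the non-trivial $n$-blob $B$. Being an $n$-blob, it has exactly $n$ incident cut edges. I would show that each leaf edge is incident to $B$. In the tree-of-blobs, $B$ corresponds to a vertex of degree $n$, and a tree with $n$ leaves has at most one vertex of degree $n$, namely a star centre adjacent to all leaves. Hence the tree-of-blobs is the $n$-star, and every leaf's pendant edge attaches directly to $B$. In particular the edges $ux$ and $uy$ are incident to $B$, so $u\in B$. But $u$ is a vertex of the non-trivial blob $B$, so $u$ must lie on a cycle within $B$. This is impossible: all three edges at $u$ are cut edges, since $e$ is a cut edge and so are the two leaf edges. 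So $u$ is an isolated vertex of the cut-edge-free subgraph, contradicting that $B$ is non-trivial (contains at least two vertices) and connected.

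The case $n\le 2$ needs no separate handling. The step I expect to need the most care is the claim that an $n$-blob forces every leaf to be adjacent to the blob, i.e.\ the tree-of-blobs argument. This requires the convention that each leaf's pendant edge counts as one of the $n$ incident edges and that no two leaves sit behind a common cut edge. I would phrase it directly: each of the $n$ incident cut edges of $B$ leads to a side containing at least one leaf. Since there are exactly $n$ leaves, each side contains exactly one leaf. Hence no cut edge separates two leaves from the others, contradicting the existence of $e$ (when $n\ge 3$).
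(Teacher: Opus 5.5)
Your strategy is the paper's: apply the dichotomy of \cite{holtgrefe2026characterizing} and rule out a cherry. The paper rules out the cherry in one line, by asserting that the $n$ leaves are adjacent to pairwise distinct vertices. The gap is in your justification of that assertion. The step ``the tree-of-blobs is the $n$-star, hence every leaf's pendant edge attaches directly to $B$'' does not follow. The tree-of-blobs suppresses degree-2 vertices, so any 2-blob lying between $B$ and a leaf is invisible in it. For example, insert a 2-blob (e.g.\ one like the bold 2-blob of $\mathcal{N}_1$ in Figure~\ref{fig:tree_vs_network}) on a pendant edge of a 3-sunlet. The network still has a non-trivial $3$-blob, but that leaf is adjacent to a vertex of the 2-blob, not of $B$. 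Since the lemma does not exclude 2-blobs, your conclusion $u\in B$ is unsupported. Your fallback sentence ``no cut edge separates two leaves from the others'' has a related defect: for $n=3$ it is false, since the pendant edge of the third leaf does exactly this. Moreover, $n=3$ is precisely the case in which the lemma is applied, in Lemma~\ref{lem:trinetJC_levk}.

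The repair uses only pieces you already have. Your counting argument correctly shows that each of the $n$ components of $\mathcal{N}$ minus $V(B)$ (one behind each incident cut edge of $B$) contains exactly one leaf. This is weaker than ``every pendant edge is incident to $B$'', but it suffices. You also showed that all three edges at $u$ are cut edges, so $u$ cannot lie in the non-trivial blob $B$. Every vertex of $B$ is incident to a non-cut edge of $B$. Hence $\{u,x,y\}$ is a connected set avoiding $V(B)$, so it lies inside a single such component, which would then contain two leaves, a contradiction. This argument works for every $n\ge 3$ and regardless of 2-blobs, and it is exactly the content of the paper's one-line claim.
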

\begin{proof}
    Since $\mathcal{N}$ has a non-trivial $n$-blob, the $n$ leaves of $\mathcal{N}$ are each adjacent to distinct vertices. In particular, $\mathcal{N}$ contains no cherry. Hence, by Lemma~1 and Theorem~1 of \cite{holtgrefe2026characterizing}, $\mathcal{N}$ must contain a leaf adjacent to a reticulation vertex.
\end{proof}

Before we prove a lemma which will be crucial in our inductive proof of the two main results of this section, we require some further terminology.

Let $\mathcal{N}$ be a semi-directed phylogenetic network with leaf set $\mathcal{X} = \{x, y, z\}$ and with a reticulation vertex~$r$ that is adjacent to leaf~$x \in \mathcal{X}$. Let $W \subseteq V$ be the set of all vertices on at least one up-down path between $y$ and~$z$; set $U = V \setminus W$. We call the induced subgraph $\mathcal{N}[W]$ the \emph{basin (of~$x$)} and---following~\cite{allman2025beyond}---the induced subgraph $\mathcal{N}[U]$ the \emph{funnel (of~$x$)}. The \emph{attachment points} of the basin are its vertices adjacent to a vertex in the funnel. See Figure~\ref{fig:funnel} for an example. Observe that $\mathcal{N} [W]$ is a semi-directed network with leaf set~$\{y, z\}$ in a slightly more general sense, since it may have degree-2 vertices. For the following proof, we extend the definition of 2-(sub)-blobs naturally to these slightly more general networks.

    \begin{figure}[h!]
        \centering
    \begin{tikzpicture}[scale=0.36]
    \tikzstyle{node}=[circle, draw=black, fill=black, scale=0.25]
    \tikzstyle{edge}=[]
    \tikzstyle{special_edge}=[thick]
    \tikzstyle{special_edge3}=[dotted]
    \tikzstyle{dashed_edge}=[dashed]
    \tikzstyle{arc}=[-{Latex[scale=1.2]}]
    \tikzstyle{dashed_arc}=[-{Latex[scale=1.2]}, dashed]
    \tikzstyle{none}=[]
    \tikzstyle{medium_label}=[none,scale=1.0]
    \tikzstyle{main_label}=[none,scale=0.8]
    \tikzstyle{internal_node}=[shape=circle,draw=black,scale=0.3]
    \tikzstyle{special_node}=[shape=circle,draw=red,scale=0.3, very thick]
    \tikzstyle{leaf_node}=[shape=circle,draw=black,scale=0.3,fill=black] %0.275
    \tikzstyle{ret_arc}=[-{>[scale=.8]}, dashed]
    \tikzstyle{special_edge2}=[-{>[scale=.8]}, dashed, thick]
    \tikzstyle{small_label}=[scale=0.85]

	\begin{pgfonlayer}{nodelayer}
		\node [style={internal_node}] (509) at (27.75, 14.5) {};
		\node [style={internal_node}] (510) at (32, 7.5) {};
		\node [style={internal_node}] (511) at (36.25, 14.5) {};
		\node [style={internal_node}] (512) at (32.75, 14.5) {};
		\node [style={special_node}] (513) at (34.5, 11.75) {};
		\node [style={internal_node}] (515) at (33.25, 9.5) {};
		\node [style={leaf_node}] (522) at (37.5, 15.75) {};
		\node [style={leaf_node}] (523) at (32, 6) {};
		\node [style={leaf_node}] (524) at (26.5, 15.75) {};
		\node [style={small_label}] (525) at (26, 16.25) {$y$};
		\node [style={small_label}] (526) at (38.25, 16.25) {$z$};
		\node [style={small_label}] (527) at (32, 5.25) {$x$};
		\node [style={internal_node}] (530) at (29.5, 14.5) {};
		\node [style={internal_node}] (531) at (31.25, 14.5) {};
		\node [style={internal_node}] (532) at (28.75, 13) {};
		\node [style={special_node}] (533) at (29.5, 11.75) {};
		\node [style={special_node}] (534) at (33.75, 13) {};
		\node [style={small_label}] (535) at (31, 8.25) {$e'$};
		\node [style={small_label}] (536) at (33, 8.25) {$e''$};
		\node [style={internal_node}] (537) at (42, 13) {};
		\node [style={internal_node}] (539) at (46.25, 10.5) {};
		\node [style={internal_node}] (540) at (44.25, 10.5) {};
		\node [style={internal_node}] (541) at (46.25, 8.75) {};
		\node [style={internal_node}] (542) at (45.25, 7) {};
		\node [style={leaf_node}] (543) at (47.5, 11.75) {};
		\node [style={leaf_node}] (544) at (45.25, 5.75) {};
		\node [style={leaf_node}] (545) at (41.25, 13.75) {};
		\node [style={small_label}] (546) at (41, 14.5) {$y$};
		\node [style={small_label}] (547) at (47.75, 12.5) {$z$};
		\node [style={small_label}] (548) at (45.25, 5) {$x$};
		\node [style={internal_node}] (549) at (43.25, 12.5) {};
		\node [style={internal_node}] (550) at (43.5, 11.25) {};
		\node [style={internal_node}] (551) at (42.25, 11.75) {};
		\node [style={internal_node}] (553) at (44.25, 8.75) {};
		\node [style={internal_node}] (554) at (52.5, 9.5) {};
		\node [style={leaf_node}] (560) at (56.75, 10.25) {};
		\node [style={leaf_node}] (561) at (54.25, 5.75) {};
		\node [style={leaf_node}] (562) at (51.75, 10.25) {};
		\node [style={small_label}] (563) at (51.5, 11) {$y$};
		\node [style={small_label}] (564) at (57, 11) {$z$};
		\node [style={small_label}] (565) at (54.25, 5) {$x$};
		\node [style={internal_node}] (566) at (54.25, 9.5) {};
		\node [style={internal_node}] (567) at (56, 9.5) {};
		\node [style={internal_node}] (568) at (53.5, 8) {};
		\node [style={internal_node}] (569) at (54.25, 6.75) {};
		\node [style={medium_label}] (570) at (28.75, 6.5) {$\mathcal{N}$};
		\node [style={medium_label}] (571) at (42.25, 6.75) {$\mathcal{N} - e'$};
		\node [style={medium_label}] (572) at (52, 6.75) {$\mathcal{N} - e''$};
		\node [style=none] (573) at (26.5, 10.5) {};
		\node [style=none] (574) at (37.25, 10.5) {};
		\node [style={main_label}] (575) at (27.5, 11.5) {$W$};
		\node [style={main_label}] (576) at (27.5, 9.5) {$U$};
	\end{pgfonlayer}
	\begin{pgfonlayer}{edgelayer}
		\draw [style={ret_arc}] (515) to (510);
		\draw [style={ret_arc}] (513) to (511);
		\draw [style={ret_arc}] (512) to (511);
		\draw [style=edge] (522) to (511);
		\draw [style=edge] (509) to (524);
		\draw [style=edge] (510) to (523);
		\draw [style={ret_arc}] (532) to (509);
		\draw [style={ret_arc}] (530) to (509);
		\draw [style={ret_arc}] (532) to (530);
		\draw [style={ret_arc}] (531) to (530);
		\draw [style=edge] (532) to (533);
		\draw [style=edge] (533) to (531);
		\draw [style={ret_arc}] (533) to (510);
		\draw [style={ret_arc}] (513) to (515);
		\draw [style={ret_arc}, bend right=15] (534) to (515);
		\draw [style=edge] (513) to (534);
		\draw [style=edge] (534) to (512);
		\draw [style=edge] (512) to (531);
		\draw [style={ret_arc}] (541) to (539);
		\draw [style={ret_arc}] (540) to (539);
		\draw [style=edge] (543) to (539);
		\draw [style=edge] (537) to (545);
		\draw [style={ret_arc}] (551) to (537);
		\draw [style={ret_arc}] (549) to (537);
		\draw [style={ret_arc}] (551) to (549);
		\draw [style={ret_arc}] (550) to (549);
		\draw [style={ret_arc}] (541) to (542);
		\draw [style={ret_arc}] (553) to (542);
		\draw [style=edge] (541) to (553);
		\draw [style=edge] (553) to (540);
		\draw [style=edge] (540) to (550);
		\draw [style=edge] (551) to (550);
		\draw [style={ret_arc}] (542) to (544);
		\draw [style=edge] (554) to (562);
		\draw [style={ret_arc}] (568) to (554);
		\draw [style={ret_arc}] (566) to (554);
		\draw [style={ret_arc}] (568) to (566);
		\draw [style={ret_arc}] (567) to (566);
		\draw [style=edge] (568) to (569);
		\draw [style=edge] (569) to (567);
		\draw [style=edge] (560) to (567);
		\draw [style=edge] (561) to (569);
		\draw [style={special_edge3}] (573.center) to (574.center);
	\end{pgfonlayer}
\end{tikzpicture}
        \caption{\emph{Left:} A level-5 trinet~$\mathcal{N}$ with the vertices~$W$ in the basin for~$x$ and the vertices~$U$ in the funnel for~$x$. The three thick red vertices are the attachment points of the basin. \emph{Right:} The two displayed networks $\mathcal{N}''=\mathcal{N} - e'$ and $\mathcal{N}'=\mathcal{N}-e''$.}
        \label{fig:funnel}
    \end{figure}

\begin{lemma}\label{lem:trinet_display_3blob}
    Let $\mathcal{N}$ be a trinet with a non-trivial 3-blob. Let $r$ be a reticulation vertex that is adjacent to a leaf and let $e'$ and $e''$ be the corresponding reticulation edges. If $\mathcal{N}$ is strict level-$k$, $k\geq 2$, and it has no 2-sub-blobs, then at least one of $\mathcal{N} - e'$ and $\mathcal{N} - e''$ has a non-trivial 3-blob.
\end{lemma}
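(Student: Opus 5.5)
We use the basin/funnel decomposition just introduced. Let $x$ be the leaf adjacent to $r$ and write $\mathcal{N}[W]$ for the basin of $x$ and $\mathcal{N}[U]$ for its funnel; $U$ contains $r$ and $x$. In either displayed network $\mathcal{N}-e'$ or $\mathcal{N}-e''$, the up-down paths between $y$ and $z$ are unchanged, because no up-down path between $y$ and $z$ uses an edge into $r$ (it would have to continue to $x$). So the basin survives intact in both displayed networks. The question is therefore only how $x$ is reattached to the basin.

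\textbf{Step 1 (the basin is a single blob-like piece with several attachment points).} Every vertex of $\mathcal{N}$ lies in the non-trivial 3-blob apart from the three leaves and their pendant edges. Hence $\mathcal{N}[W]$ minus the pendant edges of $y$ and $z$ is connected. It should be 2-edge-connected, or else consist of pieces joined in series between $y$ and $z$. Since $\mathcal{N}$ has no 2-sub-blobs, I would argue the following: any maximal bridgeless piece of the basin that meets the funnel at fewer than two points, or is separated from the rest by only two connections, would be a 2-sub-blob of $\mathcal{N}$, which is a contradiction. This should give that the basin has at least two attachment points.

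\textbf{Step 2 (reattaching $x$).} In $\mathcal{N}-e'$ the leaf $x$ is joined to the basin through the funnel minus $e'$. After taking up-down paths and suppressing, this yields a tree-like or blob-like connection. The reticulation edges $e'$ and $e''$ lead into the funnel's two sides, and the parts of the funnel above them reach attachment points $A'$ and $A''$ respectively. I would show that $A'\cup A''$ is the set of all attachment points, with $A'\neq\emptyset$ and $A''\neq\emptyset$.

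\textbf{Step 3 (finding the 3-blob).} Suppose both $\mathcal{N}-e'$ and $\mathcal{N}-e''$ have only a trivial blob structure on $x,y,z$, i.e. trivial 3-blobs after suppressing 2-blobs. Then in each of them there is a cut edge separating $y$ from $z$ together with $x$'s attachment, or the remaining part is a chain of 2-blobs. Combining the two pictures and using that the basin is shared, I would show that the basin itself splits as a series of 2-sub-blobs between $y$ and $z$, with $x$'s attachments all landing inside one of them or at one junction. That configuration either produces a cut edge in $\mathcal{N}$ (contradicting that $\mathcal{N}$ has a non-trivial 3-blob) or a 2-sub-blob of $\mathcal{N}$ (contradicting the hypothesis). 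The assumption of strict level $k\geq 2$ is used to rule out the level-1 situation, where both deletions genuinely give trees. With $k\geq 2$, the funnel or the basin retains a reticulation, and so a cycle through attachment points survives in at least one deletion.

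\textbf{Main obstacle.} I expect Step 3 to be the hardest part. It requires a careful case analysis of where $A'$ and $A''$ sit in the blob/cut-edge structure of the basin viewed as a network on $\{y,z\}$. The first thing I would try is induction: take the decomposition of the basin into its chain of blobs between $y$ and $z$, and show that if every blob of that chain met $A'$ only, or $A''$ only, in a single point, then some blob, or the funnel region above $e'$, would be a 2-sub-blob of $\mathcal{N}$.
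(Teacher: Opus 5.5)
Your Steps~1 and~2 contain correct preliminary observations. The basin is untouched by either deletion, and the absence of 2-sub-blobs forces at least two attachment points; the paper uses both facts. However, the actual content of the lemma, namely exhibiting a non-trivial 3-blob in $\mathcal{N}-e'$ or $\mathcal{N}-e''$, is deferred entirely to Step~3. That step is only announced (``I would show that the basin itself splits as a series of 2-sub-blobs\ldots''). You give no argument for why two trivial-blob pictures would combine into a cut edge or a 2-sub-blob of $\mathcal{N}$.

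The closing heuristic that for $k\geq 2$ ``a cycle through attachment points survives in at least one deletion'' does not suffice. The reticulations left in a displayed network can sit inside 2-blobs of that network. Displayed networks may have 2-sub-blobs even though $\mathcal{N}$ has none, and re-taking up-down paths after deleting an edge can discard parts of the funnel. What you must produce is a blob of a specific displayed network, together with three vertex-disjoint up-down paths from it to $x$, $y$ and $z$. Your unproven partition of the attachment points into $A'$ and $A''$ does not supply this.

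The missing idea is the paper's case split on whether the basin $W$ contains a reticulation of $\mathcal{N}$.

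\textbf{Case 1: the basin contains a reticulation.} Then the basin has a non-trivial 2-blob $Q$. This $Q$ must have an attachment point $a$, since otherwise it would be a 2-sub-blob of $\mathcal{N}$. An up-down path from $a$ to $x$ through the funnel uses one of the two reticulation edges, say $e'$; this is what decides which deletion to use. In $\mathcal{N}-e''$, this path together with a $y$--$z$ up-down path through $Q$ gives three vertex-disjoint up-down paths from $Q$ to the three leaves. Hence $Q$ lies in a non-trivial 3-blob.

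\textbf{Case 2: the basin contains no reticulation.} Then the basin is a path, so the hypothesis $k\geq 2$ puts a second reticulation in the funnel. The lowest reticulation other than $r$ is a parent of $r$, say $u'$. There are two sub-cases.
\begin{enumerate}
\item At least two attachment points reach $u'$ by semi-directed paths avoiding the basin. Then $u'$ lies in a non-trivial 3-blob of $\mathcal{N}-e''$.
\item Only one attachment point $a$ does. The union of those paths is not a 2-sub-blob, so some vertex of it has a semi-directed path to $r$ through $e''$. A second attachment point $b\neq a$ also has a semi-directed path to $r$ through $e''$ avoiding the basin. The first common vertex of these two paths is a reticulation whose blob in $\mathcal{N}-e'$ is a non-trivial 3-blob.
\end{enumerate}

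You need an explicit construction of this kind, or a genuinely completed version of your contradiction argument. As written, the proposal does not prove the lemma.
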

\begin{proof}
Let $x$ be the leaf adjacent to $r$. Let $y$ and $z$ denote the other two leaves of $\mathcal{N} = (V, E)$. Set $\mathcal{N}' := \mathcal{N} - e''$ and $\mathcal{N}'' := \mathcal{N} - e'$, where $e' = (u', r)$ and $e'' = (u'', r)$. Let $W \subseteq V$ be the vertices in the basin of~$x$ and $U = V\setminus W$ the vertices in the funnel of~$x$. See also Figure~\ref{fig:funnel}. We now distinguish two cases.

\textbf{Case 1:} \emph{$W$ contains a reticulation vertex of $\mathcal{N}$.} In this case, the basin $\mathcal{N}[W]$ has at least one non-trivial 2-blob~$Q$. We claim that $Q$ has at least one attachment point $a$. Otherwise, $Q$ would form a 2-sub-blob in $\mathcal{N}$, a contradiction. Let $\pi$ be an up-down path between $a$ and $x$ in $\mathcal{N}[U\cup\{a\}]$; assume without loss of generality
that it uses $e'$. Now consider $\mathcal{N}'$ (which retains $e'$) and any up-down path~$P$ between~$y$ and~$z$ that contains at least one vertex (and hence at least one edge) of~$Q$. Then~$P$ contains vertex-disjoint up-down paths between a vertex of~$Q$ and~$y$ and between a different vertex of~$Q$ and~$z$. Together with~$\pi$, this gives
vertex-disjoint up-down paths from $Q$ to all three leaves. Thus, $Q$ is part of some non-trivial 3-blob~$B$ of~$\mathcal{N}'$. (Note that~$B$ contains~$Q$ but is not necessarily equal to~$Q$.)

\textbf{Case 2:} \emph{$W$ contains no reticulation vertex of $\mathcal{N}$.} In this case $\mathcal{N}[W]$ is a path between~$y$ and~$z$. Since~$k \geq 2$, there exist at least two reticulation vertices in the funnel. Consider a lowest reticulation not equal to~$r$. This must be a parent of~$r$, say~$u'$ (without loss of generality).

First suppose that there exist at least two attachment points with semi-directed paths to~$u'$, not using any edges of the basin. Then the blob of $\mathcal{N}'$ containing $u'$ is a non-trivial $3$-blob.

Now suppose that there exists only one attachment point~$a$ with a semi-directed path to~$u'$, not traversing any edges of the basin. Let~$A$ be the union of all such paths. Since~$A$ is not a $2$-sub-blob, there is a semi-directed path~$P_a$ from a vertex of~$A$ to~$r$ traversing~$e''$.

Now observe that there exists a different attachment point~$b\neq a$ because otherwise the funnel edge incident to~$a$ would be a cut edge and hence there would be a $2$-blob. Moreover, there exists at least one semi-directed path~$P_b$ from~$b$ to~$r$ traversing~$e''$ but no basin edges. The first vertex that is on both paths~$P_a,P_b$ is a reticulation vertex~$r^*$. It follows that, in this case the blob of $\mathcal{N}''$ containing $r^*$ is a non-trivial $3$-blob.
\end{proof}

We now give the level-$k$ trinet inequality %considering first
under the JC model. Note that for the case where $\mathcal{N}_2$ is level-2, the following lemma was proved in \cite{englander2025identifiability}. Our generalization positively answers their Conjecture~2.16.
\begin{lemma}\label{lem:trinetJC_levk}
    Let $\mathcal{N}_1$ be a trinet without a non-trivial 3-blob and $\mathcal{N}_2$ a trinet with a non-trivial 3-blob, and let $\mathcal{M}_1$ and $\mathcal{M}_2$ be the associated models under the JC evolutionary model with parameter values in $\Theta_0(\mathcal{N}_1)$ and $\Theta_0(\mathcal{N}_2)$ respectively. Then %\correction{there exists an ordering of the leaves for which}
    the polynomial
    $$ Q = q_{\rm ACC}q_{\rm CAC}q_{\rm CCA} - q_{\rm AAA} q_{\rm TCG}^2$$
    is zero on $\mathcal{M}_1$ and strictly positive on $\mathcal{M}_2$. Thus $\mathcal{M}_1\cap\mathcal{M}_2=\emptyset$.
\end{lemma}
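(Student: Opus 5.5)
The plan is to show the two claims separately: $Q$ vanishes on trinets without a non-trivial 3-blob, and $Q$ is strictly positive on trinets with one. For the first, suppress all 2-sub-blobs of $\mathcal{N}_1$. Since $\mathcal{N}_1$ has no non-trivial 3-blob, what remains is the 3-star tree. By Lemma~\ref{lem:suppres_2sub-blobs}, $\mathcal{M}_1$ is contained in the 3-star model, so it suffices to check $Q=0$ on the star tree. Write $a,b,c$ for the unique non-identity JC eigenvalues of the three pendant edges. Then $q_{\rm ACC}q_{\rm CAC}q_{\rm CCA}=(bc)(ac)(ab)$ and $q_{\rm AAA}q_{\rm TCG}^2=1\cdot(abc)^2$, so $Q=0$.

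Two observations will be used throughout.
\begin{itemize}
\item Under JC, all non-identity eigenvalues of an edge coincide. Hence the coordinate $q_{\rm TCG}$ is unchanged under any permutation of the leaves, and the first monomial of $Q$ is symmetric as well. We may therefore relabel the leaves freely.
\item Every Fourier coordinate is strictly positive on $\Theta_0$, being a convex combination of products of numbers in $(0,1]$.
\end{itemize}

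For positivity I would prove a stronger two-part statement by induction on the number of reticulation vertices: $Q\geq 0$ on the model of \emph{every} trinet, and $Q>0$ if the trinet has a non-trivial 3-blob. We may first suppress all 2-sub-blobs. By Lemma~\ref{lem:suppres_2sub-blobs} the model only grows (and suppression preserves having a non-trivial 3-blob). Trees give $Q=0$ by the computation above. If the trinet is a 3-sunlet, strict positivity is \cite[Proposition~2.15]{englander2025identifiability}. Otherwise the trinet has a non-trivial $3$-blob of strict level $k\ge 2$. Lemma~\ref{lem:knet_reticulation_leaf} then gives a reticulation $r$ adjacent to a leaf, which we relabel as leaf $1$. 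Let $e',e''$ be its reticulation edges and $\mathcal{N}'=\mathcal{N}-e''$, $\mathcal{N}''=\mathcal{N}-e'$. By Lemma~\ref{lem:displayedNetworkParam}, $q=\delta q'+(1-\delta)q''$ with the inherited parameters, and these parameters lie in $\Theta_0$ of the displayed networks. By Lemma~\ref{lem:paramlemma}, $q_{\rm ACC}=q'_{\rm ACC}=q''_{\rm ACC}$ and $q_{\rm AAA}=q'_{\rm AAA}=q''_{\rm AAA}=1$. By induction $Q'\ge0$ and $Q''\ge 0$. Together with positivity of the coordinates, this gives $q'_{\rm TCG}\le\sqrt{q_{\rm ACC}q'_{\rm CAC}q'_{\rm CCA}}$ and the analogous bound for $q''$. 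Therefore
$$q_{\rm TCG}\le \sqrt{q_{\rm ACC}}\Big(\delta\sqrt{q'_{\rm CAC}q'_{\rm CCA}}+(1-\delta)\sqrt{q''_{\rm CAC}q''_{\rm CCA}}\Big)\le \sqrt{q_{\rm ACC}\,q_{\rm CAC}\,q_{\rm CCA}},$$
where the last step is Cauchy--Schwarz, $\sum w_i\sqrt{\alpha_i\beta_i}\le\sqrt{(\sum w_i\alpha_i)(\sum w_i\beta_i)}$. Squaring yields $Q\ge 0$. The same argument gives $Q\ge0$ at any reticulation adjacent to a leaf. When there is no such reticulation, the trinet has a cherry and no non-trivial 3-blob, so it reduces to the tree case after suppression.

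Strictness is the main obstacle, and it is where Lemma~\ref{lem:trinet_display_3blob} enters. That lemma says at least one of $\mathcal{N}',\mathcal{N}''$, say $\mathcal{N}'$, still has a non-trivial 3-blob. By induction, applied after suppressing its 2-sub-blobs, we get $Q'>0$. Then the first inequality above is strict in the $\mathcal{N}'$ term, because $\delta\in(0,1)$, and so $Q>0$. The care needed is bookkeeping: the inherited parameter point must lie in $\Theta_0(\mathcal{N}')$, and the leaf relabelling must be compatible with the symmetry of $Q$. I expect the symmetry argument above to handle the latter.
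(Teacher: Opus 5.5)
Your proposal is correct and follows essentially the paper's route: suppress 2-sub-blobs, use Lemma~\ref{lem:knet_reticulation_leaf} and JC leaf-symmetry to place a reticulation-adjacent leaf, split along it via Lemma~\ref{lem:paramlemma}, and induct on the number of reticulations, with Lemma~\ref{lem:trinet_display_3blob} supplying strictness. Your Cauchy--Schwarz bound on $q_{\rm TCG}$ is a repackaging of the paper's AM--GM bound on the cross term $C$ in $Q=(\delta')^2Q'+(\delta'')^2Q''+\delta'\delta''C$. Citing \cite{englander2025identifiability} for the 3-sunlet base case, where the paper checks it directly, is fine.
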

\begin{proof}
    For simplicity, we explicitly make the substitution $q_{\rm AAA} = 1$. By Corollary~\ref{cor:safely_suppres_2blobs}, we may assume that $\mathcal{N}_1$ has no 2-sub-blobs, and in particular no 2-blobs, so it is an unrooted 3-leaf tree (i.e., a 3-star tree). It was shown in \cite[Proposition~2.15]{englander2025identifiability} that $Q$ evaluates to zero on $\mathcal{M}_1$. For completeness, if $\mathcal{N}_1$ is a 3-star tree with pendant edges $a$, $b$ and $c$, then under the JC Fourier parameterization $q_{\rm TCG}=a_{\rm T} b_{\rm C} c_{\rm G} = a_{\rm C} b_{\rm C} c_{\rm C}$, $q_{\rm CCA}=a_{\rm C} b_{\rm C}$, $q_{\rm CAC}=a_{\rm C} c_{\rm C}$, and $q_{\rm ACC}=b_{\rm C} c_{\rm C}$. Under the JC model, we have $e_{\rm C} = e_{\rm T} = e_{\rm G}$ for every edge~$e$, so $Q = 0$ on $\mathcal{M}_1$.

    Let $x$ be a leaf adjacent to a reticulation vertex $r$ of $\mathcal{N}_2$, with reticulation edges $e'$ and $e''$, whose existence is guaranteed by Lemma~\ref{lem:knet_reticulation_leaf}.
    We also assume that the leaf~$x$ corresponds to the last position in the indices of the Fourier coordinates. We can assume this without loss of generality because, in the JC model, the polynomial $Q$ is invariant under leaf permutations (recall that for JC all leaf permutations of $q_{\rm TCG}$ all lie in the same equivalence class). Let~$\delta'$ be the mixing parameter corresponding to $e'$ and let $\delta'' := 1-\delta'$ be the mixing parameter corresponding to $e''$. Define $\mathcal{N}'_2 := \mathcal{N}_2 - e''$ and $\mathcal{N}''_2 := \mathcal{N}_2 - e'$. See Figure~\ref{fig:funnel} for an example. By Corollary~\ref{cor:safely_suppres_2blobs}, we may assume that $\mathcal{N}_2$, $\mathcal{N}'_2$, and $\mathcal{N}''_2$ contain no 2-sub-blobs.

    Let
    \[
    Q' = q'_{\rm ACC} q'_{\rm CAC} q'_{\rm CCA} - (q'_{\rm TCG})^2,
    \qquad
    Q'' = q''_{\rm ACC} q''_{\rm CAC} q''_{\rm CCA} - (q''_{\rm TCG})^2
    \]
    be the corresponding polynomials for $\mathcal{N}'_2$ and $\mathcal{N}''_2$. Each Fourier coordinate decomposes as
    $q_{uvw} = \delta' q'_{uvw} + \delta'' q''_{uvw}$.
    Moreover, by Lemma \ref{lem:paramlemma} we have $q_{\rm CCA} = q'_{\rm CCA} = q''_{\rm CCA}$.
    Using this and expanding $Q$ gives
    \begin{align}\label{eq:trinetJC:decomp}
    Q
    &= q_{\rm ACC} q_{\rm CAC} q_{\rm CCA} - q_{\rm TCG}^2 \\
    &= (\delta' q'_{\rm ACC} + \delta'' q''_{\rm ACC})
       (\delta' q'_{\rm CAC} + \delta'' q''_{\rm CAC})
       (\delta' q'_{\rm CCA} + \delta'' q''_{\rm CCA}) \notag\\
    &\qquad
       - (\delta' q'_{\rm TCG} + \delta'' q''_{\rm TCG})^2 \notag\\
    &= (\delta')^2 Q' + (\delta'')^2 Q'' + \delta'\delta''\, C \notag
    \end{align}
    where
    \begin{equation}
    C =
    q_{\rm CCA}(q'_{\rm ACC} q''_{\rm CAC} + q''_{\rm ACC} q'_{\rm CAC})
    - 2 q'_{\rm TCG} q''_{\rm TCG}.
    \end{equation}
    We now prove a claim regarding the cross term~$C$, which will form the basis of our inductive proof below.

    \textbf{Claim.} Suppose that $Q', Q'' \ge 0$. Then,
    \begin{enumerate}[label={(\roman*)}, noitemsep]
        \item $C \geq 0$; and
        \item $C > 0$, if $q'_{\rm ACC} q''_{\rm CAC} \neq q''_{\rm ACC} q'_{\rm CAC}$.
    \end{enumerate}
    
    \noindent
    \emph{Proof of claim.}
       From the definitions of $Q'$ and $Q''$ we obtain
    \[
    q_{\rm CCA} q'_{\rm ACC} q'_{\rm CAC} \ge (q'_{\rm TCG})^2,
    \qquad
    q_{\rm CCA} q''_{\rm ACC} q''_{\rm CAC} \ge (q''_{\rm TCG})^2 ,
    \]
    if $Q', Q'' \geq 0$.
    Applying the AM--GM inequality\footnote{The \emph{AM--GM inequality} states that for $x,y \geq 0$ we have $x+y \geq 2 \sqrt{xy}$, where equality holds if and only if $x=y$.} to $q'_{\rm ACC}q''_{\rm CAC} + q''_{\rm ACC}q'_{\rm CAC}$ gives
    \begin{align*}
        q_{\rm CCA}(q'_{\rm ACC}q''_{\rm CAC} + q''_{\rm ACC}q'_{\rm CAC})
    &\ge
    2\sqrt{(q_{\rm CCA} q'_{\rm ACC} q'_{\rm CAC})(q_{\rm CCA} q''_{\rm ACC} q''_{\rm CAC})} \\
    &\geq 2 q'_{\rm TCG} q''_{\rm TCG},
    \end{align*}
    with equality holding in the first line if and only if $q'_{\rm ACC} q''_{\rm CAC} = q''_{\rm ACC} q'_{\rm CAC}$. %\todo{Is this right? NH: now it should be :) SM: I think the primes are still wrong!}
    Using the bounds above, and the fact that the Fourier coordinates are positive, this concludes the proof of the claim. \xqed{$\diamond$}

    We now prove that $Q$ is strictly positive on $\mathcal{M}_2$ by induction on the number of reticulation vertices in $\mathcal{N}_2$. Since $\mathcal{N}_2$ is a trinet with a 3-blob, the number of reticulation vertices is equal to the level, $k \geq 1$. The base case (and the case $k=2$) was already established in \cite{englander2025identifiability}, but we provide an alternative proof based on the above Claim.

    \textbf{Case: $k=1$.} Then, $\mathcal{N}_2$ is a 3-sunlet, with $\mathcal{N}'_2$ and $\mathcal{N}''_2$ being 3-star trees, and hence $Q' = Q'' = 0$. Since $\delta',\delta'' \in (0,1)$, it suffices by \ref{eq:trinetJC:decomp} to show that $C > 0$. By the Claim above, we only need to show $q'_{\rm ACC} q''_{\rm CAC} \neq q''_{\rm ACC} q'_{\rm CAC}$.
    We now use the edge labelling from Figure~\ref{fig:3sunlet}, where we identify edges $d$ and $e'$, and edges $e$ and $e''$ (i.e. the mixing parameter $\delta'$ corresponds to $d = e'$ and $\delta''$ to $e = e''$). We then have $q'_{\rm ACC} = b_{\rm C} c_{\rm C} d_{\rm C} f_{\rm C}$, $q''_{\rm ACC} = b_{\rm C} c_{\rm C} e_{\rm C}$, $q'_{\rm CAC} = a_{\rm C} c_{\rm C} d_{\rm C}$ and $q''_{\rm CAC} =  a_{\rm C} c_{\rm C} e_{\rm C} f_{\rm C}$. So, $q'_{\rm ACC} q''_{\rm CAC} = ( q''_{\rm ACC} q'_{\rm CAC} ) \cdot f^2_{\rm C}$, and the result follows since the Fourier parameters are in $(0,1)$.
    
    \textbf{Case: $k \geq 2$.} By Lemma~\ref{lem:trinet_display_3blob}, we may assume that $\mathcal{N}'_2$ has a non-trivial 3-blob (with strictly fewer reticulation vertices than $\mathcal{N}_2$) and thus $Q' > 0$ by the induction hypothesis. Moreover, $Q'' \ge 0$, as either the induction hypothesis also applies (in which case $Q'' > 0$) or $\mathcal{N}''_2$ is a tree (in which case $Q''=0$). 
    Since $\delta',\delta'' \in (0,1)$, and since $C \geq 0$ by the claim, the result follows from \eqref{eq:trinetJC:decomp}.
\end{proof}

Note that the proof of Lemma~\ref{lem:trinetJC_levk} cannot readily be adapted to K2P since the polynomial $Q$ in Lemma \ref{lem:trinetk2p} is not invariant under all leaf permutations in K2P (take, for example, $\sigma = (123)$ in cycle notation). Therefore, even though one could assume that the last leaf in the ordering is incident to a reticulation in~$\mathcal{N}_2$, this cannot again be assumed for the network~$\mathcal{N}'_2$, so the induction hypothesis would not apply.

% \begin{lemma}\label{lem:trinetK2P_levk}
%     Let $\mathcal{N}_1$ be a trinet without a non-trivial 3-blob and $\mathcal{N}_2$ a trinet with a non-trivial 3-blob, and let $\mathcal{M}_1$ and $\mathcal{M}_2$ be the associated models under the K2P evolutionary model with parameter values in $\Theta_0(\mathcal{N}_1)$ and $\Theta_0(\mathcal{N}_2)$ respectively. Then the polynomial
%     $$ Q = q_{\rm AGG}q_{\rm GAG}q_{\rm CCA}^2 - q_{\rm AAA} q_{\rm GGA}q_{\rm TCG}^2$$
%     is zero on $\mathcal{M}_1$ and strictly positive on $\mathcal{M}_2$. Thus $\mathcal{M}_1\cap\mathcal{M}_2=\emptyset$.
% \end{lemma}
% The proof of Lemma~\ref{lem:trinetK2P_levk} is analogous to the proof of Lemma~\ref{lem:trinetJC_levk} and given in the appendix.

\subsection{Distinguishing Phylogenetic Trees from Phylogenetic Networks}

Before proving the next proposition, we require the following terminology and auxiliary results. 

A cycle~$C$ in a semi-directed network is called \emph{good} if there exists a unique reticulation vertex~$r$, called the \emph{sink} of~$C$, such that~$C$ contains both reticulation edges entering~$r$. A good cycle is called \emph{excellent} if its sink is adjacent to a leaf. Call a semi-directed network \emph{simple} if it contains at most one $m$-blob with $m\geq 3$.

By Lemma~5.4 of \cite{huber2025quarnets}, after suppressing all 2-blobs, every non-cut-edge of a simple, semi-directed network is contained in an excellent cycle.

\begin{lemma}\label{lem:reticulate_trinet}
Let $\mathcal{N}$ be a semi-directed network on $\mathcal{X}$ containing a non-trivial $m$-blob, where $m\geq 3$. Then there exist leaves $a,b,c\in \mathcal{X}$ such that the trinet $\mathcal{N}|_{\{a,b,c\}}$ contains a non-trivial $3$-blob.
\end{lemma}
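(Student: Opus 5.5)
Let $B$ be a non-trivial $m$-blob of $\mathcal{N}$ with $m\geq 3$. I would first reduce to the simple case. Choose three of the $m$ edges incident to $B$ but not in $B$, say $f_1,f_2,f_3$. Removing $f_i$ disconnects $\mathcal{N}$, and the component not containing $B$ contains at least one leaf; pick a leaf $a$, $b$, $c$ in the respective components. I would choose these leaves so that the up-down paths from $B$ to them are as direct as possible: from the outer endpoint of $f_i$, follow edges away from $B$. Since $f_i$ is a cut edge and the component beyond it is a network piece, there is always an up-down path from $B$ through $f_i$ to some leaf. One has to check that such a path exists respecting edge directions. If $f_i$ is directed into $B$, then by acyclicity and the root conditions the piece beyond contains the root side and still reaches a leaf by a downward path from some tree vertex. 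So an up-down path from $B$ to a leaf through each $f_i$ exists.

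Next I would show that in the restriction $\mathcal{N}|_{\{a,b,c\}}$ the blob $B$, or a non-trivial part of it, survives as a $3$-blob. The plan is to use the excellent-cycle result. Consider the simple network obtained by contracting everything outside $B$: replace each pendant subnetwork beyond an edge $f$ by a single leaf, keeping directions on $f$. After suppressing 2-blobs this gives a simple network in which $B$ (after suppressing its internal 2-sub-blobs) is the unique big blob. By Lemma~5.4 of \cite{huber2025quarnets}, each edge of $B$ lies in an excellent cycle, so there is a good cycle $C$ whose sink $r$ is adjacent to a leaf. Equivalently, in $\mathcal{N}$ the sink's outgoing edge is some $f_1$, so take $a$ beyond $f_1$. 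The cycle $C$ contains at least one further vertex, and I need two more exits $f_2,f_3$ of $B$ with up-down connections to $C$ that leave disjointly. These would come from the tree vertices of $C$: the highest vertex of $C$ has an incoming path from outside or from another exit, and other cycle vertices have outgoing edges leading to exits.

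Finally, I would verify that the union of up-down paths between $a,b,c$ contains $C$ together with three vertex-disjoint paths from $C$ to $a,b,c$. Then after suppressing degree-2 vertices, $C$ lies in a blob of the trinet with three incident cut edges, that is, a non-trivial $3$-blob.

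\textbf{Main obstacle.} The delicate step is guaranteeing that both reticulation edges of $r$ lie on up-down paths between the chosen leaves. Otherwise restriction could delete one arc and collapse $C$. I would handle this by choosing $b$ and $c$ so that one leaf is reached via a semi-directed path through $e'$ and the other via $e''$, both starting from the top of $C$. Then each up-down path from $a$ going up through $r$ uses one parent arc and continues to $b$ or $c$. If the top of $C$ has a unique exit, so that the two branches must share it, then $C$ together with that exit would form a 2-sub-blob. I would then pass to the next excellent cycle, or argue with the blob's $m\ge3$ exits, mirroring the attachment-point argument in the proof of Lemma~\ref{lem:trinet_display_3blob}.
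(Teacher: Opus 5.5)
\textbf{Verdict: genuine gap.} Your outline has the same skeleton as the paper's proof. You reduce to a simple network, discard 2-blobs, take an excellent cycle $C$ via \cite[Lemma~5.4]{huber2025quarnets}, let $a$ be the leaf at its sink, and then look for $b$ and $c$. But the step that carries all the weight is never proved: that there are two vertex-disjoint up-down paths from $C$ to leaves $b,c\neq a$.

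\begin{itemize}
\item Your justification (``the highest vertex of $C$ has an incoming path from outside\ldots, other cycle vertices have outgoing edges leading to exits'') does not exclude that the exits you use lead only to $a$, or that the two paths meet.
\item Your fallback is not valid. $C$ together with one exit is not a 2-sub-blob, since other vertices of $C$ generally have neighbours outside $C$ as well. ``Pass to the next excellent cycle'' and ``argue with the $m\geq 3$ exits'' are not arguments.
\item The exits cannot be fixed in advance, as in your first paragraph. In a 4-sunlet, the three leaves at tree vertices induce a star tree. The reduction must keep one leaf beyond each of the $m$ exits, as your second paragraph does.
\end{itemize}

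The missing ingredient is \cite[Lemma~17]{banos2019identifying}. It gives an edge $e$ of $B$ lying on an up-down path between two leaves $x,y$. Choose $C$ to be an excellent cycle through that $e$. The first and last vertices of this path on $C$ cut it into disjoint up-down paths from $C$ to $x$ and to $y$. If $a\notin\{x,y\}$ you are done. If $a=x$, then, because the blob containing $C$ is not a 2-blob, some up-down path from $y$ to a third leaf $c$ must use an edge of $C$. That path yields disjoint paths from $C$ to $y$ and to $c$.

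You have also mislocated the difficulty. Once such $b$ and $c$ exist, keeping both reticulation edges into $r$ is not the delicate point. Every vertex of the good cycle $C$ has a semi-directed path along $C$ to its sink $r$, and from there to $a$. The paper combines this with up-down paths to $b$ and $c$ to see that the cycle survives in $\mathcal{N}|_{\{a,b,c\}}$ with disjoint connections to $a$, $b$, $c$.

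Finally, suppressing 2-sub-blobs inside $B$ before invoking the excellent-cycle lemma is unnecessary, since only 2-blobs need to be suppressed. It would also oblige you to transfer the cycle you find back to $\mathcal{N}$.
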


\begin{proof}
Since $\mathcal{N}$ contains a non-trivial $m$-blob $B$, there exists a subset
$\mathcal{Y}\subseteq \mathcal{X}$ of %size~$k$
size~$m$ such that $\mathcal{N}|_{\mathcal{Y}}$ is simple and contains a non-trivial
$m$-blob. Indeed, removing $B$ disconnects $\mathcal{N}$ into
exactly~$m$ %$k$
components, each containing at least one leaf. By choosing one leaf
from each such component and letting $\mathcal{Y}$ be the resulting set of leaves, the
subnetwork $\mathcal{N}|_{\mathcal{Y}}$ retains the blob $B$ as a non-trivial $m$-blob.
%$k$-blob.
Thus, replacing $\mathcal{N}$ by $\mathcal{N}|_{\mathcal{Y}}$ if necessary, we may assume
that $\mathcal{N}$ is simple.

We next note that only the structure of the network in and around the
non-trivial $m$-blob $B$ is relevant for the argument. Any $2$-blobs attached to~$B$ do not affect the presence or absence of non-trivial 3-blobs in trinets. Hence, we may assume without loss of
generality that $\mathcal{N}$ has no $2$-blobs.

Let $e$ be an edge in the non-trivial $m$-blob $B$. By \cite[Lemma~17]{banos2019identifying},
there exist leaves $x,y \in \mathcal{X}$ such that $e$ lies on an up-down path
between $x$ and $y$. This lemma is stated for rooted networks, but the result for semi-directed networks follows directly by considering any rooting. By \cite[Lemma~5.4]{huber2025quarnets} (mentioned above), it then follows that the edge~$e$ is contained in an excellent cycle $C$. Let~$a\in\mathcal{X}$ be the leaf incident to the sink of~$C$.
%whose sink is a reticulation vertex $v$ adjacent to a leaf $a \in \mathcal{X}$.

We claim that there exist two disjoint up-down paths from~$C$ to leaves not equal to~$a$.
The up-down path between $x$ and $y$ traversing $e$ contains disjoint up-down paths from~$C$ to~$x$ and~$y$. This proves the claim if $a \notin \{x,y\}$.

Hence, assume $a \in \{x,y\}$, say $a=x$. If every up-down path from~$y$
%$x$
to a leaf in $\mathcal{X}\setminus\{x,y\}$ avoids $C$, this would
imply that the blob containing $C$ is a $2$-blob, contradicting our assumption
that $B$ is a non-trivial $m$-blob, $m\geq 3$. Hence there exists a leaf
$c \in \mathcal{X}\setminus\{x,y\}$ such that some up-down path from
%$x$
$y$ to $c$ uses an
edge of $C$. This path then contains two disjoint up-down paths from~$C$ to~$y$ and~$c$ with~$y,c\neq a$. This concludes the proof of the claim.

Thus, we have shown that~$B$ contains an excellent cycle~$C$ with a leaf~$a$ incident to the unique sink of~$C$ and with disjoint up-down paths from~$C$ to leaves~$b,c\neq a$. Observe that each vertex~$v$ of~$C$ lies on at least one up-down path between leaves in~$\{a,b,c\}$. To see this, note that there exist  up-down paths from~$v$ to~$b$ and~$c$ (by~\cite[Corollary~1]{holtgrefe2026characterizing}) and a semi-directed path from~$v$ to~$a$. Hence, the trinet $\mathcal{N}|_{\{a,b,c\}}$ contains a cycle with disjoint paths between this cycle and~$a,b$ and~$c$. Hence, $\mathcal{N}|_{\{a,b,c\}}$ contains a nontrivial $3$-blob.
\end{proof}

Clearly, a network without a non-trivial $m$-blob, $m\geq 3$, cannot induce a trinet that has a non-trivial $m$-blob. Hence, by combining Theorem~\ref{thm:restriction_network} with Lemma~\ref{lem:trinetJC_levk}, %and~\ref{lem:trinetK2P_levk},
the previous lemma implies the following result.

\begin{corollary}\label{cor:trees_vs_networks}
    Let $\mathcal{N}_1$ and $\mathcal{N}_2$ be two semi-directed phylogenetic networks  on a set of $n$ taxa $\mathcal{X}$, with $n\geq 3$. Let $\mathcal{M}_1$ and $\mathcal{M}_2$ be the corresponding models under the JC model on the  parameter sets $\Theta_0(\mathcal{N}_1)$ and $\Theta_0(\mathcal{N}_2)$ respectively. If $\mathcal{N}_1$ contains at least one non-trivial $m$-blob, $m\geq 3$, and $\mathcal{N}_2$ does not, then $\mathcal{M}_1\cap\mathcal{M}_2=\emptyset$. \xqed{\proofbox}
\end{corollary}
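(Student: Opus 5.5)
The plan is to reduce to three leaves with Lemma~\ref{lem:reticulate_trinet}, separate the two resulting trinets with Lemma~\ref{lem:trinetJC_levk}, and lift back to $n$ leaves with Theorem~\ref{thm:restriction_network}.

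\emph{Choosing the three leaves.} Since $\mathcal{N}_1$ contains a non-trivial $m$-blob with $m\geq 3$, Lemma~\ref{lem:reticulate_trinet} gives leaves $a,b,c\in\mathcal{X}$ such that $\mathcal{N}_1|_{\{a,b,c\}}$ contains a non-trivial $3$-blob. Set $\mathcal{Y}=\{a,b,c\}$.

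\emph{The second trinet.} We must show that $\mathcal{N}_2|_{\mathcal{Y}}$ has no non-trivial $3$-blob. Suppose it did. Its vertices and edges come from up-down paths in $\mathcal{N}_2$ between leaves of $\mathcal{Y}$. A $3$-blob of the trinet gives a cycle with disjoint paths to $a$, $b$, and $c$. Lifting these back to $\mathcal{N}_2$ yields a biconnected subgraph with three edge-disjoint exits leading to three distinct leaves. That subgraph lies in a single blob of $\mathcal{N}_2$, and since the exits reach three different leaves, the blob has at least three incident cut edges. This would make it a non-trivial $m$-blob of $\mathcal{N}_2$ with $m\ge 3$, contradicting the hypothesis. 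So $\mathcal{N}_2|_{\mathcal{Y}}$ has only trivial blobs and possibly $2$-blobs; by Corollary~\ref{cor:safely_suppres_2blobs} it is effectively a $3$-star tree.

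\emph{Separating the trinets.} Lemma~\ref{lem:trinetJC_levk} applies directly, with the trinet lacking a non-trivial $3$-blob in the role of its $\mathcal{N}_1$. The polynomial $Q$ is zero on the model of $\mathcal{N}_2|_{\mathcal{Y}}$ and strictly positive on the model of $\mathcal{N}_1|_{\mathcal{Y}}$. Hence the two restricted models are disjoint. Theorem~\ref{thm:restriction_network}, with $\mathcal{X}\setminus\mathcal{Y}$ marginalized out, then gives $\mathcal{M}_1\cap\mathcal{M}_2=\emptyset$.

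The main obstacle is the ``clearly'' step in the second paragraph: checking that restriction can never create a non-trivial $3$-blob. The risk is that suppressing degree-2 vertices and identifying parallel edges merges cycles that lived in separate $2$-blobs of $\mathcal{N}_2$. I would rule this out by noting that every cut edge of $\mathcal{N}_2$ lying on an up-down path between leaves of $\mathcal{Y}$ survives as a cut edge, or a suppressed chain of cut edges, in the restriction. Consequently each blob of the restriction is contained in the image of a single blob of $\mathcal{N}_2$, and its number of incident cut edges is at most that blob's number.
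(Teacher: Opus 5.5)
Your proposal is correct and follows the paper's own argument: Lemma~\ref{lem:reticulate_trinet} picks $\{a,b,c\}$, restriction cannot create a non-trivial $3$-blob (the paper simply calls this clear), and then Lemma~\ref{lem:trinetJC_levk} and Theorem~\ref{thm:restriction_network} finish the proof. Your lifting argument for the middle step is sound, provided the count of incident cut edges comes from the disjoint exit paths and not from the cut-edge-survival remark alone. The detour through Corollary~\ref{cor:safely_suppres_2blobs} is unnecessary, since Lemma~\ref{lem:trinetJC_levk} already covers any trinet without a non-trivial $3$-blob.
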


Observe that the presence or absence of a non-trivial $m$-blob, $m\geq 3$, is preserved when passing from a rooted phylogenetic network to its associated semi-directed network. Hence Corollary~\ref{cor:trees_vs_networks} also holds for rooted phylogenetic networks (see Figure~\ref{fig:tree_vs_network}).

 %-------------------------------------------------------------------------------------------------
%
%                                    Discussion
%
%-------------------------------------------------------------------------------------------------

\section{Discussion}\label{sec:discussion}

We have shown that, under the JC, K2P, and K3P models, the semi-directed network parameter of level-1 phylogenetic networks is fully identifiable from the leaf-pattern distribution, provided all mixing parameters are non-trivial. A key structural insight underlying this result is that two level-1 network models overlap exactly in the union of models of their maximal shared displayed networks---the largest networks displayed by both. As an example, the intersection of two sunlet models is exactly the model of their common displayed tree. This proves that the method for identifying 4-leaf unrooted trees used in~\cite{martin2025algebraic} is statistically consistent. More generally, our result shows that phylogenetic network reconstruction methods based on leaf-pattern probabilities
should be statistically consistent on level-1 phylogenetic networks, 
up to the semi-directed phylogenetic network and placement of reticulations in triangles, under the JC, K2P, and K3P models. Furthermore, our result is a necessary condition for the statistical consistency of maximum likelihood as an estimator for a level-1 semi-directed phylogenetic network topology under those models. To the best of our knowledge, it is not known whether different phylogenetic networks have different leaf-pattern probability distributions for more popular but more complex models, such as the generalized time-reversible (GTR) model.

Our identifiability results are up to the placement of the reticulation vertex in any triangles in the network. This is because, for group-based models, the leaf-pattern distribution of 3-sunlets is the same regardless of the placement of the reticulation vertex \cite{gross2018distinguishing}. In recent work, however, semi-algebraic conditions were given under which the placement of the reticulation vertex in a triangle is identifiable from the leaf-pattern distribution for the JC model \cite{currie2026semialgebraic}. These conditions are valid on the same parameter set $\Theta_0$ that we consider here.

The second contribution of this paper is tree-network distinguishability under the JC model. %and K2P models. 
Lemma~\ref{lem:trinetJC_levk} %and~\ref{lem:trinetK2P_levk}
establishes that trinets of arbitrary level can be distinguished from trees under this model. The extension %to K2P and
to arbitrary level is new: \cite{englander2025identifiability} established the trinet inequality only for %the JC model and for
level-1 and level-2 networks. This shows that, modulo suppressing 2-blobs, any phylogenetic network can be distinguished from any phylogenetic tree---that is, no network and tree can produce the same leaf-pattern distributions when the mixing parameters of the network are non-trivial. Beyond substitution models, Lemma~\ref{lem:reticulate_trinet}, in combination with the proof of the \textup{`A-3blob'} assumption from \cite{allman2025beyond}, can be used to establish tree-network distinguishability under several coalescent-based models as well.

The trinet inequality has several further identifiability consequences. One concerns a class of galled tree-child networks satisfying topological restrictions on cycle sizes and hybrid attachment; see Remark~5 of~\cite{allman2025beyond} for precise conditions. Quartet distinguishability~\cite[Proposition~8]{englander2025identifiability} verifies assumptions \textup{`A-4circ'} and \textup{`A-ToB'} of~\cite{allman2025beyond}, and our trinet results verify \textup{`A-3blob'}, so that Lemma~4.2, Theorem~4.3, and Remark~5 of~\cite{allman2025beyond} yield identifiability for a class of networks of arbitrary level under JC. % and K2P. Our extension to K2P is also a natural first step toward lifting the level-2 identifiability results of~\cite{englander2025identifiability} from JC to K2P.
In a related direction, \cite{holtgrefe2025distinguishing} showed identifiability of outer-labelled planar level-2 networks from quartets, subject to cycle-size restrictions---notably excluding 4-cycles. Since our trinet results may provide a means to locate smaller cycles, it is conceivable that some of these restrictions can be relaxed.

Several questions remain open. It is natural to ask whether the high-level trinet inequality extends to the K2P and K3P models, whether quartet distinguishability extends to the K3P model, and whether the identifiability results for level-2 networks in~\cite{englander2025identifiability,holtgrefe2025distinguishing} can then be fully lifted to K2P and K3P. %This seems likely to be true given that we have
Even though we have already shown the level-1 trinet inequality for K2P and K3P, %However,
it is not %immediately
clear how to adapt the proof of Lemma \ref{lem:trinetJC_levk} %and \ref{lem:trinetK2P_levk}
to the cases of K2P and K3P,
since this proof relies on the single invariant found for the level-1 trinet inequality in that case, and that this invariant is itself invariant under permuting the leaves.
%while for K3P, the proof of the level-1 trinet inequality (Lemma \ref{lem:trinetk3p}) requires several invariants. 

More generally, we could ask whether our results hold for the more general class of equivariant substitution models. This class includes the group-based models, but also includes the strand-symmetric model and the general Markov model. These models maintain many of the nice properties of the models that we have used here (see \cite{sullivant2025phylogenetic}), so it seems plausible that our results could extend to these models.

Here we considered the parameter space $\Theta_0$, and extended this to $\Theta_0^+$, allowing mixing parameters to be either 0 or 1. This enabled a description of the intersection of two level-1 phylogenetic networks in terms of their displayed networks. In $\Theta_0$ the transition matrix parameters $a^e_x$ lie in the open interval $(0,1)$. We could also consider extending this to the interval $(0,1]$. Having $a^e_x = 1$ for all $x\in\{{\rm A,C,G,T}\}$ corresponds to a branch of length 0. In phylogenetic applications branches of length 0 can correspond to uncertainty in the order of branching events. We would then expect to find further intersections between phylogenetic network models, since, for example, the variety corresponding to a single triangle quarnet is contained in four $4$-sunlet varieties, and these containments occur when a branch in the $4$-sunlet is shrunk to have length 0 (see \cite[Example~4.10]{gross2018distinguishing}).

Along similar lines, another direction is the extension of our results to non-binary trees and networks. These are phylogenetic trees and networks in which the in-degree and out-degree of the internal vertices are not restricted (recall that in this work, the sum of in-degree and out-degree for each non-root internal vertex is three). Such vertices are often called multifurcations, and can represent uncertainty in the order of events represented by a phylogenetic tree or network. In early work for this project, we found that on the parameter space $\Theta_0$, we can distinguish between a 4-star tree and a 4-leaf unrooted binary tree (result not shown), suggesting that further full identifiability results may be possible in this direction.

\section*{Acknowledgments}
We thank Seth Sullivant, Vincent Moulton and Guuske Kouwenhoven for interesting discussions related to the questions discussed in this paper. We thank Alec Kriebel for pointing us at a mistake in an earlier version of this paper.
%We thank Seth Sullivant for discussions on suppressing 2-blobs, see also \cite{sullivant2025phylogenetic}, and Guuske Kouwenhoven for discussions on trinet inequalities.

\section*{Funding}
NH \& LvI were supported by grant OCENW.M.21.306 of the Dutch Research Council (NWO).
SM was supported by the European Molecular Biology Laboratory (EMBL).

\bibliographystyle{plain}
\bibliography{identifiability}
\newpage

\appendix
\section{Further Proofs}
In this appendix we give proofs left out of the main document, and complete proofs for some results in the main document that were left incomplete.
\renewcommand\thetheorem{3.3}
\begin{proposition}
	Let $\mathcal{T}$ be an unrooted binary phylogenetic tree on a set of taxa $\mathcal{X}$, and let $\mathcal{M}$ be the corresponding model for a multiplicatively closed substitution model. For a leaf $x\in\mathcal{X}$, let $\mathcal{T}' = \mathcal{T}|_{\mathcal{X}\setminus\{x\}}$ be the restricted tree obtained from $\mathcal{T}$ by pruning the leaf $x$ and suppressing the resulting degree two vertex, and let $\mathcal{M}'$ be the corresponding model for the same substitution model. If $\mathbf{p}\in\mathcal{M}$ then the point $\mathbf{p}'$ obtained from $\mathbf{p}$ by marginalizing over the leaf $x$ is in $\mathcal{M}'$. \qed%\todo{downgrade to lemma/proposition?}
\end{proposition}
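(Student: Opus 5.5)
We work directly with the Markov parameterization in probability coordinates and realize $\mathbf{p}'$ as a point of $\mathcal{M}'$ with explicitly constructed parameters. Fix parameters $\theta$ for $\mathcal{T}$ with $\varphi_{\mathcal{T}}(\theta)=\mathbf{p}$. Choose a rooting of $\mathcal{T}$ with root $\rho$; the leaf-pattern distribution does not depend on the rooting for the models considered. Let $v$ be the neighbour of $x$, with incoming edge $(u,v)$ (or $v=\rho$), pendant edge $(v,x)$, and the other outgoing edge $(v,w)$.

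\textbf{Step 1: sum out the leaf.} Marginalize $\mathbf{p}$ over $x$. In the sum
$$p_{x_1\ldots x_n}=\sum \pi_{x_\rho}\prod M^{(s,t)}_{x_s,x_t},$$
the state $x_x$ occurs only in the factor $M^{(v,x)}_{x_v,x_x}$. Summing over $x_x\in\Omega$ therefore gives $\sum_{x_x}M^{(v,x)}_{x_v,x_x}=1$, by stochasticity of the rows. The result is the same expression on the tree with the pendant edge removed, in which $v$ has degree two. This step is immediate.

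\textbf{Step 2: suppress the degree-two vertex.} If $v\neq\rho$, summing over the hidden state $x_v$ combines $M^{(u,v)}$ and $M^{(v,w)}$ into the product $M^{(u,v)}M^{(v,w)}$ on the new edge $(u,w)$. By multiplicative closure, this product lies in the model, and all other parameters are unchanged. This exhibits $\mathbf{p}'$ as $\varphi_{\mathcal{T}'}(\theta')$.

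If $v=\rho$, the root has only the child $w$ left. Summing over $x_\rho$ then gives a new root distribution $\pi M^{(\rho,w)}$ at $w$. For the JC, K2P, and K3P models the root distribution is uniform and the matrices are doubly stochastic, so $\pi M=\pi$ and the root distribution is still admissible. Alternatively, we can reroot beforehand so that $\rho$ is not adjacent to $x$, which is possible whenever $|\mathcal{X}|\geq 3$. The main subtlety is here, and I would handle it by rerooting.

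\textbf{Step 3: check the parameter space.} The new parameters must lie in the restricted set (for example, $\Theta_0$ when it is used). Under the Fourier transform, a product of matrices has eigenvalues $a^{e_1}_g a^{e_2}_g$. These remain in $(0,1)$ with $a_{\rm A}=1$, and the JC/K2P eigenvalue identifications are preserved. Positive definiteness and the non-identity condition therefore persist. This confirms $\mathbf{p}'\in\mathcal{M}'$.
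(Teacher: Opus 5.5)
Your proof is correct and follows the same route as the paper's. In both, the pendant leaf is summed out using row-stochasticity, and then the state at the degree-two vertex is summed over, with multiplicative closure replacing $M^{e_1}M^{e_2}$ by a single edge matrix. Your treatment of the case where the root is adjacent to $x$, and your check that the product stays in $\Theta_0$ (Fourier eigenvalues multiply and remain in $(0,1)$), are details the paper leaves implicit rather than a different argument.
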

\begin{proof}
We assume, without loss of generality, that $\mathcal{X} = \{1,\ldots, n\}$ and that we restrict to the set $\mathcal{X}' = \{1,\ldots, n-1\}$. Consider an element $\mathbf{p} \in\mathcal{M}$. We can write the coordinates of $\mathbf{p}$ as 
	\begin{align*}
		p_{x_1, \ldots, x_n} :&= P(X_1 = x_1, \ldots, X_n = x_n) \\
		&= \sum_{x \in X(x_1,\ldots, x_n)}\pi_{x_\rho}\prod_{(u,v) \in E(\mathcal{T})}M_{x_u, x_v}^{(u,v)}, 
	\end{align*}
	for some choice of matrices $M^{(u,v)}$ for each edge $(u,v) \in E(\mathcal{T})$. Marginalizing over states at the leaf $n$ we obtain 
	\begin{align*}
		\big(\mathrm{m}_n(\mathbf{p})\big)_{x_1, \ldots, x_{n-1}} :&=  \sum_{x_n\in\Omega} p_{x_1, x_2, \ldots, x_{n}} \\
		&= \sum_{x_n\in\Omega}\sum_{x \in X(x_1,\ldots, x_n)}\pi_{x_\rho}\Big(\prod_{(u,v) \in E(\mathcal{T})\setminus{\{e\}}}M_{x_u, x_v}^{(u,v)}\Big)M^e_{x_{p(n)}, x_n}\\
		&= \sum_{x \in X(x_1,\ldots, x_{n-1})}\pi_{x_\rho}\Big(\prod_{(u,v) \in E(\mathcal{T})\setminus{\{e\}}}M_{x_u, x_v}^{(u,v)}\Big)\sum_{x_n\in\Omega}M^e_{x_{p(n)}, x_n}
	\end{align*}
	where the edge $e$ is the edge leading to leaf $n$, and $x_{p(n)}$ is the state at the vertex $p(n)$, the parent of $n$, in the assignment $x$ (see Figure \ref{fig:T1}(A)). In the third line we sum over all assignments of states at the vertices of $\mathcal{T}$ for which leaves $1,\ldots,n-1$ are assigned states $x_1,\ldots, x_{n-1}$ respectively. Now since $M^e$ is a Markov matrix, we have $\sum_{x_b\in\Omega}M^e_{x_a, x_b} = 1$, so
	\begin{align*}
		\big(\mathrm{m}_n(\mathbf{p})\big)_{x_1, \ldots, x_{n-1}} &= \sum_{x \in X(x_1,\ldots, x_{n-1})}\pi_{x_\rho}\prod_{(u,v) \in E(\mathcal{T})\setminus{\{e\}}}M_{x_u, x_v}^{(u,v)}.
	\end{align*}
	Next, let $X'(x_1,\ldots, x_{n-1})$ be the set of all assignments of states at the vertices of $\mathcal{T}'$ for which leaves $1,\ldots,n-1$ are assigned states $x_1,\ldots, x_{n-1}$ respectively. Observe that the vertices of  $\mathcal{T}'$ are exactly those of $\mathcal{T}$, except for $n$ and $p(n)$ (see Figure \ref{fig:T1}), so we may write
	\begin{align*}
		\big(\mathrm{m}_n(\mathbf{p})\big)_{x_1, \ldots, x_{n-1}} &= \sum_{x \in X'(x_1,\ldots, x_{n-1})}\pi_{x_\rho}\Big(\prod_{(u,v) \in E(\mathcal{T})\setminus{\{e,e_1, e_2\}}}M_{x_u, x_v}^{(u,v)}\Big)\sum_{x_{a} \in \Omega}M_{x_{p(e_1)}, x_a}^{e_1}M_{x_a, x_{p(e_2)}}^{e_2}.
	\end{align*}
	
	Now, since the substitution model is multiplicatively closed, the product of transition matrices $M^{e_1}M^{e_2}$ is also a transition matrix in the model. Let $M^{e'}$ be this matrix. Then we have 
	\begin{align*}
		\big(\mathrm{m}_n(\mathbf{p})\big)_{x_1, \ldots, x_{n-1}} &= \sum_{x \in X'(x_1,\ldots, x_{n-1})}\pi_{x_\rho}\Big(\prod_{(u,v) \in E(\mathcal{T})\setminus{\{e,e_1, e_2\}}}M_{x_u, x_v}^{(u,v)}\Big)M^{e'}_{x_{p(e_1)},x_{p(e_2)}} \\
			&=  \sum_{x \in X'(x_1,\ldots, x_{n-1})}\pi_{x_\rho}\prod_{(u,v) \in E(\mathcal{T}')}M_{x_u, x_v}^{(u,v)} \\
			&= p_{x_1, x_2, \ldots, x_{n-1}},
	\end{align*}
	
	where we write $e'$ for the edge in $\mathcal{T}'$ got by suppressing the vertex $p(n)$ in $\mathcal{T}$. Thus we have shown that by marginalizing over leaf $n$, we obtain a point in the model of the restricted tree $\mathcal{T}'$.
\end{proof}

\begin{figure}[h!]
    \centering
    \begin{subfigure}[b]{0.45\linewidth}
        \centering
        \begin{tikzpicture}[scale = 1]
            \node[shape=circle,draw=black,scale=0.3] (a) at (0,0) {};
            \node[shape=circle,draw=black,scale=0.3] (b) at (-1,1) {};
            \node[shape=circle,draw=black,scale=0.3] (c) at (1,1) {};
            \node[shape=circle,draw=black,scale=0.3,fill=black] (d) at (0,-1) {};
            
            \draw[] (a)--(b) node[midway,left=0.05] {$e_{1}$};
            \draw[] (b)--(-2.2,1.5);
            \draw[] (-1,2.2)--(-2.2,1.5);
            \draw[] (b)--(-1,2.2);
            \draw[] (a)--(c)  node[midway,right=0.05] {$e_{2}$};
            \draw[] (c)--(2.2,1.5);
            \draw[] (1,2.2)--(2.2,1.5);
            \draw[] (c)--(1,2.2);
            \draw[] (a)--(d) node[midway, right=0.05] {$e$};
            
            \node (v1) at (0,.75) {$p(n)$};
            \node (v2) at (0,-1.5) {$n$};
            \node (v3) at (2.1,2.2) {$S_2$};
            \node (v4) at (-2.1,2.2) {$S_1$};
        
        \end{tikzpicture}
        \caption{}
    \end{subfigure}
    \begin{subfigure}[b]{0.45\linewidth}
        \centering
        \begin{tikzpicture}[scale = 1]
            \node[shape=circle,draw=black,scale=0.3] (b) at (-1,0) {};
            \node[shape=circle,draw=black,scale=0.3] (a) at (1,0) {};
            
            \draw[] (a)--(b) node[midway,above=0.05] {$e'$};
            \draw[] (b)--(-2,1);
            \draw[] (-2,1)--(-2,-1);
            \draw[] (b)--(-2,-1);
            \draw[] (a)--(2,1);
            \draw[] (2,1)--(2,-1);
            \draw[] (a)--(2,-1);
            
            \node (v1) at (2.4,0) {$S_2$};
            \node (v2) at (-2.4,0) {$S_1$};
            \node (v3) at (0,-1.8) {};
        
        \end{tikzpicture}
        \caption{}
    \end{subfigure}
    \caption{\textbf{(A)} A phylogenetic tree $\mathcal{T}$, focussed on the leaf $n$, its parent vertex $p(n)$, and the incident edges leading to the subtrees $S_1$ and $S_2$. \textbf{(B)} The corresponding restricted tree $\mathcal{T}' = \mathcal{T}|_{[n-1]}$. Subtrees $S_1$ and $S_2$ remain unchanged.}
    \label{fig:T1}
\end{figure}

\renewcommand\thetheorem{4.1}
\begin{lemma}
    Let $\mathcal{N}_1$ be the 3-star unrooted tree and $\mathcal{N}_2$ be a 3-sunlet semi-directed network, and let $\mathcal{M}_1$ and $\mathcal{M}_2$ be the associated models under the K2P substitution model with parameter values in $\Theta_0(\mathcal{N}_1)$ and $\Theta_0(\mathcal{N}_2)$ respectively. Then the polynomial
    $$ Q = q_{\rm AGG}q_{\rm GAG}q_{\rm CCA}^2 - q_{\rm AAA}q_{\rm GGA}q_{\rm TCG}^2$$
    is zero on $\mathcal{M}_1$ and strictly positive on $\mathcal{M}_2$. In particular, $\mathcal{M}_1\cap\mathcal{M}_2=\emptyset$.
\end{lemma}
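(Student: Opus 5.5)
The plan is a direct substitution argument in Fourier coordinates, as in the proof of \cite[Proposition~2.15]{englander2025identifiability}. Throughout, we use the K2P identifications $e_{\rm A}=1$ and $e_{\rm C}=e_{\rm T}$ for every edge $e$, with $e_{\rm C},e_{\rm G}\in(0,1)$.

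\emph{Vanishing on $\mathcal{M}_1$.} Let the 3-star tree have pendant edges $a,b,c$, so that $q_{x_1x_2x_3}=a_{x_1}b_{x_2}c_{x_3}$. Then
\[
q_{\rm AGG}=b_{\rm G}c_{\rm G},\quad q_{\rm GAG}=a_{\rm G}c_{\rm G},\quad q_{\rm CCA}=a_{\rm C}b_{\rm C},\quad q_{\rm GGA}=a_{\rm G}b_{\rm G},
\]
and $q_{\rm TCG}=a_{\rm T}b_{\rm C}c_{\rm G}=a_{\rm C}b_{\rm C}c_{\rm G}$. Both monomials of $Q$ therefore equal $a_{\rm G}b_{\rm G}c_{\rm G}^2a_{\rm C}^2b_{\rm C}^2$, so $Q=0$ on $\mathcal{M}_1$.

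\emph{Reducing the sunlet cases.} Since $Q$ is not invariant under all leaf permutations, we must treat each choice of the leaf below the reticulation separately. The placement of the reticulation within the triangle does not matter for group-based models, so only this choice of leaf matters. Swapping leaves $1$ and $2$ interchanges $q_{\rm AGG}$ and $q_{\rm GAG}$ and fixes $q_{\rm CCA}$ and $q_{\rm GGA}$. It sends $q_{\rm TCG}$ to $q_{\rm CTG}$, and this equals $q_{\rm TCG}$ on the model because ${\rm C}$ and ${\rm T}$ parameters are identified (the pattern ${\rm TCG}$ gives edge labels in the same ${\rm C}/{\rm T}$ classes as ${\rm CTG}$). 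Hence it suffices to treat two cases: the reticulation above leaf $3$, and the reticulation above leaf $1$.

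\emph{Reticulation above leaf $3$.} Use the parameterization of Example~\ref{ex:3sunlet} and write $d=d_{\rm G}$, $e=e_{\rm G}$, $f=f_{\rm G}$. We get
\[
q_{\rm AGG}=b_{\rm G}c_{\rm G}(\delta df+(1-\delta)e),\qquad q_{\rm GAG}=a_{\rm G}c_{\rm G}(\delta d+(1-\delta)ef),
\]
\[
q_{\rm CCA}=a_{\rm C}b_{\rm C}f_{\rm C},\qquad q_{\rm GGA}=a_{\rm G}b_{\rm G}f,
\]
and, using $f_{\rm C+G}=f_{\rm T}=f_{\rm C}$,
\[
q_{\rm TCG}=a_{\rm C}b_{\rm C}c_{\rm G}f_{\rm C}(\delta d+(1-\delta)e).
\]
Factoring out $a_{\rm G}b_{\rm G}c_{\rm G}^2a_{\rm C}^2b_{\rm C}^2f_{\rm C}^2>0$ leaves the bracket
\[
(\delta df+(1-\delta)e)(\delta d+(1-\delta)ef)-f(\delta d+(1-\delta)e)^2 .
\]
Expanding, the $\delta^2d^2f$ and $(1-\delta)^2e^2f$ terms cancel, and the bracket equals $\delta(1-\delta)de(1-f)^2$. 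This is strictly positive on $\Theta_0$.

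\emph{Reticulation above leaf $1$.} This is the step I expect to be the main obstacle, because the cancellation is not guaranteed in advance. Relabel the sunlet so that $d,e$ are the reticulation edges into the vertex above leaf $1$ and $f$ joins the parents of leaves $2$ and $3$. The coordinates $q_{\rm AGG}$, $q_{\rm AAA}$ (here $q_{\rm AAA}=1$) and the $x_1={\rm A}$ patterns become monomials by Lemma~\ref{lem:paramlemma}. The remaining ones, $q_{\rm GAG}$, $q_{\rm CCA}$, $q_{\rm GGA}$ and $q_{\rm TCG}$, are binomials in $\delta$. I would expand $Q$ as a quadratic in $\delta$, check that the $\delta^2$ and $(1-\delta)^2$ coefficients vanish (they are the tree invariant evaluated on the two displayed 3-star trees), and then show the $\delta(1-\delta)$ coefficient is a product of positive parameters times a square-type factor such as $(1-f_{\rm G})^2$ or $(1-f_{\rm C}^2)$. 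If instead a sign-indefinite cross term appears, I would fall back on an AM--GM argument like the Claim in the proof of Lemma~\ref{lem:trinetJC_levk}. Positivity in both cases gives $\mathcal{M}_1\cap\mathcal{M}_2=\emptyset$.
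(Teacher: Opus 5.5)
Your vanishing computation on $\mathcal{M}_1$, your reduction via the transposition $(12)$, and your computation for the sunlet with the reticulation above leaf~$3$ are all correct. That last computation is exactly the paper's proof. The paper shows $Q\in I_{\mathcal{N}_1}$ by comparing multisets of labels, which is equivalent to your direct evaluation, and then works only with the labelling of Figure~\ref{fig:3sunlet}, where the reticulation is above leaf~$3$.

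The gap is the case with the reticulation above leaf~$1$, which you leave as a plan. That plan cannot be carried out, because $Q$ is not positive there. Let $d,e$ be the reticulation edges from the parents of leaves $2,3$. Then $q_{x_1x_2x_3}=a_{x_1}b_{x_2}c_{x_3}\bigl(\delta d_{x_1}f_{x_3}+(1-\delta)e_{x_1}f_{x_2}\bigr)$. The resulting expression is cubic in $\delta$, not quadratic. Its $\delta^3$ and $(1-\delta)^3$ coefficients do vanish, as you predict, but what remains is
\begin{align*}
Q&=a_{\rm G}b_{\rm G}c_{\rm G}^2a_{\rm C}^2b_{\rm C}^2\,\delta(1-\delta)(1-f_{\rm G})\,B,\quad\text{where}\\
B&=e_{\rm G}\bigl(\delta f_{\rm G}(1+f_{\rm G})d_{\rm C}^2+2(1-\delta)f_{\rm G}f_{\rm C}d_{\rm C}e_{\rm C}\bigr)\\
&\qquad-d_{\rm G}\bigl(2\delta f_{\rm G}f_{\rm C}d_{\rm C}e_{\rm C}+(1-\delta)(1+f_{\rm G})f_{\rm C}^2e_{\rm C}^2\bigr).
\end{align*}
The factor $B$ has no fixed sign. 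As $d_{\rm C}\to 0$ it tends to $-(1-\delta)(1+f_{\rm G})d_{\rm G}f_{\rm C}^2e_{\rm C}^2<0$. Concretely, take $\delta=d_{\rm G}=e_{\rm C}=e_{\rm G}=f_{\rm C}=f_{\rm G}=1/2$. Then $B=3/128>0$ at $d_{\rm C}=1/2$, but $B=-69/3200<0$ at $d_{\rm C}=1/10$, and both are admissible K2P edges. Since $\Theta_0$ is convex, $Q$ also vanishes somewhere on this sunlet's model.

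By your $(12)$ symmetry the same happens with the reticulation above leaf~$2$. So strict positivity of this particular $Q$ holds on exactly one of the three 3-sunlet models, and no cancellation or AM--GM fallback can rescue the other two. This also shows that the three placements give different models on $\Theta_0$. ``Placement in the triangle does not matter'' holds for the varieties, not for the images of $\Theta_0$. Moreover, in a 3-sunlet the placement of the reticulation \emph{is} the choice of leaf below it, so your sentence invoking that fact contradicts your own correct remark that $Q$ is not invariant under all leaf permutations.

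The repair is to assert positivity only for the sunlet of Figure~\ref{fig:3sunlet}. For the other two placements, obtain $\mathcal{M}_1\cap\mathcal{M}_2=\emptyset$ by relabelling leaves. The 3-star tree model is invariant under leaf permutations, so for $\sigma=(13)$ the polynomial
\[
\sigma(Q)=q_{\rm GGA}q_{\rm GAG}q_{\rm ACC}^2-q_{\rm AAA}q_{\rm AGG}q_{\rm GCT}^2
\]
still vanishes on $\mathcal{M}_1$. On the sunlet with the reticulation above leaf~$1$ it equals $a_{\rm G}^2b_{\rm G}c_{\rm G}b_{\rm C}^2c_{\rm C}^2f_{\rm C}^2\,\delta(1-\delta)d_{\rm G}e_{\rm G}(1-f_{\rm G})^2>0$. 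Use $\sigma=(23)$ for leaf~$2$. Disjointness is all that later uses of this lemma need; for example, the 4-sunlet lemma applies it to a restricted trinet whose reticulation is above leaf~$1$.
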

\begin{proof}
    First, we have that $Q\in I_{\mathcal{N}_1}$. This is clear by looking at the multiset of labels for each leaf position under the K2P labelling (see e.g. \cite{sturmfels2005toric}). For example, in the first leaf position we have $\{{\rm A,G,C,C}\}$ for the monomial on the left and $\{{\rm A,G,T,T}\}$ for the monomial on the right. For a star tree, these multisets describe the parameters for a particular edge in the parameterization of each Fourier coordinate. Since for K2P we identify the Fourier parameters corresponding to ${\rm C}$ and ${\rm T}$ for each edge, the parameters corresponding to these multisets are the same.

    To show that $Q$ is strictly positive on $\mathcal{N}_2$, we label the edges of the 3-sunlet as in Figure~\ref{fig:3sunlet} of the main document. For each edge $a$, we have the associated parameters $a_{\rm A}, a_{\rm C}, a_{\rm G}, a_{\rm T}$ which are subject to the constraints $a_{\rm A} = 1$ and $a_{\rm C}=a_{\rm T}$. We substitute a generic point into $Q$ using the parameterization of the model, which is given by (see Example \ref{ex:3sunlet})
    $$q_{x_1x_2x_3}=a_{x_1}b_{x_2}c_{x_3}(\delta d_{x_3}f_{x_2} + (1-\delta)e_{x_3}f_{x_2+x_3}),$$
    for $x_1, x_2, x_3 \in G$ with $x_1+x_2+x_3 = 0$. We have
 \begin{align*}
    q_{\rm AGG} &= b_{\rm G}c_{\rm G}(\delta d_{\rm G} f_{\rm G} + (1-\delta)e_{\rm G}), \\
    q_{\rm CCA} &= a_{\rm C}b_{\rm C}f_{\rm C}, \\
    q_{\rm GAG} &= a_{\rm G}c_{\rm G}(\delta d_{\rm G} + (1-\delta)e_{\rm G}f_{\rm G}), \\
    q_{\rm GGA} &= a_{\rm G}b_{\rm G}f_{\rm G}, \\
    q_{\rm TCG} &= a_{\rm C}b_{\rm C}c_{\rm G}f_{\rm C}(\delta d_{\rm G} + (1-\delta)e_{\rm G}).
\end{align*}    
Expanding monomials in $Q$ we have
$$q_{\rm AGG}q_{\rm GAG}q_{\rm CCA}^2 = a_{\rm C}^2a_{\rm G}b_{\rm C}^2b_{\rm G}c_{\rm G}^2f_{\rm C}^2(\delta^2d_{\rm G}^2f_{\rm G} + \delta(1-\delta)(d_{\rm G}e_{\rm G} + d_{\rm G}e_{\rm G}f_{\rm G}^2) + (1-\delta)^2e_{\rm G}^2f_{\rm G})$$
and
$$q_{\rm GGA}q_{\rm TCG}^2 = a_{\rm C}^2a_{\rm G}b_{\rm C}^2b_{\rm G}c_{\rm G}^2f_{\rm C}^2f_{\rm G}(\delta^2 d_{\rm G}^2 + 2\delta(1-\delta)d_{\rm G}e_{\rm G} + (1-\delta)^2e_{\rm G}^2).$$
Taking the difference we obtain
\begin{align*}
    q_{\rm AGG}q_{\rm GAG}q_{\rm CCA}^2 - &q_{\rm GGA}q_{\rm TCG}^2\\
     &= a_{\rm C}^2a_{\rm G}b_{\rm C}^2b_{\rm G}c_{\rm G}^2f_{\rm C}^2\delta(1-\delta)(d_{\rm G}e_{\rm G} + d_{\rm G}e_{\rm G}f_{\rm G}^2 - 2d_{\rm G}e_{\rm G}f_{\rm G}) \\
    &= a_{\rm C}^2a_{\rm G}b_{\rm C}^2b_{\rm G}c_{\rm G}^2f_{\rm C}^2\delta(1-\delta)d_{\rm G}e_{\rm G}(1+f_{\rm G}^2 - 2f_{\rm G}) \\
    &= a_{\rm C}^2a_{\rm G}b_{\rm C}^2b_{\rm G}c_{\rm G}^2d_{\rm G}e_{\rm G}f_{\rm C}^2\delta(1-\delta)(1-f_{\rm G})^2.
\end{align*}
Since all parameters lie in $(0,1)$, this is strictly positive.
\end{proof}

\renewcommand\thetheorem{4.6}
\begin{lemma}
    Let $\mathcal{N}_1$ be a level-1 quarnet with split $12|34$, and let $\mathcal{N}_2$ be a level-1 quarnet without the split $12|34$. Let $\mathcal{M}_1$ and $\mathcal{M}_2$ be the corresponding models for either the JC, K2P, or K3P models on the restricted parameter set. Then $\mathcal{M}_1\cap\mathcal{M}_2=\emptyset$.
\end{lemma}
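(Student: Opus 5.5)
This is the appendix restatement of Lemma~\ref{lem:k3psplit}. The main text already handles several pieces. JC and K2P follow from \cite[Theorem~2.11]{englander2025identifiability}. Differing topologies are handled by Lemma~\ref{lem:quarnettopologies}. The fact that $Q_{12|34}=q_{\rm AAAA}q_{\rm TTTT}-q_{\rm AATT}q_{\rm TTAA}$ vanishes on $\mathcal{M}_1$ follows from the CFN flattening argument. The quartet-tree cases are covered by Lemma~\ref{lem:k3pquartets}, and one single-triangle quarnet with split $13|24$ was computed explicitly. What remains, for K3P, is to show $Q_{12|34}(p)>0$ for $p$ in the model of each remaining quarnet lacking the split $12|34$. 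These are three further single-triangle quarnets and the two double-triangle quarnets, with splits $13|24$ and $14|23$.

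\textbf{Reducing the cases by symmetry.} The leaf permutation $(12)(34)$, and likewise $(34)$, fixes the split $12|34$. It maps $Q_{12|34}$ to itself up to the symmetry $q_{\rm AATT}\leftrightarrow q_{\rm TTAA}$ of its second term. So $Q_{12|34}$ is invariant, and these permutations can be used to move the triangle between the cherries of a single-triangle quarnet.

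For split $13|24$, the triangle sits on the cherry $\{1,3\}$ or on $\{2,4\}$. The permutation $(12)(34)$ maps a triangle on $\{1,3\}$ to one on $\{2,4\}$ while fixing the split $13|24$. So the second $13|24$ case reduces to the one already done.

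The permutation $(34)$ maps split $13|24$ to $14|23$ and fixes $Q_{12|34}$. So the two $14|23$ single-triangle cases also reduce to the computed one. By the same argument the double-triangle case $14|23$ reduces to the double-triangle case $13|24$.

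Hence only one genuinely new computation remains: the double-triangle quarnet with split $13|24$.

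\textbf{The double-triangle computation.} Label the second triangle, on the cherry $\{2,4\}$, analogously to Figure~\ref{fig:singletri}. Give it edges $f',h',k'$, mixing parameter $\delta'$, and its reticulation adjacent to leaf $2$. Since placement in triangles does not affect the model, this choice is without loss of generality. By the same product structure as in the single-triangle computation, the parameterization factorizes as
\[
q_{x_1x_2x_3x_4}=a_{x_1}b_{x_2}c_{x_3}d_{x_4}\,e_{x_1+x_3}\,F(x_1,x_3)\,F'(x_2,x_4).
\]
Here $F(x_1,x_3)=\delta f_{x_1}h_{x_1+x_3}+(1-\delta)k_{x_1}h_{x_3}$, and $F'$ is the analogous factor for the second triangle. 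I will evaluate the four coordinates appearing in $Q_{12|34}$.

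In $q_{\rm AAAA}q_{\rm TTTT}$, the central edge contributes nothing, since $x_1+x_3={\rm A}$. The product is then $a_{\rm T}b_{\rm T}c_{\rm T}d_{\rm T}\,F({\rm T},{\rm T})F'({\rm T},{\rm T})$, where $F({\rm T},{\rm T})=\delta f_{\rm T}+(1-\delta)k_{\rm T}h_{\rm T}$.

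In $q_{\rm AATT}q_{\rm TTAA}$, the central edge contributes $e_{\rm T}^2$. The triangle factors give $F({\rm A},{\rm T})F({\rm T},{\rm A})\,F'({\rm A},{\rm T})F'({\rm T},{\rm A})$. From the single-triangle computation, $F({\rm A},{\rm T})F({\rm T},{\rm A})=h_{\rm T}\bigl(\delta f_{\rm T}h_{\rm T}+(1-\delta)k_{\rm T}\bigr)$.

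\textbf{Positivity.} After factoring out $a_{\rm T}b_{\rm T}c_{\rm T}d_{\rm T}$, it suffices to show
\[
F({\rm T},{\rm T})F'({\rm T},{\rm T}) > e_{\rm T}^2\,\bigl[F({\rm A},{\rm T})F({\rm T},{\rm A})\bigr]\bigl[F'({\rm A},{\rm T})F'({\rm T},{\rm A})\bigr].
\]
The single-triangle calculation already shows $h_{\rm T}(\delta f_{\rm T}h_{\rm T}+(1-\delta)k_{\rm T})<\delta f_{\rm T}+(1-\delta)k_{\rm T}h_{\rm T}$. The difference is $\delta f_{\rm T}(1-h_{\rm T}^2)>0$, since the term $(1-\delta)k_{\rm T}h_{\rm T}$ cancels. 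So each bracket is strictly smaller than the corresponding $F({\rm T},{\rm T})$ or $F'({\rm T},{\rm T})$. All quantities are positive and $e_{\rm T}^2<1$, so multiplying the two inequalities gives strict positivity of $Q_{12|34}$.

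The step I expect to be the main obstacle is bookkeeping rather than conceptual. I must verify that $Q_{12|34}$ is genuinely invariant under the permutations used. I must also verify that the factorized parameterization of the double triangle, with correct group labels on the triangle edges, matches the single-triangle formula in each triangle. Some triangle edges may carry $x_2+x_4$ instead of $x_1+x_3$, but these agree because the consistent labelling sums to ${\rm A}$.
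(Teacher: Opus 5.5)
Your proposal is correct and follows the paper's own route. You use the same reductions via the leaf permutations $(34)$ and $(12)(34)$; both of these fix $q_{\rm AATT}$ and $q_{\rm TTAA}$ individually, so $Q_{12|34}$ is invariant with no swap needed. This leaves only the explicit double-triangle computation for split $13|24$, where your coordinates agree with the paper's. Your closing step shows $F({\rm A},{\rm T})F({\rm T},{\rm A})<F({\rm T},{\rm T})$ for each triangle and then uses $e_{\rm T}^2<1$, and this is the sound way to finish. The paper's displayed product $\bigl(\delta_1 f_{\rm T}(1-e_{\rm T}h_{\rm T}^2)+\cdots\bigr)\bigl(\cdots\bigr)$ is only a strict lower bound for $Q_{12|34}$ rather than an identity, though positivity follows either way.
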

\begin{proof}
    In the main document we reduced the proof to showing that the polynomial $Q_{12|34}=q_{\rm AAAA}q_{\rm TTTT} - q_{\rm AATT}q_{\rm TTAA}$ is strictly positive for the single-triangle quarnets and double-triangle quarnets without the split $12|34$. We showed this for the single-triangle quarnet with split $13|24$ in Figure \ref{fig:singletris} (a). It remains to show that $Q_{12|34}$ is strictly positive for the single-triangle quarnets in Figure \ref{fig:singletris} (b), (c), and (d), and the double-triangle quarnets in Figure \ref{fig:doubletris} (a) and (b). Note that the placement of the reticulation vertex in the triangle does not affect the underlying model \cite{gross2018distinguishing}, so these are all remaining quarnets without the split $12|34$, for which the model is distinct.

    \begin{figure}[h!]
    \centering
        \begin{subfigure}{0.24\textwidth}
            \centering
            \begin{tikzpicture}[scale = .35]             
                \node[shape=circle,draw=black,scale=0.3] (B) at (0,2) {};
                \node[shape=circle,draw=black,scale=0.3] (C) at (1.8,5) {};
                \node[shape=circle,draw=black,scale=0.3] (D) at (-1.8,5) {};
                \node[shape=circle,draw=black,scale=0.3] (E) at (0,0) {};
                \node[shape=circle,draw=black,scale=0.3,fill=black] (3) at (3,6.5) {};
                \node[shape=circle,draw=black,scale=0.3,fill=black] (1) at (-3,6.5) {}; 
                \node[shape=circle,draw=black,scale=0.3,fill=black] (4) at (3,-2) {};
                \node[shape=circle,draw=black,scale=0.3,fill=black] (2) at (-3,-2) {};
                \node at (3.2,7.4) {$3$};
                \node at (-3.2,7.4) {$1$};
                \node at (3.2,-2.8) {$4$};
                \node at (-3.2,-2.8) {$2$};                
                \draw [->](B)--(C) node[midway,right=0.1] {};
                \draw [dashed,->](B)--(D) node[midway,left=0.1] {};
                \draw [dashed,->](C)--(D) node[midway,above=0.05] {};
                \draw [->](C)--(3) node[midway,below right] {};
                \draw [->](D)--(1) node[midway,below left] {};
                \draw [->](B)--(E) node[midway,left=0.05] {};
                \draw [->](E)--(2) node[midway,above left] {};
                \draw [->](E)--(4) node[midway,above right] {};          
            \end{tikzpicture}
        \caption{}
        \end{subfigure}
        \hfill
        \begin{subfigure}{0.24\textwidth}
            \centering
            \begin{tikzpicture}[scale = .35]           
                \node[shape=circle,draw=black,scale=0.3] (B) at (0,2) {};
                \node[shape=circle,draw=black,scale=0.3] (C) at (1.8,5) {};
                \node[shape=circle,draw=black,scale=0.3] (D) at (-1.8,5) {};
                \node[shape=circle,draw=black,scale=0.3] (E) at (0,0) {};
                \node[shape=circle,draw=black,scale=0.3,fill=black] (3) at (3,6.5) {};
                \node[shape=circle,draw=black,scale=0.3,fill=black] (1) at (-3,6.5) {}; 
                \node[shape=circle,draw=black,scale=0.3,fill=black] (4) at (3,-2) {};
                \node[shape=circle,draw=black,scale=0.3,fill=black] (2) at (-3,-2) {};
                \node at (3.2,7.4) {$4$};
                \node at (-3.2,7.4) {$1$};
                \node at (3.2,-2.8) {$3$};
                \node at (-3.2,-2.8) {$2$};                
                \draw [->](B)--(C) node[midway,right=0.1] {};
                \draw [dashed,->](B)--(D) node[midway,left=0.1] {};
                \draw [dashed,->](C)--(D) node[midway,above=0.05] {};
                \draw [->](C)--(3) node[midway,below right] {};
                \draw [->](D)--(1) node[midway,below left] {};
                \draw [->](B)--(E) node[midway,left=0.05] {};
                \draw [->](E)--(2) node[midway,above left] {};
                \draw [->](E)--(4) node[midway,above right] {};           
            \end{tikzpicture}
        \caption{}
        \end{subfigure}
        \hfill
        \begin{subfigure}{0.24\textwidth}
            \centering
            \begin{tikzpicture}[scale = .35]      
                \node[shape=circle,draw=black,scale=0.3] (B) at (0,2) {};
                \node[shape=circle,draw=black,scale=0.3] (C) at (1.8,5) {};
                \node[shape=circle,draw=black,scale=0.3] (D) at (-1.8,5) {};
                \node[shape=circle,draw=black,scale=0.3] (E) at (0,0) {};
                \node[shape=circle,draw=black,scale=0.3,fill=black] (3) at (3,6.5) {};
                \node[shape=circle,draw=black,scale=0.3,fill=black] (1) at (-3,6.5) {}; 
                \node[shape=circle,draw=black,scale=0.3,fill=black] (4) at (3,-2) {};
                \node[shape=circle,draw=black,scale=0.3,fill=black] (2) at (-3,-2) {};
                \node at (3.2,7.4) {$4$};
                \node at (-3.2,7.4) {$2$};
                \node at (3.2,-2.8) {$3$};
                \node at (-3.2,-2.8) {$1$};                
                \draw [->](B)--(C) node[midway,right=0.1] {};
                \draw [dashed,->](B)--(D) node[midway,left=0.1] {};
                \draw [dashed,->](C)--(D) node[midway,above=0.05] {};
                \draw [->](C)--(3) node[midway,below right] {};
                \draw [->](D)--(1) node[midway,below left] {};
                \draw [->](B)--(E) node[midway,left=0.05] {};
                \draw [->](E)--(2) node[midway,above left] {};
                \draw [->](E)--(4) node[midway,above right] {};         
            \end{tikzpicture}
        \caption{}
        \end{subfigure}
        \hfill
        \begin{subfigure}{0.24\textwidth}
            \centering
            \begin{tikzpicture}[scale = .35]              
                \node[shape=circle,draw=black,scale=0.3] (B) at (0,2) {};
                \node[shape=circle,draw=black,scale=0.3] (C) at (1.8,5) {};
                \node[shape=circle,draw=black,scale=0.3] (D) at (-1.8,5) {};
                \node[shape=circle,draw=black,scale=0.3] (E) at (0,0) {};
                \node[shape=circle,draw=black,scale=0.3,fill=black] (3) at (3,6.5) {};
                \node[shape=circle,draw=black,scale=0.3,fill=black] (1) at (-3,6.5) {}; 
                \node[shape=circle,draw=black,scale=0.3,fill=black] (4) at (3,-2) {};
                \node[shape=circle,draw=black,scale=0.3,fill=black] (2) at (-3,-2) {};
                \node at (3.2,7.4) {$3$};
                \node at (-3.2,7.4) {$2$};
                \node at (3.2,-2.8) {$4$};
                \node at (-3.2,-2.8) {$1$};                
                \draw [->](B)--(C) node[midway,right=0.1] {};
                \draw [dashed,->](B)--(D) node[midway,left=0.1] {};
                \draw [dashed,->](C)--(D) node[midway,above=0.05] {};
                \draw [->](C)--(3) node[midway,below right] {};
                \draw [->](D)--(1) node[midway,below left] {};
                \draw [->](B)--(E) node[midway,left=0.05] {};
                \draw [->](E)--(2) node[midway,above left] {};
                \draw [->](E)--(4) node[midway,above right] {};      
            \end{tikzpicture}
        \caption{}
        \end{subfigure}
    \caption{All single-triangle quarnets without the split $12|34$.}
    \label{fig:singletris}
\end{figure}

First we show that $Q_{12|34}$ is strictly positive on the quarnets in Figure~\ref{fig:singletris} (b), (c), and (d). We begin with (b). Observe that (b) can be obtained from (a) via the leaf-permutation $(34)$ (written in cycle notation). Now $Q_{12|34}$ is invariant under $(34)$ (where for a permutation $\theta\in S_4$ we define $\theta(q_{g_1g_2g_3g_4}) = q_{g_{\theta(1)}g_{\theta(2)}g_{\theta(3)}g_{\theta(4)}}$ and extend as a ring homomorphism), so it follows that $Q_{12|34}$ is strictly positive on (b). For the quarnets (c) and (d) we observe that (a) can be obtained from (c) via the leaf permutation $(12)(34)$ and (b) can be obtained from (d) by the same permutation. Since $Q_{12|34}$ is invariant under this permutation too, it follows that $Q_{12|34}$ is strictly positive on (c) and (d).

Next we consider the double-triangle quarnets shown in Figure \ref{fig:doubletris}.

    \begin{figure}[h!]
    \centering
        \begin{subfigure}{0.44\textwidth}
            \centering
            \begin{tikzpicture}[scale = .4]             
                \node[shape=circle,draw=black,scale=0.3] (B) at (0,2) {};
                \node[shape=circle,draw=black,scale=0.3] (C) at (1.8,5) {};
                \node[shape=circle,draw=black,scale=0.3] (D) at (-1.8,5) {};
                \node[shape=circle,draw=black,scale=0.3] (E) at (0,0) {};
                \node[shape=circle,draw=black,scale=0.3] (F) at (1.8,-3) {};
                \node[shape=circle,draw=black,scale=0.3] (G) at (-1.8,-3) {};
                \node[shape=circle,draw=black,scale=0.3,fill=black] (3) at (3,6.5) {};
                \node[shape=circle,draw=black,scale=0.3,fill=black] (1) at (-3,6.5) {}; 
                \node[shape=circle,draw=black,scale=0.3,fill=black] (4) at (3,-4.5) {};
                \node[shape=circle,draw=black,scale=0.3,fill=black] (2) at (-3,-4.5) {};
                \node at (3.2,7.2) {$3$};
                \node at (-3.2,7.2) {$1$};
                \node at (3.2,-5.2) {$4$};
                \node at (-3.2,-5.2) {$2$};                
                \draw [->](B)--(C) node[midway,right=0.1] {$h$};
                \draw [dashed,->](B)--(D) node[midway,left=0.1] {$k$};
                \draw [dashed,->](C)--(D) node[midway,above=0.05] {$f$};
                \draw [->](C)--(3) node[midway,below right] {$c$};
                \draw [->](D)--(1) node[midway,below left] {$a$};
                \draw [->](B)--(E) node[midway,left=0.05] {$e$};
                \draw [->](E)--(F) node[midway,right=0.05] {$\ell$};
                \draw [dashed,->](E)--(G) node[midway,left=0.1] {$m$};
                \draw [dashed,->](F)--(G) node[midway,below=0.05] {$n$};               
                \draw [->](G)--(2) node[midway,above left] {$b$};
                \draw [->](F)--(4) node[midway,above right] {$d$};          
            \end{tikzpicture}
        \caption{}
        \end{subfigure}
        \begin{subfigure}{0.44\textwidth}
            \centering
            \begin{tikzpicture}[scale = .4]           
                \node[shape=circle,draw=black,scale=0.3] (B) at (0,2) {};
                \node[shape=circle,draw=black,scale=0.3] (C) at (1.8,5) {};
                \node[shape=circle,draw=black,scale=0.3] (D) at (-1.8,5) {};
                \node[shape=circle,draw=black,scale=0.3] (E) at (0,0) {};
                \node[shape=circle,draw=black,scale=0.3] (F) at (1.8,-3) {};
                \node[shape=circle,draw=black,scale=0.3] (G) at (-1.8,-3) {};
                \node[shape=circle,draw=black,scale=0.3,fill=black] (3) at (3,6.5) {};
                \node[shape=circle,draw=black,scale=0.3,fill=black] (1) at (-3,6.5) {}; 
                \node[shape=circle,draw=black,scale=0.3,fill=black] (4) at (3,-4.5) {};
                \node[shape=circle,draw=black,scale=0.3,fill=black] (2) at (-3,-4.5) {};
                \node at (3.2,7.2) {$4$};
                \node at (-3.2,7.2) {$1$};
                \node at (3.2,-5.2) {$3$};
                \node at (-3.2,-5.2) {$2$};                
                \draw [->](B)--(C) node[midway,right=0.1] {$h$};
                \draw [dashed,->](B)--(D) node[midway,left=0.1] {$k$};
                \draw [dashed,->](C)--(D) node[midway,above=0.05] {$f$};
                \draw [->](C)--(3) node[midway,below right] {$c$};
                \draw [->](D)--(1) node[midway,below left] {$a$};
                \draw [->](B)--(E) node[midway,left=0.05] {$e$};
                \draw [->](E)--(F) node[midway,right=0.05] {$\ell$};
                \draw [dashed,->](E)--(G) node[midway,left=0.1] {$m$};
                \draw [dashed,->](F)--(G) node[midway,below=0.05] {$n$};               
                \draw [->](G)--(2) node[midway,above left] {$b$};
                \draw [->](F)--(4) node[midway,above right] {$d$};      
            \end{tikzpicture}
        \caption{}
        \end{subfigure}
    \caption{All double-triangle quarnets without the split $12|34$.}
    \label{fig:doubletris}
\end{figure}
As in the single-triangle case, we have that (a) can be obtained from (b) with the leaf-permutation $(34)$, so it is sufficient to prove that $Q_{12|34}$ is strictly positive on the quarnet in (a). The parameterization is given by
$$q_{g_1g_2g_3g_4} = a_{g_1}b_{g_2}c_{g_3}d_{g_4}e_{g_2+g_4}(\delta_1 f_{g_1}h_{g_1+g_3} + (1-\delta_1)k_{g_1}h_{g_3})(\delta_2 n_{g_2}\ell_{g_2+g_4} + (1-\delta_2)m_{g_2}\ell_{g_4}).$$
We have
\begin{align*}
    q_{\rm AAAA} &= 1, \\
    q_{\rm TTTT} &= a_{\rm T}b_{\rm T}c_{\rm T}d_{\rm T}(\delta_1 f_{\rm T} + (1-\delta_1)k_{\rm T}h_{\rm T})(\delta_2 n_{\rm T} + (1-\delta_2)m_{\rm T}\ell_{\rm T}), \\
    q_{\rm AATT} &= c_{\rm T}d_{\rm T}e_{\rm T}h_{\rm T}\ell_{\rm T}, \\
    q_{\rm TTAA} &= a_{\rm T}b_{\rm T}e_{\rm T}(\delta_1 f_{\rm T}h_{\rm T} + (1-\delta_1)k_{\rm T})(\delta_2 n_{\rm T}\ell_{\rm T} + (1-\delta_2)m_{\rm T}).
\end{align*}
This gives
\begin{align*}
    Q_{12|34} &= a_{\rm T}b_{\rm T}c_{\rm T}d_{\rm T}(\delta_1 f_{\rm T} + (1-\delta_1)k_{\rm T}h_{\rm T})(\delta_2 n_{\rm T} + (1-\delta_2)m_{\rm T}\ell_{\rm T}) \\
                &\qquad - a_{\rm T}b_{\rm T}c_{\rm T}d_{\rm T}e_{\rm T}^2h_{\rm T}\ell_{\rm T}(\delta_1 f_{\rm T}h_{\rm T} + (1-\delta_1)k_{\rm T})(\delta_2 n_{\rm T}\ell_{\rm T} + (1-\delta_2)m_{\rm T}) \\
              &= a_{\rm T}b_{\rm T}c_{\rm T}d_{\rm T}(\delta_1 f_{\rm T}(1-e_{\rm T}h_{\rm T}^2) + (1-\delta_1)k_{\rm T}h_{\rm T}(1-e_{\rm T})) \\
              &\qquad\times (\delta_2 n_{\rm T}(1-e_{\rm T}\ell_{\rm T}^2) + (1-\delta_2)m_{\rm T}\ell_{\rm T}(1-e_{\rm T})) > 0.
\end{align*}

\end{proof}

\end{document}